\documentclass[11pt,a4paper]{article}

\usepackage[a4paper,margin=2.5cm]{geometry}

\usepackage{amsmath,amssymb,amsfonts,amsthm,mathtools}

\usepackage[T1]{fontenc}
\usepackage{lmodern}                % clean Computer‑Modern in T1
\usepackage{microtype}              % better typesetting
\usepackage[colorlinks,citecolor=blue,linkcolor=blue,urlcolor=blue]{hyperref}
\usepackage[nameinlink]{cleveref}   % \cref, \Cref auto-formats
\usepackage{lipsum}

\usepackage{booktabs}
\usepackage{array}
\usepackage{float}
\usepackage{graphicx}
\usepackage{subcaption}

\usepackage{algorithm}
\usepackage{algorithmicx}
\usepackage{algpseudocode}

\usepackage{enumitem}

\usepackage{natbib}

\newtheorem{theorem}{Theorem}[section]
\newtheorem{proposition}[theorem]{Proposition}
\newtheorem{lemma}[theorem]{Lemma}
\newtheorem{corollary}[theorem]{Corollary}
\theoremstyle{definition}
\newtheorem{definition}[theorem]{Definition}

\theoremstyle{remark}
\newtheorem{remark}[theorem]{Remark}

\newcommand{\MF}{\mathfrak{M}_F}
\newcommand{\MFcross}{\mathfrak{M}_F^{(\mathrm{cross})}}
\newcommand{\R}{\mathbb{R}}

\newcommand{\op}{\mathrm{op}}

\newcommand{\bX}{\mathbf{X}}
\newcommand{\bZ}{\mathbf{Z}}

\newcommand{\dmin}{d_{\min}}
\DeclareMathOperator{\vecop}{vec}

\title{Information-Theoretic Bounds for Sparse Covariance Estimation\\
       in the Vertical-Split Distributed Model}

\author{%
  Jing Yee Tan\thanks{Department of Mathematics, University of Hong Kong.
  Email: \texttt{jyt88@connect.hku.hk}.}
  \and
  Guangyue Han\thanks{Department of Mathematics, University of Hong Kong. Email: \texttt{ghan@hku.hk}.}
}

\date{\today}

\begin{document}
\maketitle

% =============================================================
%  ABSTRACT
% =============================================================
\begin{abstract}
We study the minimax estimation error for distributed covariance matrix estimation in the vertical-split (feature-split) setting, where two agents each observe different coordinates of~$m$ i.i.d.\ sub-Gaussian samples and communicate a limited number of bits to a central server. While \cite{rahmani2025fundamental} established nearly tight bounds for dense (unstructured) cross-covariance matrices, we investigate whether imposing elementwise $s$-sparsity on the cross-covariance $C_{21}$ can reduce the required communication and sample complexity. In contrast to the horizontal-split setting, where \cite{braverman2016communication} showed that sparsity does \emph{not} reduce communication cost for mean estimation, we prove that sparsity \emph{does} help for cross-covariance estimation in the vertical split.

Specifically, for sufficiently large $d_1d_2/s'$ and $0<\varepsilon<\sigma^2\sqrt{s'}/32$, any scheme achieving expected Frobenius distortion at most $\varepsilon$ must satisfy $B_k = \Omega(\sigma^4 d_k\, s' \log(d_1 d_2/s')/\varepsilon^2)$ and $m = \Omega(\sigma^4\, s' \log(d_1 d_2/s')/\varepsilon^2)$ for cross-covariance estimation, where $s' = s \wedge d_{\min}$. For the $1$-sparse case, our achievable scheme reduces the $d_1d_2$ factor in the dense communication rate to $\log(d_1d_2)$, up to polylogarithmic factors, for the cross-covariance communication component in the matching regime. Our lower bounds are established via Fano's method with an explicit sparse packing using a Varshamov--Gilbert-type argument for signed partial permutation matrices combined with the Conditional Strong Data Processing Inequality of \cite{rahmani2025fundamental}. We show that the communication lower bound is tight up to polylogarithmic factors under the conditions of Remark~\ref{rem:achievmatch}, using an achievable scheme based on covering-net quantization and entry-wise hard thresholding.
\end{abstract}

\section{Introduction}
Statistical estimation in distributed environments has become a fundamental problem in modern data science.  In a typical formulation, data is spread across $K$ agents who observe independent or correlated samples of an unknown parameter; each agent compresses its local data into a limited number of bits and sends the result to a central server, which must produce an accurate estimate.  A central goal of the theory is to characterize the minimax tradeoff between the communication budget, the number of samples, and the estimation error.  For dense (unstructured) parameters, a line of work beginning with \cite{zhang2014information} and \cite{garg2014communication} established that in the horizontal split, where each of $K$ machines holds a subset of $m$ i.i.d.\ samples and observes all $d$ coordinates, the communication cost must scale linearly in the ambient dimension $d$ to achieve the statistical minimax rate.  Since high-dimensional models often possess intrinsic low-dimensional structure, these results motivate a natural meta-question: \emph{How does structural knowledge of the parameter, such as sparsity, affect the communication cost of distributed estimation?}

For distributed mean estimation in the horizontal split, \cite{garg2014communication} showed that a simple thresholding-based protocol can exploit the $s$-sparsity of the mean vector $\theta\in\mathbb{R}^d$ to trade communication for estimation error, saving a factor of $d/s$ in one of the two quantities.  They conjectured that this tradeoff is essentially optimal.
\cite[Theorem~4.5]{braverman2016communication} confirmed this
conjectured communication--error tradeoff to be optimal up to
logarithmic factors for $d\ge2s$, even in the fully interactive
(multi-round) communication model.

% This result reveals a striking phenomenon: \emph{sparsity helps statistically but not in communication}. It is well known that in centralized estimation, the minimax rate for an $s$-sparse mean scales with the intrinsic dimension $s$, not with the ambient dimension $d$.  One might therefore expect that distributed protocols could similarly exploit sparsity to reduce communication.  \cite{braverman2016communication} showed this intuition is wrong. They showed that with $s$-sparsity assumptions, it is impossible to improve both the loss and communication so that they depend on the intrinsic dimension $s$ rather than the ambient dimension $d$. This negative result raises the question of whether such a phenomenon is universal: \emph{does sparsity always fail to reduce communication cost?}

This result reveals a striking phenomenon: \emph{sparsity can help statistically without a corresponding reduction in communication}. It is well known that in centralized Gaussian mean estimation, the minimax rate for an $s$-sparse mean scales with the intrinsic dimension $s$, apart from a logarithmic dependence on the ambient dimension $d$. One might therefore expect that distributed protocols could similarly exploit sparsity to reduce communication. However, \cite[Theorem~4.5]{braverman2016communication} showed that attaining this improved statistical rate can still require total communication with ambient-dimension dependence, up to logarithmic factors, even in the fully interactive model. This negative result raises the question of whether such a phenomenon persists in other distributed estimation problems: \emph{when can sparsity reduce the communication needed to attain the statistical minimax rate?}

The \emph{vertical} (feature) split provides a fundamentally different distributed model. Here, all $K$ agents observe the \emph{same} set of $m$ samples, but each agent sees only a subset of coordinates: Agent~$i$ observes the features $\{X_i^{(j)}\}_{j=1}^m$ corresponding to dimensions $d_i$, with $\sum_i d_i=d$. The natural estimation target is the \emph{cross-covariance matrix} $C_{21}=\mathbb{E}[(X_2-\mathbb{E}X_2) (X_1-\mathbb{E}X_1)^\top]\in\mathbb{R}^{d_2\times d_1}$, since the diagonal blocks $C_{kk}$ can be estimated locally without communication. \cite{rahmani2025fundamental} recently characterized the minimax distortion for dense (unstructured) covariance estimation in this setting with $K=2$ agents: For $0<\varepsilon<\sigma^2\sqrt{d_{\min}}/224$, under the Frobenius norm, the communication budget per agent must scale as $B_k=\Omega(\sigma^4 d_1 d_2d_k/\varepsilon^2)$, and they constructed an achievable scheme matching this bound up to polylogarithmic factors.  A key feature of the vertical split is that the agents' data is \emph{correlated within each sample}: $(X_1^{(t)}, X_2^{(t)})$ are jointly distributed for every $t$. Braverman's D-SDPI, which requires conditional independence across machines, therefore does not apply.  To handle this correlation, Rahmani et al.\ introduced the \emph{Conditional Strong Data Processing Inequality} (C-SDPI), which quantifies information contraction through a state-dependent channel.

The \emph{vertical} (feature) split is a natural model for settings in
which multiple parties hold different attributes of the same individuals. In healthcare, one institution may record genomic data while another stores clinical outcomes for the same patient cohort; in finance, a bank observes
transaction histories while a credit bureau holds repayment records for the same customers (\cite{rahmani2025fundamental}).  In such settings, the
quantity of primary interest is the \emph{cross-covariance} between the
feature sets held by different agents, for instance, the correlation
between genetic markers and treatment response, or between spending patterns and default risk, since each agent can estimate its own marginal statistics locally without communication.  The vertical split also arises in multi-sensor systems where different instruments measure different physical quantities of the same phenomenon.  Unlike the horizontal split, where the communication channel transmits information about independent random variables, the vertical-split channel is \emph{state-dependent}: the message sent by one agent is statistically dependent on the data held by the other, because both observe the same underlying samples. This correlation structure makes the vertical split both practically important and information-theoretically distinct from the well-studied horizontal setting.

A natural question left open by~\cite{rahmani2025fundamental} is whether \emph{sparsity} in the cross-covariance $C_{21}$ can reduce the communication cost in the vertical split.  If $C_{21}$ has at most $s$ non-zero entries (equivalently, at most $s$ out of $d_1 d_2$ cross-agent feature pairs are correlated), does the communication budget drop?

In this paper, we show that, \emph{in contrast to the horizontal split, sparsity does reduce the communication cost for covariance estimation in the vertical split}.  For $s$-sparse cross-covariance estimation with $s\le d_{\min}$, in the common validity regime of the dense lower bound and our matching remark, our scheme achieves a cross-covariance communication component 
% the communication budget per agent drops from $\Omega(\sigma^4 d_1 d_2 d_k/\varepsilon^2)$ in the dense case to 
$O(\beta\sigma^4d_ks\log(d_1d_2)/\varepsilon^2)$, compared with the dense communication lower bound $B_k=\Omega(\sigma^4d_kd_1d_2/\varepsilon^2)$, an improvement from $d_1 d_2$ to $s\log(d_1 d_2/s)$ up to logarithmic factors. The sparsity dependence in our lower bound comes from the sparse packing construction. Restricting to $s'$-sparse hypotheses changes the logarithmic packing-cardinality lower bound from order $d_1 d_2$ to order $s'\log(d_1 d_2/s')$, and this change propagates directly into the communication lower bound.

This result has practical implications for distributed systems in which cross-agent dependencies are inherently sparse. A similar sparsity arises in high-dimensional multi-omics studies, where one measures hundreds or thousands of features (e.g., metabolites or proteins) on a handful of samples. There, the correlation structure among the measured features is sparse and block-structured: only variables belonging to a common biological process are strongly associated, while the vast majority of pairs are effectively independent (\cite{perrot2019estimation}).

\subsection{Main Contributions}
We study the minimax estimation error for distributed covariance matrix estimation (DCME) in the vertical-split setting under elementwise sparsity constraints on the cross-covariance matrix $C_{21}$.  Let $s':= s\wedge(d_1\wedge d_2)$ and for large enough $d_1d_2/s'$, write $L=\Theta(\log(d_1 d_2/s'))$.  Our main contributions are:
\begin{enumerate}[label=(\roman*)]
\item \textbf{$s$-Sparse minimax lower bounds}
(Theorems~\ref{thm:ssparse-dccme} and~\ref{thm:ssparse-dcme}).
For cross-covariance estimation under the Frobenius norm,
\[
    \mathfrak{M}_F^{(\mathrm{cross})}
    =\Omega\!\left(\sigma^2\left[
    \sqrt{s'\log\!\Big(\frac{d_1 d_2}{s'}\Big)
    \Big(\frac{d_1}{B_1}\vee\frac{d_2}{B_2}\Big)}\land \sqrt{s'}\right]\right).
\]
In particular, in this regime, any scheme achieving expected distortion at most $\varepsilon$, where $0<\varepsilon<\sigma^2\sqrt{s'}/32$, must satisfy $B_k=\Omega(\sigma^4 d_k\,s'\log(d_1 d_2/s')/\varepsilon^2)$, and the sample complexity for cross-covariance estimation must satisfy $m=\Omega(\sigma^4 s'\log(d_1 d_2/s')/\varepsilon^2)$. The proof machinery connects the Varshamov--Gilbert type argument in \cite{raskutti2011minimax} to the Fano method.
 
\item \textbf{Matching achievable scheme} (Theorem~\ref{thm:sachievability}).
We construct a DCME protocol that combines the covering-net quantisation framework of~\cite{rahmani2025fundamental} with entry-wise hard thresholding inspired by~\cite{garg2014communication}.  The protocol matches the communication lower bound up to polylogarithmic factors under the conditions of Remark~\ref{rem:achievmatch}. We utilize a direct entry-wise Frobenius analysis combined with hard thresholding, requiring concentration arguments for sub-exponential products that are not needed in the dense case.
\item \textbf{Sparse packing for partial permutation matrices}
(Lemma~\ref{lem:VGpartialperm}). We prove a combinatorial packing result for $s'$-sparse signed partial permutation matrices in $\{-1,0,+1\}^{d_2\times d_1}$: there exist at least $((2d_1 d_2)^{3/4}/(4es'))^{s'}$ such matrices that are pairwise separated in Hamming distance by more than $s'/2$.  This adapts the greedy construction of \cite{raskutti2011minimax} from $\ell_0$-sparse vectors to the partial permutation constraint required by the C-SDPI framework, and is the combinatorial ingredient enabling the $s$-sparse lower bound.
 
\end{enumerate}

\subsection{Related Work}
\paragraph{Distributed mean estimation.}
Communication--accuracy tradeoffs for distributed mean estimation have been studied in a sequence of works, including ~\cite{zhang2014information}, ~\cite{garg2014communication}, ~\cite{braverman2016communication}, ~\cite{CW24}, and ~\cite{SFKM17}.

\paragraph{Information-theoretic techniques.}
Beyond SDPI-based approaches, alternative methodologies for lower bounds under information constraints include the $\chi^2$-contraction framework of \cite{ACT20a,ACT20b}, later unified for interactive settings by \cite{ACST23}, and geometric approaches based on Fisher information by \cite{HMOW18} and \cite{BHO20}. The SDPI itself has a long history, originating with \cite{AG76} and developed further by
\cite{Rag13,raginsky2016strong} amongst others; see~\cite{rahmani2025fundamental} for a comprehensive discussion.

\paragraph{Centralised covariance estimation under structure.}
In the non-distributed setting, minimax-optimal rates for covariance estimation under sparsity were established by ~\cite{BL08a,BL08b} (thresholding and banding), \cite{cai2010optimal} (general rates for large covariance matrices), \cite{cai2012optimal} (sparse covariance matrices), and \cite{EK08} (operator-norm consistency). Structured variants including Toeplitz (\cite{CRZ13}) assumptions have also been studied.

\paragraph{Distributed covariance and correlation estimation.}
\cite{HS19} studied distributed estimation of Gaussian
correlations in the vertical split for the scalar and vector--scalar
cases using SDPI-based lower bounds. 
\cite{ST21} studied correlation testing between a vector and a
scalar.  \cite{rahmani2025fundamental} generalised to
the full matrix setting with near-optimal bounds.  Related distributed
PCA problems have been studied in \cite{BKLW14,KVW14,BCL05}.

\subsection{Paper Organization}
The remainder of this paper is organised as follows. Section~\ref{sec:problemformulation} formulates the problem precisely. Section~\ref{sec:prelim} collects the necessary preliminaries, including the C-SDPI and Fano's inequality. Section~\ref{sec:mainresults} states all main results---lower bounds and the matching achievable scheme---and discusses their tightness. Section~\ref{sec:prooflower} proves the lower bounds. Section~\ref{sec:proofachiev} presents the achievability protocol and a proof sketch; the full proof is given in Appendix~\ref{app:achievfull}. Section~\ref{sec:discussion} concludes with a discussion and open problems. Appendices~\ref{app:verification}--\ref{app:achievfull} contain deferred
proofs and technical verifications.

% ============================================================
%  SECTIONS 2–3 — arxiv draft (v2: notation-consistent)
%
%  Convention:
%    Non-bold  X_k        = single-sample random vector (dim d_k)
%    Non-bold  X_k^{(i)}  = i-th copy
%    Bold      \bX_k      = full dataset (d_k x m matrix)
%    Non-bold  (X_2)_i    = scalar component
%    V, W                 = hypothesis/auxiliary indices
%    M_k                  = encoded message from Agent k
%
%  Requires: \newcommand{\bX}{\mathbf{X}}
%            \newcommand{\bZ}{\mathbf{Z}}
% ============================================================

\section{Problem Formulation~\label{sec:problemformulation}}
\subsection{Data Model and Vertical Split}\label{sec:datamodel}
 
Let $Z=\bigl(\begin{smallmatrix}X_1\\X_2\end{smallmatrix}\bigr) \in\mathbb{R}^d$ be a $d$-dimensional random vector with $X_1\in\mathbb{R}^{d_1}$, $X_2\in\mathbb{R}^{d_2}$, and $d=d_1+d_2$. The distribution of $Z$ belongs to the family $\mathcal{P}=\mathrm{subG}^{(d)}(\sigma)$ of $\sigma$-sub-Gaussian distributions.  We observe $m$ i.i.d.\ copies $Z^{(1)},\dots,Z^{(m)}\sim P\in\mathcal{P}$.

In the \emph{vertical} (feature) split with $K=2$ agents, Agent~1 observes $\{X_1^{(i)}\}_{i=1}^m$ (the first $d_1$ coordinates of every sample) and Agent~2 observes $\{X_2^{(i)}\}_{i=1}^m$ (the last $d_2$ coordinates).  Both agents see all $m$ samples, but different coordinates.  We write
\begin{equation}\label{eq:dataset}
    \mathbf{X}_k := \big(X_k^{(1)},\dots,X_k^{(m)}\big)
    \;\in\;\mathbb{R}^{d_k\times m}
\end{equation}
for Agent~$k$'s full dataset.

The covariance matrix of $Z$ admits the block decomposition
\begin{equation}\label{eq:covblock}
    C \;:=\; \mathbb{E}\big[(Z-\mathbb{E}Z)(Z-\mathbb{E}Z)^\top\big]
    \;=\; \begin{pmatrix}
        C_{11} & C_{12} \\ C_{21} & C_{22}
    \end{pmatrix},
\end{equation}
where $C_{kk}=\mathrm{Cov}(X_k)$ and $C_{21}=\mathbb{E}[(X_2-\mathbb{E}X_2)(X_1-\mathbb{E}X_1)^\top] \in\mathbb{R}^{d_2\times d_1}$ is the cross-covariance matrix. The diagonal blocks $C_{11}$ and $C_{22}$ can each be estimated locally by the respective agent without communication; the cross-covariance $C_{21}$ is the block that \emph{requires} inter-agent communication, making it the natural estimation target.

\subsection{Communication Model and Minimax Objective}%
\label{sec:commmodel}
 
We consider the simultaneous (one-shot, non-interactive) communication
model of~\cite{rahmani2025fundamental}.
 
\begin{definition}[DCME and DCCME Schemes]\label{def:dcme}
A \emph{Distributed Covariance Matrix Estimation} (DCME) scheme with
parameters $(\sigma,m,d_{1:2},B_{1:2})$ consists of two encoder
functions
$\mathcal{E}_k\colon\mathbb{R}^{d_k\times m}\to[1:2^{B_k}]$,
producing messages $M_k=\mathcal{E}_k(\mathbf{X}_k)$ for
$k=1,2$, and a decoder
$\mathcal{D}\colon[1:2^{B_1}]\times[1:2^{B_2}]\to\mathbb{S}_+^{d
\times d}$, outputting $\hat{C}=\mathcal{D}(M_1,M_2)$, where $\mathbb{S}_+^{d\times d}$ is the cone of symmetric positive semidefinite matrices.
A \emph{Distributed Cross-Covariance Matrix Estimation} (DCCME) scheme
replaces the decoder with
$\mathcal{D}_{21}\colon[1:2^{B_1}]\times[1:2^{B_2}]
\to\mathbb{R}^{d_2\times d_1}$, outputting $\hat{C}_{21}$.
\end{definition}
 
\noindent
The distortion is measured under the Frobenius norm:
$\mathbb{E}\bigl[\|\hat{C}-C\|_F\bigr]$ for DCME and
$\mathbb{E}\bigl[\|\hat{C}_{21}-C_{21}\|_F\bigr]$ for DCCME.
The \emph{minimax distortions} are
\begin{align}
    \MF(\sigma,m,d_{1:2},B_{1:2},s)
    &\;:=\; \inf_{\mathcal{E}_1,\mathcal{E}_2,\mathcal{D}}
    \;\sup_{P\in\mathcal{P}_s}\;
    \mathbb{E}\big[\|\hat{C}-C\|_F\big],
    \label{eq:minimaxdcme}\\
    \MF^{(\mathrm{cross})}(\sigma,m,d_{1:2},B_{1:2},s)
    &\;:=\; \inf_{\mathcal{E}_1,\mathcal{E}_2,\mathcal{D}_{21}}
    \;\sup_{P\in\mathcal{P}_s}\;
    \mathbb{E}\big[\|\hat{C}_{21}-C_{21}\|_F\big],
    \label{eq:minimaxdccme}
\end{align}
where $\mathcal{P}_s$ is the sparse distribution class defined below.
Since $\|\hat{C}-C\|_F^2\geq 2\|\hat{C}_{21}-C_{21}\|_F^2$, any
DCME scheme induces a DCCME scheme, giving
$\MF\geq\MF^{(\mathrm{cross})}$.

\subsection{Sparsity Model}\label{sec:sparsitymodel}
We impose an \emph{elementwise hard sparsity} constraint on the
cross-covariance.
\begin{definition}[Support]
For a matrix $E \in \R^{d_2 \times d_1}$, the \emph{support} is
$S(E) := \{(j,i) \in [d_2] \times [d_1] : E_{ji} \neq 0\}$,
and $|E|_0 := |S(E)|$ is the number of nonzero entries.
\end{definition}

\begin{definition}[Symmetric difference]
For two supports $S, S' \subset [d_2] \times [d_1]$, the
\emph{symmetric difference} is
$S \triangle S' := (S \setminus S') \cup (S' \setminus S)$,
i.e., positions belonging to exactly one of $S$ and $S'$.
\end{definition}

\begin{definition}[Sparse Cross-Covariance Class]\label{def:sparsity}
For an integer $1\leq s\leq d_1d_2$, the $s$-sparse cross-covariance
class is
\[
    \mathcal{S}(s)
    \;:=\;\big\{A\in\mathbb{R}^{d_2\times d_1}
    :|A|_0\leq s\big\},
\]
and the corresponding
distribution class is
\[
    \mathcal{P}_s
    \;:=\;\big\{P\in\mathrm{subG}^{(d)}(\sigma)
    :C_{21}(P)\in\mathcal{S}(s)\big\}.
\]
The dense setting of~\cite{rahmani2025fundamental} corresponds to
$s=d_1d_2$.
\end{definition}
 
% \begin{remark}[Hard vs.\ soft sparsity]\label{rem:hardvssoft}
% We emphasise that our sparsity model is the \emph{hard} (exact)
% $\ell_0$ constraint $|C_{21}|_0\leq s$, requiring at most $s$ entries
% to be non-zero.  This contrasts with the \emph{soft} sparsity models
% common in centralised covariance estimation, such as the $\ell_q$-ball
% constraint $\sum_{i,j}|C_{ij}|^q\leq R_q$ for $q\in(0,1]$ studied in~\cite{raskutti2011minimax}.  Soft sparsity permits many small but non-zero entries, whereas hard sparsity forces exact zeros.  We work with hard sparsity for two reasons.  First, for jointly Gaussian $Z$, $(C_{21})_{ij}=0$ if and only if $(X_2)_i\perp\!\!\!\perp(X_1)_j$, so each zero entry corresponds to a pair of cross-agent features that are statistically independent; an $s$-sparse cross-covariance means that at most $s$ out of $d_1 d_2$ feature pairs carry any correlation. This sparsity--independence equivalence is the structural property that interacts with the vertical split and drives the improvement in our bounds.  Second, the hypothesis families in our lower bound proofs are constructed from matrices in $\{-1,0,+1\}^{d_2\times d_1}$, and the packing analysis relies on exact zeros to ensure each hypothesis lies in~$\mathcal{P}_s$.
% \end{remark}
\begin{remark}[Hard vs.\ soft sparsity]\label{rem:hardvssoft}
We emphasise that our sparsity model is the \emph{hard} (exact)
$\ell_0$ constraint $|C_{21}|_0\leq s$, requiring at most $s$ entries
to be non-zero. This contrasts with \emph{soft} sparsity constraints,
such as the $\ell_q$-ball constraint
$\sum_{i,j}|C_{ij}|^q\leq R_q$ for $q\in(0,1]$, analogous to the
vector $\ell_q$-ball constraints studied for linear regression
in~\cite{raskutti2011minimax}. Soft sparsity permits many small but
non-zero entries, whereas hard sparsity forces exact zeros.
We work with hard sparsity for two reasons. First, for jointly
Gaussian $Z$, $(C_{21})_{ij}=0$ if and only if
$(X_2)_i\perp\!\!\!\perp(X_1)_j$, so each zero entry corresponds
to a pair of cross-agent features that are statistically independent;
an $s$-sparse cross-covariance means that at most $s$ out of
$d_1d_2$ feature pairs have non-zero covariance. For general
sub-Gaussian $Z$, zero covariance need not imply independence.
Our achievable scheme exploits exact zeros through thresholding,
rather than relying on the Gaussian sparsity--independence
equivalence. Second, the hypothesis families in our cross-covariance
lower bound proofs are constructed from matrices in
$\{-1,0,+1\}^{d_2\times d_1}$, and the packing analysis relies on
exact zeros to ensure each hypothesis satisfies the sparsity
constraint defining~$\mathcal{P}_s$.
\end{remark}

\subsection{Notation}
We write $a\vee b:=\max(a,b)$ and $a\wedge b:=\min(a,b)$.  For a positive integer $n$, $[n]=[1:n]:=\{1,\dots,n\}$.  For a matrix $A$, $\|A\|_{\mathrm{op}}$ and $\|A\|_F$ denote the operator and Frobenius norms, respectively. For symmetric matrices, $A\succeq B$ means $A-B$ is positive semidefinite, and $A\succ B$ means $A-B$ is positive definite.  Throughout, $\log$ denotes the base-2 logarithm.  We write $s':=s\wedge d_{\min}$ where $d_{\min}:=d_1\wedge d_2$. 
% Throughout, $C_{ij}$ and $C_{X_iX_j}$ are used interchangeably for $i,j=1,2$.

Non-bold letters ($Z$, $X_k$, $X_k^{(i)}$) denote random vectors or
their individual copies; bold letters
($\mathbf{Z}$, $\mathbf{X}_k$) denote the full $m$-sample dataset as in~\eqref{eq:dataset}.  Scalar components of $X_k$ are written
$(X_k)_j$ or $X_{k,j}$.

\section{Preliminaries~\label{sec:prelim}}
This section collects the essential tools used to establish the bounds.  For background on information-theoretic quantities (mutual information, KL divergence, entropy), we refer the reader
to~\cite{CoverThomas2006}.

\subsection{Sub-Gaussian Random Vectors and Concentration Inequalities}
\begin{definition}[$\sigma$-Sub-Gaussian Random Variable; {\cite[Definition~2.2]{Wainwright2019HDS}}]~\label{def:subgaussianrv}
    A random variable $X$ is $\sigma$-sub-Gaussian if
    \[
        \mathbb{E}\big[e^{\lambda(X-\mathbb{E}[X])}\big]\leq e^{\lambda^2\sigma^2/2},\quad\forall\lambda\in\mathbb{R}.
    \]
\end{definition}
\begin{definition}[$\sigma$-Sub-Gaussian Random Vector; {\cite[Section~6.3]{Wainwright2019HDS}}]~\label{def:subgaussianvec}
    A random vector $Z\in\mathbb{R}^d$ is $\sigma$-sub-Gaussian if $u^TZ$ is a $\sigma$-sub-Gaussian random variable for every $u\in\mathbb{S}^{d-1}$, where $\mathbb{S}^{d-1}=\{u\in\mathbb{R}^d:\|u\|_2=1\}$. We denote the family of all $d$-dimensional $\sigma$-sub-Gaussian distributions by $\mathrm{subG}^{(d)}(\sigma)$.
\end{definition}
\begin{remark}
    For a Gaussian vector $Z\sim N(0,C)$, we have
    $N(0,C)\in\mathrm{subG}^{(d)}(\sigma)$ if and only if
    $\|C\|_{op}\leq\sigma^2$. This equivalence is used
    repeatedly in the hypothesis family constructions.
\end{remark}
\begin{definition}[Sub-Exponential Random Variable; {\cite[Definition~2.8.4]{vershynin2026high}}]~\label{def:subexponential}
    A random variable $X$ is sub-exponential with parameter $K$ if
    \[
        \|X\|_{\psi_1}:=\inf\{t>0:\mathbb{E}[e^{|X|/t}]\leq 2\}\leq K.
    \]
    The quantity $\|X\|_{\psi_1}$ is called the $\psi_1$ (sub-exponential) Orlicz norm.
\end{definition}
\begin{lemma}[Product of Sub-Gaussians; {\cite[Lemma~2.8.6]{vershynin2026high}}]~\label{lem:subgaussianproduct}
    If $X$ is $\sigma_1$-sub-Gaussian and $Y$ is $\sigma_2$-sub-Gaussian, then $(X-\mathbb{E}X)(Y-\mathbb{E}Y)$ is sub-exponential with $\|(X-\mathbb{E}X)(Y-\mathbb{E}Y)\|_{\psi_1}\leq C\sigma_1\sigma_2$, where $C>0$ is a universal constant.
\end{lemma}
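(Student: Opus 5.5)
The plan is to reduce to the standard characterization of sub-Gaussianity through moments and then use the Orlicz-norm route. Write $\tilde X := X-\mathbb{E}X$ and $\tilde Y := Y-\mathbb{E}Y$. By Definition~\ref{def:subgaussianrv}, $\tilde X$ has moment generating function bounded by $e^{\lambda^2\sigma_1^2/2}$. By the usual Chernoff argument this gives the tail bound
\[
\mathbb{P}(|\tilde X|>t)\le 2e^{-t^2/(2\sigma_1^2)}.
\]
Integrating this tail yields $\mathbb{E}[e^{\tilde X^2/(c\sigma_1^2)}]\le 2$ for a universal constant $c$, for instance $c=6$. In other words, $\|\tilde X\|_{\psi_2}\le C_0\sigma_1$, where $\|Z\|_{\psi_2}:=\inf\{t>0:\mathbb{E}e^{Z^2/t^2}\le2\}$. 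The same argument gives $\|\tilde Y\|_{\psi_2}\le C_0\sigma_2$.

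The core step is the product inequality $\|\tilde X\tilde Y\|_{\psi_1}\le\|\tilde X\|_{\psi_2}\|\tilde Y\|_{\psi_2}$. To prove it, normalize so that $\|\tilde X\|_{\psi_2}=\|\tilde Y\|_{\psi_2}=1$, which is allowed by homogeneity; the degenerate case of a zero norm is trivial. Young's inequality $|ab|\le a^2/2+b^2/2$ gives
\[
\mathbb{E}e^{|\tilde X\tilde Y|}\le \mathbb{E}\big[e^{\tilde X^2/2}e^{\tilde Y^2/2}\big]\le \tfrac12\mathbb{E}e^{\tilde X^2}+\tfrac12\mathbb{E}e^{\tilde Y^2}\le 2,
\]
where the middle step applies AM--GM to the two exponentials. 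This shows $\|\tilde X\tilde Y\|_{\psi_1}\le1$ in the normalized case. Rescaling then gives the general bound, and combining with the first step yields $\|\tilde X\tilde Y\|_{\psi_1}\le C_0^2\sigma_1\sigma_2$. Note that no independence between $X$ and $Y$ is required, which matters here because $X$ and $Y$ will be correlated coordinates.

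I expect the main obstacle to be the conversion in the first step, from the MGF definition to a $\psi_2$ bound with an explicit universal constant. I would handle it by expanding the exponential in a power series: write $\mathbb{E}e^{\tilde X^2/t^2}=\sum_k \mathbb{E}\tilde X^{2k}/(t^{2k}k!)$. Each moment can be bounded via the tail integral, $\mathbb{E}\tilde X^{2k}\le 2k(2\sigma_1^2)^k\Gamma(k)=2\cdot k!(2\sigma_1^2)^k$. The resulting geometric series is at most $2$ once $t^2\ge 6\sigma_1^2$. Alternatively, one can simply cite this conversion from \cite{vershynin2026high}, since the lemma is itself quoted from there.
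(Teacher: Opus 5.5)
Your proof is correct. It is the standard argument behind the cited result; the paper gives no proof of its own, only the citation to Vershynin. That argument converts the MGF definition into $\|X-\mathbb{E}X\|_{\psi_2}\le\sqrt{6}\,\sigma_1$ (your constant $6$ checks out), then uses $|ab|\le a^2/2+b^2/2$ and convexity of the exponential to get $\|\tilde X\tilde Y\|_{\psi_1}\le\|\tilde X\|_{\psi_2}\|\tilde Y\|_{\psi_2}$, so $C=6$. You can skip both the normalization (which relies on the infimum in $\|\cdot\|_{\psi_2}$ being attained, true by monotone convergence) and the degenerate case by running the Young step directly with $t_1=\sqrt{6}\,\sigma_1$ and $t_2=\sqrt{6}\,\sigma_2$.
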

\begin{lemma}[Moment Bound for Sub-Exponentials; {\cite[Proposition~2.8.1(ii)]{vershynin2026high}}]~\label{lem:subexpmoment}
    If $X$ is sub-exponential, then $\mathbb{E}[X^2]\leq C\|X\|_{\psi_1}^2$ for a universal constant $C>0$.
\end{lemma}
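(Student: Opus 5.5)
The statement is Lemma~\ref{lem:subexpmoment}: if $X$ is sub-exponential, then $\mathbb{E}[X^2]\le C\|X\|_{\psi_1}^2$ for a universal constant $C$. The plan is to compare $X^2$ pointwise with the exponential function whose expectation is controlled by the definition of $\|X\|_{\psi_1}$. If $\|X\|_{\psi_1}=0$ then $X=0$ almost surely and there is nothing to prove, so assume $K:=\|X\|_{\psi_1}\in(0,\infty)$.

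First I check that the infimum in Definition~\ref{def:subexponential} is attained, so that $\mathbb{E}[e^{|X|/K}]\le 2$. For $t\downarrow K$ the integrands $e^{|X|/t}$ increase monotonically to $e^{|X|/K}$, and each has expectation at most $2$. The monotone convergence theorem then gives $\mathbb{E}[e^{|X|/K}]\le 2$. This limiting step is the only point needing care. If one prefers to avoid it, one can work with any $t>K$ and let $t\downarrow K$ at the very end.

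Next, take the Taylor expansion of $e^{u}$ for $u=|X|/K\ge 0$. All terms are nonnegative, so $e^{u}\ge 1+u+u^2/2\ge u^2/2$. Hence
\[
\frac{X^2}{2K^2}\le e^{|X|/K}\quad\text{pointwise}.
\]
Taking expectations gives $\mathbb{E}[X^2]\le 2K^2\,\mathbb{E}[e^{|X|/K}]\le 4K^2$.

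Sharpening the inequality to $e^u\ge 1+u^2/2$ gives $\mathbb{E}[X^2]\le 2K^2(2-1)=2K^2$. Either way the lemma holds with the universal constant $C=4$, or even $C=2$. The same argument yields $\mathbb{E}|X|^p\le p!\,K^p$ for every $p$, but only $p=2$ is needed here.
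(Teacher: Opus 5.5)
Your argument is correct, and it takes a different, more direct route than the source the paper relies on. The cleanest form is your fallback. For any $t>\|X\|_{\psi_1}$ one has $\mathbb{E}[e^{|X|/t}]\le 2$, because the admissible set is upward closed ($t\mapsto \mathbb{E}[e^{|X|/t}]$ is nonincreasing). Applying $u^2/2\le e^u-1$ at $u=|X|/t$ then gives $\mathbb{E}[X^2]\le 2t^2$, and letting $t\downarrow\|X\|_{\psi_1}$ yields $C=2$. This needs neither monotone convergence nor a separate treatment of the case $\|X\|_{\psi_1}=0$.

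The paper supplies no proof of its own and imports the bound from Vershynin. There it is the $p=2$ case of the moment characterization $\|X\|_{L^p}\lesssim p\,\|X\|_{\psi_1}$ among the equivalent sub-exponential properties. The standard route first converts the $\psi_1$ condition into the tail bound $\mathbb{P}(|X|\ge t)\le 2e^{-t/K}$ via Markov's inequality. It then integrates tails, $\mathbb{E}|X|^p=\int_0^\infty p\,t^{p-1}\,\mathbb{P}(|X|\ge t)\,dt\le 2\,p!\,K^p$, which gives $\mathbb{E}[X^2]\le 4K^2$.

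Your pointwise Taylor comparison skips the tail step, is self-contained, and gives a slightly better explicit constant. The textbook route's only advantage is that it establishes the tail and MGF characterizations at the same time, which this lemma does not need.

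Your explicit constant also pins down the unspecified constant in the achievability proof. Each centered summand of $S_{ij}$ has second moment at most $2C'^2\sigma^4$, so $\mathbb{E}[S_{ij}^2]\le 2C'^2\sigma^4/n$, and $C''=8C'^2$ suffices.
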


\begin{theorem}[Bernstein's Inequality, Theorem 2.9.1 in \cite{vershynin2026high}]~\label{thm:bernstein}
    Let $X_1,\dots,X_N$ be independent, mean zero, sub-exponential random variables. Then, for every $t\geq 0$, we have
    $$\mathbb{P}\bigg\{\bigg|\sum_{i=1}^N X_i\bigg|\geq t\bigg\}\leq 2\exp\big[-c\min(\frac{t^2}{\sum_{i=1}^N\|X_i\|^2_{\psi_1}},\frac{t}{\max_i\|X_i\|_{\psi_1}})\big]$$
\end{theorem}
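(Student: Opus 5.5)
The statement just above is Bernstein's inequality for sums of independent, mean-zero sub-exponential variables. I will prove it by the standard Chernoff argument, with the moment generating function (MGF) controlled near the origin via the $\psi_1$ norm. By symmetry (apply the one-sided bound to $-X_i$ as well), it suffices to bound $\mathbb{P}\{\sum_i X_i\ge t\}$ by $\exp[-c\min(\cdot,\cdot)]$ and then add the two tails, which produces the factor $2$.

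The first step is an MGF bound for a single mean-zero $X$ with $\|X\|_{\psi_1}\le K$: there are absolute constants $C_0,c_0$ such that
\[
\mathbb{E}e^{\lambda X}\le \exp(C_0\lambda^2K^2)\quad\text{for all }|\lambda|\le c_0/K .
\]
To prove it, I first use the definition of $\|X\|_{\psi_1}$ and $e^{|x|/K}\ge (|x|/K)^p/p!$ to get $\mathbb{E}|X|^p\le 2\,p!\,K^p$ for every $p\ge1$; this is the same mechanism behind Lemma~\ref{lem:subexpmoment}. Then I expand
\[
\mathbb{E}e^{\lambda X}=1+\lambda\mathbb{E}X+\sum_{p\ge2}\frac{\lambda^p\,\mathbb{E}X^p}{p!}.
\]
The linear term vanishes because $X$ is mean zero. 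The remaining sum is bounded by $2\sum_{p\ge2}(|\lambda| K)^p\le 4\lambda^2K^2$ once $|\lambda|K\le1/2$. Finally $1+u\le e^u$ gives the claim. This is the main technical step: one must keep the constants absolute and use the centering to kill the linear term.

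Next, write $K_i=\|X_i\|_{\psi_1}$. By independence and Markov's inequality, for $0<\lambda\le c_0/\max_iK_i$,
\[
\mathbb{P}\Big\{\sum_i X_i\ge t\Big\}\le e^{-\lambda t}\prod_i\mathbb{E}e^{\lambda X_i}\le \exp\Big(-\lambda t+C_0\lambda^2\sum_iK_i^2\Big).
\]

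Finally, I optimize over $\lambda$ in this range. The unconstrained minimizer is $\lambda^*=t/(2C_0\sum_iK_i^2)$. If $\lambda^*\le c_0/\max_iK_i$, taking $\lambda=\lambda^*$ gives the exponent $-t^2/(4C_0\sum_iK_i^2)$. Otherwise I take $\lambda=c_0/\max_iK_i$. In that case $C_0\lambda^2\sum_iK_i^2\le \lambda t/2$, because $\lambda<\lambda^*$ means exactly $2C_0\lambda\sum_iK_i^2<t$. The exponent is then at most $-c_0t/(2\max_iK_i)$. In both cases the exponent is at most $-c\min\big(t^2/\sum_iK_i^2,\ t/\max_iK_i\big)$ with $c=\min(1/(4C_0),c_0/2)$. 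Combining the two tails finishes the proof.
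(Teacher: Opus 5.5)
Your proof is correct: the moment bound $\mathbb{E}|X|^p\le 2\,p!\,K^p$, the local MGF bound $\mathbb{E}e^{\lambda X}\le e^{4\lambda^2K^2}$ for $|\lambda|K\le 1/2$ (so $C_0=4$, $c_0=1/2$), the Chernoff step and the two-case choice of $\lambda$ all check out. The only details you leave implicit are routine: $\mathbb{E}e^{|X|/K}\le 2$ still holds when $K=\|X\|_{\psi_1}$ exactly, by monotone convergence, and swapping the series with the expectation is justified by Fubini for $|\lambda|K<1$. The paper gives no proof of its own and simply cites Vershynin's book, whose proof is this same Chernoff argument with the local sub-exponential MGF bound, so your route matches the cited one.
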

% \begin{lemma}[Concentration of Empirical Cross-Covariance; {\cite[Lemma~F.1]{rahmani2025fundamental}}]~\label{lem:rahmanif1}
%     Assume that $\mathbf{X}\in\mathbb{R}^{d_1}$ is a zero mean, sub-Gaussian vector with parameter $\sigma_1$, and we have $m$ i.i.d.\ samples from $\mathbf{X}$ as $\{\mathbf{X}^{(i)}\}_{i=1}^m$. Also assume that $\mathbf{Y}\in\mathbb{R}^{d_2}$ is a zero mean, sub-Gaussian vector with parameter $\sigma_2$, and we have $m$ i.i.d.\ samples from $\mathbf{Y}$ as $\{\mathbf{Y}^{(i)}\}_{i=1}^m$. Consider the cross-covariance matrix $\mathbf{C_{XY}}\in\mathbb{R}^{d_1\times d_2}$ as $\mathbf{C_{XY}}=\mathbb{E}[\mathbf{X}\mathbf{Y}^T]$ and assume that we use the estimator $\tilde{\mathbf{C}}_{\mathbf{XY}}=\frac{1}{m}\sum_{i=1}^m\mathbf{X}^{(i)}\mathbf{Y}^{(i)T}$. Then we have:
%     \[
%         \mathbb{P}\Big[\|\tilde{\mathbf{C}}_{\mathbf{XY}}-\mathbf{C_{XY}}\|_{op}\geq 10\sigma_1\sigma_2 t\Big]\leq (9)^{d_1+d_2}\exp\big(-m\cdot\min\{t,t^2\}\big),
%     \]
%     and:
%     \[
%         \mathbb{P}\Big[\|\tilde{\mathbf{C}}_{\mathbf{XY}}\|_{op}\geq 11\sigma_1\sigma_2\Big]\leq\min\big\{1,\,\exp\big(3(d_1+d_2)-m\big)\big\}.
%     \]
% \end{lemma}
\begin{lemma}[Concentration of Empirical Cross-Covariance;
{\cite[Lemma~F.1]{rahmani2025fundamental}}]\label{lem:f1crosscov-conc}
Let $U\in\mathbb{R}^{p_1}$ and $V\in\mathbb{R}^{p_2}$ be zero-mean
random vectors that are $\sigma_1$- and $\sigma_2$-sub-Gaussian
respectively.  Given $m$ i.i.d.\ copies
$(U^{(1)},V^{(1)}),\dots,(U^{(m)},V^{(m)})$, define the empirical
cross-covariance
$\widetilde{C}_{UV}:=\frac{1}{m}\sum_{i=1}^m U^{(i)}{V^{(i)}}^\top$
and the true cross-covariance
$C_{UV}:=\mathbb{E}[UV^\top]$.  Then for $t>0$,
\[
    \mathbb{P}\Big[\big\|\widetilde{C}_{UV}-C_{UV}
    \big\|_{\mathrm{op}}\geq 10\,\sigma_1\sigma_2\, t\Big]
    \leq 9^{p_1+p_2}\,
    \exp\!\big({-m\cdot\min\{t,\,t^2\}}\big),
\]
and $\mathbb{P}\big[\|\widetilde{C}_{UV}\|_{\mathrm{op}}
\geq 11\,\sigma_1\sigma_2\big]
\leq\min\big\{1,\,\exp\!\big(3(p_1+p_2)-m\big)\big\}$.
\end{lemma}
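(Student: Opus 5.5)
The plan is to use a standard $\epsilon$-net argument combined with Bernstein's inequality for sub-exponential products. Write $A:=\widetilde{C}_{UV}-C_{UV}$. I will use the variational characterization $\|A\|_{\mathrm{op}}=\sup_{u\in\mathbb{S}^{p_1-1},\,v\in\mathbb{S}^{p_2-1}}u^\top A v$.

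\textbf{Step 1 (nets).} Fix $\tfrac14$-nets $\mathcal{N}_1\subset\mathbb{S}^{p_1-1}$ and $\mathcal{N}_2\subset\mathbb{S}^{p_2-1}$. By the volumetric bound they can be chosen with $|\mathcal{N}_k|\le 9^{p_k}$. Take $u^\star,v^\star$ attaining the supremum and nearest net points $u_0,v_0$. Then
\[
u^{\star\top}Av^{\star}-u_0^\top A v_0=(u^\star-u_0)^\top A v^\star+u_0^\top A(v^\star-v_0),
\]
and each of the two terms is at most $\tfrac14\|A\|_{\mathrm{op}}$. This gives
\[
\|A\|_{\mathrm{op}}\le 2\max_{u\in\mathcal{N}_1,v\in\mathcal{N}_2}u^\top A v.
\]
I keep the one-sided maximum, without absolute values, so that a one-sided tail bound suffices. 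This avoids a factor $2$ in the final constant $9^{p_1+p_2}$.

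\textbf{Step 2 (fixed directions).} For fixed unit vectors $u,v$, set $a_i:=u^\top U^{(i)}$ and $b_i:=v^\top V^{(i)}$. These are zero-mean, $\sigma_1$- and $\sigma_2$-sub-Gaussian respectively. Moreover $u^\top A v=\frac1m\sum_i\big(a_ib_i-\mathbb{E}[a_ib_i]\big)$ is an average of i.i.d.\ centered variables. By Lemma~\ref{lem:subgaussianproduct} each summand is sub-exponential with $\psi_1$-norm at most $C\sigma_1\sigma_2$ (after centering, which costs only a constant). The one-sided form of Theorem~\ref{thm:bernstein} then gives
\[
\mathbb{P}\big[u^\top A v\ge 5\sigma_1\sigma_2 t\big]\le \exp\big(-m\min\{t,t^2\}\big),
\]
provided the constant in front of $\sigma_1\sigma_2t$ is large relative to $C$ and $1/c$.

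\textbf{Step 3 (union bound and second claim).} A union bound over the at most $9^{p_1+p_2}$ net pairs, combined with Step 1, yields the first inequality with threshold $10\sigma_1\sigma_2t$.

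For the second claim, I first bound the true cross-covariance. By Cauchy--Schwarz and the sub-Gaussian variance bound $\mathbb{E}a^2\le\sigma_1^2$, we have $u^\top C_{UV}v=\mathbb{E}[ab]\le\sigma_1\sigma_2$, so $\|C_{UV}\|_{\mathrm{op}}\le\sigma_1\sigma_2$. Applying the first claim with $t=1$ and the triangle inequality gives
\[
\mathbb{P}\big[\|\widetilde{C}_{UV}\|_{\mathrm{op}}\ge 11\sigma_1\sigma_2\big]\le 9^{p_1+p_2}e^{-m}\le e^{3(p_1+p_2)-m},
\]
using $\ln 9<3$ (natural logarithm here). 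Taking the minimum with $1$ is trivial.

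\textbf{Main obstacle.} The delicate point is Step 2 with the \emph{explicit} constants $10$ and exponent rate exactly $m\min\{t,t^2\}$. The cited Bernstein inequality only has unspecified universal constants $c$ and $C$. To pin the constants down, I would instead compute the moment generating function directly via polarization:
\[
a b=\tfrac14\big((a+b)^2-(a-b)^2\big)\quad\text{after normalizing } a/\sigma_1,\ b/\sigma_2 .
\]
Here $a\pm b$ are $2$-sub-Gaussian after normalization. I would then bound $\mathbb{E}e^{\lambda(ab-\mathbb{E}ab)}$ using the standard sub-Gaussian square MGF bound $\mathbb{E}e^{\mu X^2}\le(1-2\mu\sigma^2)^{-1/2}$ together with Cauchy--Schwarz, and apply a Chernoff bound optimized over $\lambda$. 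This should produce explicit constants compatible with $10$ and rate $1$. If it does not, I would accept an adjusted constant in the threshold, which is all that the downstream results use.
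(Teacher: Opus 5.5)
Your Step~1 reduction is correct: it is one-sided, $\|A\|_{\mathrm{op}}\le 2\max_{\mathcal N_1\times\mathcal N_2}u^\top Av$, over at most $9^{p_1+p_2}$ pairs. The union bound and the bound $\|C_{UV}\|_{\mathrm{op}}\le\sigma_1\sigma_2$ are also correct. The paper gives no proof of its own; it imports the lemma from Rahmani et al.

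The gap is Step~2, which carries all of the quantitative content. You need $\mathbb{P}[u^\top Av\ge 5\sigma_1\sigma_2t]\le e^{-m\min\{t,t^2\}}$ with exactly these numbers, and neither route you offer delivers it. Bernstein with unspecified $c,C$ only gives a threshold $\kappa\sigma_1\sigma_2t$ for some unknown $\kappa\ge\max\{C/\sqrt{c},C/c\}$.

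Your fallback fails more seriously, because $\mathbb{E}e^{\mu X^2}\le(1-2\mu\sigma^2)^{-1/2}$ is an \emph{uncentered} bound. Take $a=u^\top U/\sigma_1$, $b=v^\top V/\sigma_2$, $S=(a+b)/2$ and $D=(a-b)/2$, each with variance proxy at most $1$, so that $ab=S^2-D^2$. Cauchy--Schwarz plus that bound on the $S$-factor give, for $0\le\lambda<\tfrac14$,
\[
\ln\mathbb{E}\,e^{\lambda(ab-\mathbb{E}ab)}\le-\tfrac14\ln(1-4\lambda)-\lambda\,\mathbb{E}S^2+\tfrac12\ln\mathbb{E}\,e^{-2\lambda(D^2-\mathbb{E}D^2)} .
\]
The right-hand side is at least $\lambda(1-\mathbb{E}S^2)$. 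This is linear in $\lambda$ whenever $\mathbb{E}S^2$ is below the variance proxy; already for independent standard Gaussians $\mathbb{E}S^2=\tfrac12$. Chernoff therefore gives the trivial bound $1$ for every deviation below $1-\mathbb{E}S^2$, so the $t^2$ branch is lost. Recovering it needs a centered estimate, for example a Taylor expansion with sub-Gaussian moment bounds. Routine bookkeeping of that kind gives a per-direction constant in the twenties, not $5$.

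So what your argument actually establishes is the lemma with $10$ and $11$ replaced by unspecified universal constants. ``Accepting an adjusted constant'' is not free here, because $11\sigma^2$ is hard-wired into the protocol of Theorem~\ref{thm:sachievability}: it is the quantization radius, it fixes $\omega_k'=33\sigma^2 2^{-B_k^{(\mathrm{self})}/d_k^2}$, and it fixes the constant $528$.

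That said, the second inequality does follow with the stated constants from your uncentered idea, as long as you stop routing it through the first inequality. This is the only part the paper uses, for $\mathcal E_{k,1}$. Apply the Step~1 reduction to $\widetilde C_{UV}$ itself. For $0\le\lambda<\tfrac12$, $ab\le\tfrac12(a^2+b^2)$ and Cauchy--Schwarz give $\mathbb{E}e^{\lambda ab}\le(1-2\lambda)^{-1/2}$. Chernoff at level $5.5$ with $\lambda=9/22$ then gives $\mathbb{P}\big[\tfrac1m\sum_i a_ib_i\ge 5.5\big]\le e^{-m(4.5-\ln 5.5)/2}\le e^{-m}$. The union bound yields $9^{p_1+p_2}e^{-m}\le e^{3(p_1+p_2)-m}$.

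The first inequality with the constants as stated still requires an argument you have not supplied.
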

\begin{proposition}[Expected Operator Norm Error of Empirical
Cross-Covariance;
\textup{\cite[Proposition~F.2]{rahmani2025fundamental}}]
\label{prop:f2crosscov-expected}
Let $U\in\mathbb{R}^{p_1}$ and $V\in\mathbb{R}^{p_2}$ be
zero-mean random vectors that are $\sigma_1$- and
$\sigma_2$-sub-Gaussian respectively.  Given $m$ i.i.d.\
copies $(U^{(1)},V^{(1)}),\dots,(U^{(m)},V^{(m)})$, define
the empirical cross-covariance
$\widetilde{C}_{UV}:=\frac{1}{m}\sum_{i=1}^m
U^{(i)}{V^{(i)}}^\top$ and the true cross-covariance
$C_{UV}:=\mathbb{E}[UV^\top]$.  Then
\[
    \mathbb{E}\big[\|\widetilde{C}_{UV}-C_{UV}
    \|_{\mathrm{op}}\big]
    \;\leq\; 32\,\sigma_1\sigma_2
    \max\!\left\{\sqrt{\frac{p_1+p_2}{m}},\;
    \frac{p_1+p_2}{m}\right\}.
\]
\end{proposition}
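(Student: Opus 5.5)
The bound follows by integrating the tail estimate of Lemma~\ref{lem:f1crosscov-conc}. Write $p:=p_1+p_2$, $\delta:=\max\{\sqrt{p/m},\,p/m\}$ and $Y:=\|\widetilde{C}_{UV}-C_{UV}\|_{\mathrm{op}}/(10\sigma_1\sigma_2)$. It then suffices to show $\mathbb{E}[Y]\le 3.2\,\delta$. The layer-cake formula gives
\[
\mathbb{E}[Y]=\int_0^\infty \mathbb{P}[Y\ge t]\,dt\le t_0+\int_{t_0}^\infty 9^{p}e^{-m\,g(t)}\,dt,\qquad g(t):=\min\{t,t^2\},
\]
for any threshold $t_0>0$. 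On $[0,t_0]$ the probability is bounded by $1$, and beyond $t_0$ we use the first inequality of Lemma~\ref{lem:f1crosscov-conc}.

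\textbf{Choice of threshold.} I will take $t_0:=c\,\delta$ with a constant $c$ (e.g.\ $c=2$), chosen so that $m\,g(t_0)\ge 2p\ln 9$. This can be checked in the two regimes.
\begin{itemize}
\item If $p\le m$, then $\delta=\sqrt{p/m}\le1$. If also $t_0\le1$, we get $m\,g(t_0)=m c^2 p/m=c^2p$.
\item If $t_0>1$, we get $m\,g(t_0)=mt_0\ge m\ge p$ times a constant.
\item If $p>m$, then $\delta=p/m>1$, so $t_0>1$ and $m\,g(t_0)=cp$.
\end{itemize}
In every case $m\,g(t_0)\gtrsim p$. The constant $c$ must be tuned against $2\ln 9\approx 4.4$. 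If this forces $c$ large, I will instead absorb part of the $9^{p}$ factor differently.

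\textbf{Tail integral.} Since $g$ is increasing and $g(t)-g(t_0)\ge \min\{t-t_0,\ t_0(t-t_0)\}$, which is at least a linear function of $t-t_0$ with slope $\min\{1,t_0\}$, we have for $t\ge t_0$
\[
9^{p}e^{-m g(t)}\le 9^{-p}e^{-m\min\{1,t_0\}(t-t_0)}.
\]
Hence
\[
\int_{t_0}^\infty 9^{p}e^{-m g(t)}\,dt\le \frac{9^{-p}}{m\min\{1,t_0\}}.
\]
Since $t_0\ge c\sqrt{p/m}\ge c/\sqrt m$, this is at most $9^{-p}\max\{1/m,\,1/(c\sqrt m)\}$. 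This is at most a small multiple of $\delta$, because $p\ge 1$ gives $\delta\ge\max\{1/\sqrt m,\,1/m\}$.

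\textbf{Conclusion.} Summing the two pieces gives $\mathbb{E}[Y]\le (c+\text{small})\,\delta$. Multiplying by $10\sigma_1\sigma_2$ yields the claimed bound with constant $32$, provided $c+\text{small}\le 3.2$.

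\textbf{Main obstacle.} The only delicate point is the constant bookkeeping. We need $c$ large enough to cancel the $9^{p}$ entropy factor in both the $t\le1$ and $t>1$ regimes, yet $10(c+\text{small})\le 32$. If the crude choice fails, I would split the integral at $t=1$ and treat the two regimes separately. As a fallback, I would use the second statement of Lemma~\ref{lem:f1crosscov-conc} to handle the large-deviation range when $p\gtrsim m$.
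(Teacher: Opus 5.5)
Your route is the natural one: integrate the tail of Lemma~\ref{lem:f1crosscov-conc} with the layer-cake formula, threshold at $t_0=c\delta$, and use the linearization $g(t)-g(t_0)\ge\min\{1,t_0\}(t-t_0)$, which is correct. (The paper itself only imports this proposition from \cite{rahmani2025fundamental} and gives no proof.) However, the statement carries the explicit constant $32$, and the concrete choices you make do not deliver it.

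In the linear regime $t_0>1$, which always occurs when $p>m$, you have $m\,g(t_0)=mt_0=cp$ exactly. So the factor $9^{p}$ is cancelled only if $c\ge\ln 9\approx 2.197$. Your example $c=2$ leaves $9^{p}e^{-m g(t_0)}=e^{(\ln 9-2)p}$, which grows with $p$, so the tail integral is not $O(\delta)$ uniformly. Conversely, your own requirement $m\,g(t_0)\ge 2p\ln 9$ forces $c\ge 2\ln 9\approx 4.39$ in that regime, and then $10(c+\text{small})\approx 44>32$. Your bullet ``$mt_0\ge m\ge p$ times a constant'' is exactly where the relevant constant is lost. Note also that $c=2$ fails your requirement even when $t_0\le 1$, since $c^2=4<2\ln 9$. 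As written, the proof therefore stops at ``some universal constant'' rather than $32$.

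The fix is to aim for $9^{p}e^{-m g(t_0)}\le 1$ rather than $\le 9^{-p}$, i.e.\ $m\,g(t_0)\ge p\ln 9$, and to take $c=\ln 9$. This holds in every regime:
\begin{itemize}
\item if $p>m$, then $mt_0=p\ln 9$;
\item if $p\le m$ and $t_0>1$, then $mt_0=\ln 9\,\sqrt{pm}\ge p\ln 9$;
\item if $t_0\le 1$ (which forces $p<m$ and $\delta=\sqrt{p/m}$), then $mt_0^2=(\ln 9)^2p\ge p\ln 9$.
\end{itemize}
Your linearization then bounds the tail integral by $1/(m\min\{1,t_0\})$. When $t_0\ge1$ this is at most $1/m\le p/m\le\delta$. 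When $t_0<1$ it is at most $1/(\ln 9\,\sqrt m)\le\delta$, using $\delta\ge\sqrt{p/m}\ge 1/\sqrt m$. Hence $\mathbb{E}[Y]\le(1+\ln 9)\,\delta$, and the final constant is $10(1+\ln 9)\approx 31.97\le 32$.

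This shows how little slack the constant $32$ leaves: the factor-$2$ margin in the exponent cannot be afforded. Neither of your fallbacks is needed. Moreover, the second statement of Lemma~\ref{lem:f1crosscov-conc} controls $\|\widetilde{C}_{UV}\|_{\mathrm{op}}$, not the deviation $\|\widetilde{C}_{UV}-C_{UV}\|_{\mathrm{op}}$, so it would not directly help anyway.
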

% \begin{lemma}[Operator Norm of Sub-Gaussian Random Matrix; {\cite[Lemma~F.3]{rahmani2025fundamental}}]~\label{lem:rahmanif3}
%     Let $\mathbf{A}$ be a $d\times n$ random matrix whose columns $\mathbf{A}_i$ are independent, mean zero, $\sigma$-sub-Gaussian random vectors. Then we have:
%     \[
%         \mathbb{P}\Big[\|\mathbf{A}\|_{op}\geq 6\sigma\sqrt{d+n}\Big]\leq\exp\big(-2(d+n)\big).
%     \]
%     Moreover, for $q\in\{1,2\}$ the following inequality holds:
%     \begin{equation}~\label{eq:rahmanif3moment}
%         \mathbb{E}\Big[\|\mathbf{A}\|_{op}^q\Big]\leq C_q\sigma^q(d+n)^{q/2},
%     \end{equation}
%     for some universal constant $C_q$ depending only on $q$.
% \end{lemma}
\begin{lemma}[Operator Norm of Sub-Gaussian Random Matrix;
{\cite[Lemma~F.3]{rahmani2025fundamental}}]\label{lem:f3subG-opnorm}
Let $\Gamma\in\mathbb{R}^{p_1\times p_2}$ be a random matrix whose
columns $\Gamma_1,\dots,\Gamma_{p_2}$ are independent, zero-mean,
$\sigma$-sub-Gaussian random vectors in $\mathbb{R}^{p_1}$.  Then
\[
    \mathbb{P}\Big[\|\Gamma\|_{\mathrm{op}}
    \geq 6\,\sigma\sqrt{p_1+p_2}\Big]
    \leq\exp\!\big({-2(p_1+p_2)}\big).
\]
Moreover, for $q\in\{1,2\}$,
\begin{equation}\label{eq:subG-opnorm-moment}
    \mathbb{E}\Big[\|\Gamma\|_{\mathrm{op}}^q\Big]
    \leq C_q\,\sigma^q\,(p_1+p_2)^{q/2},
\end{equation}
where $C_q>0$ is a universal constant depending only on $q$.
\end{lemma}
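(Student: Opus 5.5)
The plan is the standard $\varepsilon$-net argument for the operator norm, combined with a union bound and then tail integration for the moments. Throughout write $n:=p_1+p_2$. Start from the variational formula
\[
\|\Gamma\|_{\mathrm{op}}=\sup_{u\in\mathbb{S}^{p_1-1},\,v\in\mathbb{S}^{p_2-1}} u^\top\Gamma v .
\]

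\textbf{Fixed directions.} For fixed unit vectors $u,v$ we have
\[
u^\top\Gamma v=\sum_{j=1}^{p_2} v_j\,(u^\top\Gamma_j).
\]
By Definition~\ref{def:subgaussianvec} and zero mean, each $u^\top\Gamma_j$ is a centred $\sigma$-sub-Gaussian variable. The columns are independent, so the moment generating functions multiply:
\[
\mathbb{E}\,e^{\lambda u^\top\Gamma v}\le \prod_j e^{\lambda^2 v_j^2\sigma^2/2}=e^{\lambda^2\sigma^2/2}.
\]
Hence $u^\top\Gamma v$ is $\sigma$-sub-Gaussian, and the Chernoff bound gives $\mathbb{P}[u^\top\Gamma v\ge t]\le e^{-t^2/(2\sigma^2)}$.

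\textbf{Discretisation.} Take $1/4$-nets $\mathcal{N}\subset\mathbb{S}^{p_1-1}$ and $\mathcal{M}\subset\mathbb{S}^{p_2-1}$ with $|\mathcal{N}|\le 9^{p_1}$ and $|\mathcal{M}|\le 9^{p_2}$, by the volumetric bound $(1+2/\varepsilon)^p$. Let $(u^*,v^*)$ attain the supremum. Choose $u_0\in\mathcal{N}$ and $v_0\in\mathcal{M}$ within distance $1/4$ of $u^*$ and $v^*$. Writing
\[
u^{*\top}\Gamma v^*=u_0^\top\Gamma v_0+(u^*-u_0)^\top\Gamma v^*+u_0^\top\Gamma(v^*-v_0),
\]
we get $\|\Gamma\|_{\mathrm{op}}\le \max_{\mathcal{N}\times\mathcal{M}}u^\top\Gamma v+\tfrac12\|\Gamma\|_{\mathrm{op}}$. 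Therefore $\|\Gamma\|_{\mathrm{op}}\le 2\max_{\mathcal{N}\times\mathcal{M}}u^\top\Gamma v$. A one-sided tail suffices here because the sup over the sphere is automatically nonnegative.

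\textbf{Union bound.} The union bound now gives
\[
\mathbb{P}\big[\|\Gamma\|_{\mathrm{op}}\ge 2t\big]\le 9^{n}e^{-t^2/(2\sigma^2)}.
\]
Set $t=3\sigma\sqrt{n}$. The right-hand side becomes $\exp\big(n(\ln 9-4.5)\big)$. Since $\ln 9\approx 2.197$, the exponent is at most $-2.3n\le -2n$. This gives the stated bound at threshold $6\sigma\sqrt{n}$.

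\textbf{Moments.} For \eqref{eq:subG-opnorm-moment} with $q\in\{1,2\}$, write
\[
\mathbb{E}\|\Gamma\|_{\mathrm{op}}^q=\int_0^\infty q x^{q-1}\,\mathbb{P}[\|\Gamma\|_{\mathrm{op}}\ge x]\,dx .
\]
Split the integral at $x_0=6\sigma\sqrt{n}$.
\begin{itemize}
\item On $[0,x_0]$, bound the probability by $1$; this contributes at most $(6\sigma)^q n^{q/2}$.
\item For $x=2\sigma s$ with $s\ge 3\sqrt n$, use $9^n e^{-s^2/2}\le e^{-s^2/4}$. This holds because $s^2/4\ge 9n/4>n\ln 9$. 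The tail integral is then at most a constant times $\sigma^q$, which is dominated by $\sigma^q n^{q/2}$ since $n\ge 1$.
\end{itemize}
Together these give $C_q\sigma^q n^{q/2}$.

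The only mildly delicate points are the numerical constants: the net radius must produce the factor $2$, and the exponent comparison must land at $6$ and $-2n$. The rest of the argument is routine.
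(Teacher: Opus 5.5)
Your proof is correct. The MGF factorisation over independent columns, the $1/4$-net reduction $\|\Gamma\|_{\mathrm{op}}\le 2\max_{\mathcal{N}\times\mathcal{M}}u^\top\Gamma v$ with $|\mathcal{N}||\mathcal{M}|\le 9^{p_1+p_2}$, the choice $t=3\sigma\sqrt{p_1+p_2}$ (exponent $(\ln 9-4.5)(p_1+p_2)\le-2(p_1+p_2)$) and the tail integration split at $6\sigma\sqrt{p_1+p_2}$ all check out. The paper gives no proof of its own and imports the lemma from \cite[Lemma~F.3]{rahmani2025fundamental}; your net-plus-union-bound argument is the standard route and reproduces exactly the stated constants, consistent with the $9^{p_1+p_2}$ net factor in the companion Lemma~\ref{lem:f1crosscov-conc}.
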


\subsection{Conditional Strong Data Processing Inequality (Conditional SDPI)}
The standard SDPI quantifies the contraction of KL divergence through
a Markov kernel: the SDPI coefficient
$s(P_X,T_{Y|X})\in[0,1]$ satisfies
$D_{\mathrm{KL}}(Q_Y\|P_Y)\leq s(P_X,T_{Y|X})\cdot
D_{\mathrm{KL}}(Q_X\|P_X)$ for all $Q_X\ll
P_X$, and is studied by~\cite{AG76,raginsky2016strong} amongst others.  In the vertical split, however, the channel from $X_1$ to $X_2$ depends on the unknown cross-covariance $C_{21}$, which plays the role of a latent state~$V$. This state-dependence means the standard SDPI does not directly apply. \cite{rahmani2025fundamental} introduced the \emph{Conditional SDPI}, which averages the contraction over the state
distribution.

To state it in the form used in this paper, consider a standardised
model where $X_1\sim N(0,I_{d_1})$ and
$X_2\mid(X_1,V\!=\!v)\sim N(A_vX_1,\;I_{d_2}-A_vA_v^\top)$,
so that the conditional relationship is the Gaussian channel
\begin{equation}\label{eq:channel}
    X_2 \;=\; A_v\,X_1 + Z_v,
    \qquad
    Z_v\sim N(0,\,I_{d_2}-A_vA_v^\top)
    \;\perp\!\!\!\perp\; X_1,
\end{equation}
where $V$ is a random variable indexing the cross-covariance
$A_v\in\mathbb{R}^{d_2\times d_1}$ with
$A_vA_v^\top\preceq I_{d_2}$, and $X_1\perp\!\!\!\perp V$.

\begin{theorem}[C-SDPI for Gaussian Mixtures;
{\cite[Theorem~3.3]{rahmani2025fundamental}}]\label{thm:csdpi}
Under the channel model~\eqref{eq:channel}, the conditional SDPI
coefficient equals
\begin{equation}\label{eq:csdpicoeff}
    s\big(P_{X_1},\,T_{X_2|X_1,V}
    \,\big|\,P_V\big)
    \;=\; \big\|\mathbb{E}\big[A_V^\top A_V\big]\big\|_{\mathrm{op}}.
\end{equation}
Moreover, the C-SDPI coefficient tensorises: for the full dataset
$\bX_1=(X_1^{(1)},\dots,X_1^{(m)})$ passed through $m$ independent
copies of the same channel with a fixed state~$V$,
\begin{equation}\label{eq:tensorise}
    s\big(P_{\bX_1},\,T_{\bX_2|\bX_1,V}
    \,\big|\,P_V\big)
    \;=\;
    s\big(P_{X_1},\,T_{X_2|X_1,V}
    \,\big|\,P_V\big)
    \;=\;
    \big\|\mathbb{E}\big[A_V^\top A_V\big]\big\|_{\mathrm{op}},
\end{equation}
independent of~$m$~\cite[Theorem~3.1]{rahmani2025fundamental}.
\end{theorem}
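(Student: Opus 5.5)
The plan is to prove the two directions of \eqref{eq:csdpicoeff} separately. I work with the definition of the conditional coefficient as the supremum over $Q_{X_1}\ll P_{X_1}$ of the ratio
\[
\frac{\mathbb{E}_{V}\big[D_{\mathrm{KL}}(Q_{X_1}T_{X_2|X_1,V}\,\|\,P_{X_1}T_{X_2|X_1,V})\big]}{D_{\mathrm{KL}}(Q_{X_1}\|P_{X_1})}.
\]
Under \eqref{eq:channel}, $P_{X_1}T_{X_2|X_1,v}=N(0,I_{d_2})$ for every $v$, so the reference output does not depend on the state. Tensorisation \eqref{eq:tensorise} will then follow from the single-letter identity applied to a block-diagonal channel.

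\emph{Lower bound.} I take the perturbations $Q^\epsilon=N(\epsilon u,I_{d_1})$ with $\|u\|_2=1$. Then $D_{\mathrm{KL}}(Q^\epsilon\|P_{X_1})=\epsilon^2/2$. For fixed $v$, the output is $N(\epsilon A_vu,I_{d_2})$, whose divergence from $N(0,I_{d_2})$ is $\epsilon^2\|A_vu\|^2/2$. Averaging over $V$, the ratio equals $u^\top\mathbb{E}[A_V^\top A_V]u$. Maximising over $u$ gives $\ge\|\mathbb{E}[A_V^\top A_V]\|_{\mathrm{op}}$.

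\emph{Upper bound.} This is the main step. I would use the de Bruijn / Fisher-information representation of KL with respect to a standard Gaussian along the Ornstein--Uhlenbeck flow:
\[
D_{\mathrm{KL}}(Q\|N(0,I))=\int_0^\infty \mathrm{tr}\,J(Q_t)\,dt,
\]
where $Q_t$ is $Q$ evolved by the OU semigroup and $J$ is the relative Fisher information matrix.

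The key structural fact is that the channel $x_1\mapsto A_vx_1+Z_v$ commutes with the OU flow: OU at time $t$ followed by the channel equals the channel followed by OU at time $t$, since both are Gaussian linear maps preserving the standard Gaussian. I would then prove a matrix Fisher-information inequality for linear maps with independent Gaussian noise,
\[
J\big((QT_v)_t\big)\preceq A_v\,J(Q_t)\,A_v^\top .
\]
The route is the conditional-expectation representation of the score of $A_vX+Z_v$ together with Jensen's inequality, i.e.\ the score of the output is the projection of $A_v$ applied to the input score.

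Taking traces gives $\mathrm{tr}\,J((QT_v)_t)\le\mathrm{tr}\big(A_v^\top A_vJ(Q_t)\big)$. Averaging over $V$, which is legitimate because $X_1\perp\!\!\!\perp V$ so $Q_t$ does not depend on $v$, gives
\[
\mathbb{E}_V\,\mathrm{tr}\,J\big((QT_V)_t\big)\le\mathrm{tr}\big(\mathbb{E}[A_V^\top A_V]J(Q_t)\big)\le\|\mathbb{E}[A_V^\top A_V]\|_{\mathrm{op}}\,\mathrm{tr}\,J(Q_t).
\]
The last step uses $J(Q_t)\succeq0$. Integrating in $t$ yields the upper bound. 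This averaging inside the trace is exactly what makes the coefficient the norm of the \emph{average}, rather than the average of norms.

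I expect the main obstacle to be justifying the matrix Fisher inequality and the commutation rigorously for degenerate $I-A_vA_v^\top$ and non-smooth $Q$. I would first treat $\|A_v\|_{\mathrm{op}}<1$ with smooth, finite-Fisher-information $Q$. I would then pass to the general case by approximation, using lower semicontinuity of KL and convergence of the OU regularisations.

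\emph{Tensorisation.} The $m$-sample channel with shared state $v$ is again of the form \eqref{eq:channel} with $X_1$ replaced by $\bX_1\sim N(0,I_{md_1})$ and $A_v$ replaced by $I_m\otimes A_v$. The single-letter theorem applies verbatim and gives coefficient $\|\mathbb{E}[(I_m\otimes A_V)^\top(I_m\otimes A_V)]\|_{\mathrm{op}}=\|I_m\otimes\mathbb{E}[A_V^\top A_V]\|_{\mathrm{op}}=\|\mathbb{E}[A_V^\top A_V]\|_{\mathrm{op}}$. This is independent of $m$. It sidesteps the chain-rule difficulty that past outputs carry information about the shared state $V$.
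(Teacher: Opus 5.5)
Your proof is correct. The paper gives no argument for this statement: both the single-letter identity and its $m$-independence are imported from Rahmani et al.\ (their Theorems~3.3 and~3.1). So your proposal is a self-contained derivation rather than a reconstruction of the paper's proof.

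\textbf{Comparison with the cited route.} Your tensorisation step needs no separate theorem. Since $\bX_2=(I_m\otimes A_v)\bX_1+\mathbf{Z}_v$ has noise covariance $I_{md_2}-(I_m\otimes A_v)(I_m\otimes A_v)^\top$, it is again an instance of \eqref{eq:channel}. The $m$-sample coefficient can therefore be read off the single-letter formula. This bypasses the obstruction you identify: past outputs are correlated with the shared state, so a chain-rule argument cannot apply the state-independent coefficient term by term. The price is that this step depends on the linear Gaussian structure, whereas the paper treats tensorisation as a separate property of the coefficient.

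Your Ornstein--Uhlenbeck argument also proves more than the KL statement. It gives the time-wise contraction $\mathbb{E}_V\,\mathrm{tr}\,J(Q_tT_V)\le\|\mathbb{E}[A_V^\top A_V]\|_{\mathrm{op}}\,\mathrm{tr}\,J(Q_t)$. It also makes transparent why the norm of the average, not the average of the norms, appears: $J(Q_t)$ does not depend on $V$.

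\textbf{Regularity.} The obstacle you flag is milder than you expect. Write $\gamma$ for the standard Gaussian. For $t>0$, $f_t=dQ_t/d\gamma$ is smooth, and $Q_t$ has finite relative Fisher information and finite second moment (finite $D(Q\|\gamma)$ already forces finite second moment). The score identity $\nabla\log\frac{d(Q_tT_v)}{d\gamma}(Y)=A_v\,\mathbb{E}[\nabla\log f_t(X)\mid Y]$ can be checked weakly against test functions $\psi$. Use Gaussian integration by parts in $X$ together with Stein's identity $\mathbb{E}[Z_v\psi(A_vX+Z_v)]=(I-A_vA_v^\top)\,\mathbb{E}[\nabla\psi(A_vX+Z_v)]$, which holds even for degenerate covariance. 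Hence no approximation in $\|A_v\|_{\mathrm{op}}$ or in the smoothness of $Q$ is needed. In the paper's application $\|A_v\|_{\mathrm{op}}=\delta\le 1/\sqrt2$ anyway.

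\textbf{Mutual-information form.} The paper uses the coefficient as $I(M_1;\bX_2\mid V,W)\le s\,I(M_1;\bX_1)$. This follows from your KL bound by writing $I(M_1;\bX_2\mid V,W)=\mathbb{E}_{M_1}\mathbb{E}_{V,W}\,D_{\mathrm{KL}}(P_{\bX_1\mid M_1}T_{V,W}\,\|\,P_{\bX_1}T_{V,W})$. That expansion is valid precisely because $(M_1,\bX_1)$ is independent of $(V,W)$, which is the same reason your definition takes $Q$ independent of the state.
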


\subsection{Linear Algebra, Statistics and Information Theory Facts}
\begin{lemma}[Schur Complement]~\label{lem:schur}
    Let $M=\begin{pmatrix}A&B\\C&D\end{pmatrix}$ with $A$ square and invertible.
    \begin{enumerate}[label=(\roman*)]
        \item Positive definiteness: If $M$ is symmetric (so in particular $C=B^T$), then $M\succ 0$ if and only if $A\succ 0$ and $D-B^TA^{-1}B\succ 0$.
        \item Determinant factorization: $\det(M)=\det(A)\cdot\det(D-CA^{-1}B)$.
    \end{enumerate}
\end{lemma}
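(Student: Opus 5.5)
The statement is the Schur complement lemma, Lemma~\ref{lem:schur}. I would prove part (ii) first, since it is a direct block factorization, and then use the same factorization, in congruence form, for part (i).

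For (ii), assume $A$ is invertible and write
\[
M=\begin{pmatrix}I&0\\ CA^{-1}&I\end{pmatrix}\begin{pmatrix}A&0\\0&D-CA^{-1}B\end{pmatrix}\begin{pmatrix}I&A^{-1}B\\0&I\end{pmatrix}.
\]
Multiplying the three blocks out recovers $M$: the top-right block is $A\cdot A^{-1}B=B$, and the bottom-right block is $CA^{-1}B+(D-CA^{-1}B)=D$. The outer factors are block unitriangular, so each has determinant $1$. The middle factor is block diagonal, with determinant $\det(A)\det(D-CA^{-1}B)$. Multiplicativity of the determinant then gives (ii).

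For (i), take $M$ symmetric, so that $C=B^\top$ and $A=A^\top$. Set $P=\begin{pmatrix}I&-A^{-1}B\\0&I\end{pmatrix}$. This matrix is invertible, and the factorization above specializes to the congruence
\[
P^\top MP=\begin{pmatrix}A&0\\0&D-B^\top A^{-1}B\end{pmatrix}.
\]
Since $P$ is invertible, the map $x\mapsto Px$ is a bijection of nonzero vectors. Hence $x^\top Mx>0$ for all $x\neq0$ holds if and only if $y^\top (P^\top MP)y>0$ for all $y\neq0$. In other words, $M\succ0$ if and only if the block-diagonal matrix is positive definite.

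A block-diagonal symmetric matrix is positive definite exactly when each diagonal block is. One direction comes from testing vectors of the form $(y_1,0)$ and $(0,y_2)$. The other comes from the identity $y^\top\mathrm{diag}(A,S)y=y_1^\top Ay_1+y_2^\top Sy_2$, where $S=D-B^\top A^{-1}B$. This gives (i).

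There is no real obstacle here. The only step needing care is verifying the block product in the factorization, where one must use $A^\top=A$, i.e.\ $(A^{-1})^\top=A^{-1}$, to confirm that the left factor equals $P^\top$ (with $P^{-1}$ on the right) in the symmetric case.
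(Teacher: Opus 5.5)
Your argument is correct and is the standard proof. The paper gives no proof of its own: it lists the Schur complement lemma among standard linear-algebra facts. Your block-LDU factorization, followed by the congruence $P^\top MP=\mathrm{diag}(A,\,D-B^\top A^{-1}B)$, is exactly the justification it implicitly relies on.

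One small wording fix in your last paragraph. Strictly, the left factor of the factorization is $(P^\top)^{-1}=(P^{-1})^\top$, not $P^\top$; the two differ in the sign of the $B^\top A^{-1}$ block. It is $A^\top=A$ that makes the left factor the transpose of the right factor $P^{-1}$, and this yields the congruence.
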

\begin{lemma}[Eigenvalues of Symmetric Block Matrices; {\cite[Lemma~A.1]{rahmani2025fundamental}}]~\label{lem:blockeig}
    Let $A\in\mathbb{R}^{m\times n}$ with singular value decomposition $A=\sum_{i=1}^r\sigma_iu_iv_i^T$. Then the matrix
    \[
        B=\begin{pmatrix}0&A\\A^T&0\end{pmatrix}\in\mathbb{R}^{(m+n)\times(m+n)}
    \]
    has eigenvalues $\{\pm\sigma_i\}_{i=1}^r$ and $0$ with multiplicity $m+n-2r$.
\end{lemma}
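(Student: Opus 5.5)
The plan is to exhibit a complete orthonormal eigenbasis of $B$ directly from the singular value decomposition of $A$. Write $A=\sum_{i=1}^r\sigma_iu_iv_i^\top$ with $\sigma_i>0$, orthonormal left singular vectors $u_1,\dots,u_r\in\mathbb{R}^m$ and orthonormal right singular vectors $v_1,\dots,v_r\in\mathbb{R}^n$. Then $Av_i=\sigma_iu_i$ and $A^\top u_i=\sigma_iv_i$. For each $i$ I will form the two vectors
\[
w_i^{\pm}:=\tfrac{1}{\sqrt2}\begin{pmatrix}u_i\\ \pm v_i\end{pmatrix}\in\mathbb{R}^{m+n}.
\]
A one-line block computation gives
\[
Bw_i^{\pm}=\tfrac{1}{\sqrt2}\begin{pmatrix}\pm Av_i\\ A^\top u_i\end{pmatrix}=\pm\sigma_i w_i^{\pm}.
\]
So $\pm\sigma_i$ are eigenvalues of $B$.

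Next I will check that the $2r$ vectors $\{w_i^{\pm}\}$ are orthonormal. For any $i,j$ and signs $\epsilon,\epsilon'\in\{\pm1\}$,
\[
\langle w_i^{\epsilon},w_j^{\epsilon'}\rangle=\tfrac12\bigl(\langle u_i,u_j\rangle+\epsilon\epsilon'\langle v_i,v_j\rangle\bigr).
\]
This equals $\delta_{ij}$ when $\epsilon=\epsilon'$, and $0$ when $\epsilon\neq\epsilon'$. This step is where one must be careful: it uses orthonormality of both singular-vector families, and it remains valid when singular values repeat.

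For the remaining eigenvalue, I will describe the null space. Extend $\{u_i\}$ to an orthonormal basis of $\mathbb{R}^m$ by adding vectors $u_{r+1},\dots,u_m$, which span $\ker A^\top$. Similarly extend $\{v_i\}$ by adding $v_{r+1},\dots,v_n$, which span $\ker A$. The vectors $(u_k;0)$ for $k>r$ and $(0;v_l)$ for $l>r$ satisfy $B(\cdot)=0$. There are $(m-r)+(n-r)=m+n-2r$ of them, they are orthonormal, and they are orthogonal to every $w_i^{\pm}$ because $u_k\perp u_i$ and $v_l\perp v_i$.

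Altogether this gives $m+n$ orthonormal eigenvectors of the symmetric $(m+n)\times(m+n)$ matrix $B$, hence a full eigenbasis. The spectrum, counted with multiplicity, is therefore exactly $\{\pm\sigma_i\}_{i=1}^r$ together with $0$ of multiplicity $m+n-2r$.

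No step is a serious obstacle. The only point needing care is the multiplicity bookkeeping: the count must be exhaustive so that no other eigenvalues can appear, which the dimension count $2r+(m+n-2r)=m+n$ settles. As a cross-check, $B^2=\mathrm{diag}(AA^\top,A^\top A)$ has eigenvalues $\sigma_i^2$, each twice, and $0$ with multiplicity $m+n-2r$. This is consistent with the conclusion, though it alone would not fix the signs.
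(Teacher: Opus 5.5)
Your proof is correct and complete. The eigenvectors $\tfrac{1}{\sqrt2}(u_i;\pm v_i)$, together with the orthonormal bases of $\ker A^\top$ and $\ker A$, form a full orthonormal eigenbasis of $B$, which fixes the multiplicities exactly (using, as you note, that $\sigma_i>0$, so $r=\operatorname{rank}A$). The paper gives no proof of its own and simply cites \cite[Lemma~A.1]{rahmani2025fundamental}; your construction is the standard argument for that result.
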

\begin{lemma}[Gaussian Conditioning Formula]~\label{lem:gaussiancond}
    Let $(X_1,X_2)\sim N(0,C)$ with $C=\begin{pmatrix}C_{11}&C_{12}\\C_{21}&C_{22}\end{pmatrix}$ and $C_{11}\succ 0$. Then
    \[
        X_2\mid X_1\sim N\big(C_{21}C_{11}^{-1}X_1,\;C_{22}-C_{21}C_{11}^{-1}C_{12}\big).
    \]
\end{lemma}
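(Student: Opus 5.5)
Since $X_1\sim N(0,C_{11})$ with $C_{11}\succ0$, the standard route is to write $(X_1,X_2)$ as an affine transform of $(X_1,W)$, where $W$ is an independent Gaussian. We then read off the conditional law directly and avoid any density manipulations.

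\textbf{Step 1 (construct the residual).} Set $W:=X_2-C_{21}C_{11}^{-1}X_1$. The pair $(X_1,W)$ is a linear image of the jointly Gaussian vector $(X_1,X_2)$, so it is jointly Gaussian and has mean zero.

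\textbf{Step 2 (show $W$ is independent of $X_1$).} We have
\[
\mathbb{E}[WX_1^\top]=C_{21}-C_{21}C_{11}^{-1}C_{11}=0 .
\]
For jointly Gaussian vectors, zero cross-covariance is equivalent to independence, so $W\perp\!\!\!\perp X_1$.

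\textbf{Step 3 (covariance of $W$).} Expanding,
\[
\mathrm{Cov}(W)=C_{22}-C_{21}C_{11}^{-1}C_{12}-C_{21}C_{11}^{-1}C_{12}+C_{21}C_{11}^{-1}C_{11}C_{11}^{-1}C_{12}.
\]
This uses $C_{12}=C_{21}^\top$ and the symmetry of $C_{11}^{-1}$, and it simplifies to the Schur complement $C_{22}-C_{21}C_{11}^{-1}C_{12}$. This matrix is positive semidefinite because it is a covariance. Note that it may be singular, in which case the conditional law is a degenerate Gaussian. We therefore phrase everything in terms of distributions rather than densities.

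\textbf{Step 4 (conclude).} We have $X_2=C_{21}C_{11}^{-1}X_1+W$ with $W$ independent of $X_1$. Hence, conditionally on $X_1=x$, the law of $X_2$ is that of $C_{21}C_{11}^{-1}x+W$, namely
\[
N\bigl(C_{21}C_{11}^{-1}x,\;C_{22}-C_{21}C_{11}^{-1}C_{12}\bigr).
\]
To make this rigorous via the regular conditional distribution, observe that for bounded measurable $f$,
\[
\mathbb{E}[f(X_2)g(X_1)]=\mathbb{E}\Bigl[g(X_1)\int f\bigl(C_{21}C_{11}^{-1}X_1+w\bigr)\,dP_W(w)\Bigr]
\]
by Fubini, using independence.

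The only delicate point is the equivalence between zero correlation and independence in Step 2. This holds because $(X_1,W)$ is jointly Gaussian, being a linear map of $(X_1,X_2)$. Everything else is routine algebra.
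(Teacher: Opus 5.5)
Your proof is correct and complete. The residual $W=X_2-C_{21}C_{11}^{-1}X_1$ is jointly Gaussian with $X_1$ and uncorrelated with it, hence independent of it, and its covariance is the Schur complement. Your Fubini identity (taken over bounded measurable $f$ and $g$) then identifies $x\mapsto N\bigl(C_{21}C_{11}^{-1}x,\,C_{22}-C_{21}C_{11}^{-1}C_{12}\bigr)$ as a regular conditional distribution.

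The paper records this lemma as a standard fact without proof, so there is no argument to compare against. Your route actually gives more than the conditional law: it produces the representation $X_2=C_{21}C_{11}^{-1}X_1+W$ with $W\perp\!\!\!\perp X_1$ on the same probability space. That is exactly the form the paper uses in Section~\ref{sec:cross-comm} to put the hypotheses into the channel model~\eqref{eq:channel}. Your care about a possibly singular Schur complement costs nothing there, since $\delta^2\le 1/2$ keeps it nonsingular.
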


\begin{theorem}[Entropy of a Multivariate Normal Distribution;
{\cite[Theorem~8.4.1]{CoverThomas2006}}]
    \label{thm:gaussianentropy}
Let $X_1,X_2,\dots,X_n$ have a multivariate normal distribution
with mean $\mu$ and positive definite covariance matrix $K$.  Then
    $$h(X_1,X_2,\dots,X_n)
    = h(N_n(\mu,K))
    = \frac{1}{2}\log\big((2\pi e)^n|K|\big),$$
where $|K|$ denotes the determinant of $K$.
\end{theorem}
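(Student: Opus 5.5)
We compute the differential entropy directly from the density. Since $K\succ 0$, the vector $X=(X_1,\dots,X_n)^\top$ has the density
\[
    f(x)=\frac{1}{(2\pi)^{n/2}|K|^{1/2}}
    \exp\!\Big(-\tfrac12 (x-\mu)^\top K^{-1}(x-\mu)\Big),
    \qquad x\in\mathbb{R}^n .
\]
We then use the definition $h(X)=-\mathbb{E}[\log f(X)]$. Throughout, $\log$ is base~$2$ as fixed in the Notation section, so the exponent must be converted via $\log e^{y}=y\log e$.

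\textbf{Step 1: write out $-\log f$.} Taking $-\log$ of the density gives
\[
    -\log f(x)=\tfrac{n}{2}\log(2\pi)+\tfrac12\log|K|
    +\tfrac12 (x-\mu)^\top K^{-1}(x-\mu)\,\log e .
\]
The first two terms are constants. Only the quadratic form needs to be averaged.

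\textbf{Step 2: average the quadratic form by the trace trick.} A scalar equals its own trace, and trace is linear and cyclic. Hence
\[
    \mathbb{E}\big[(X-\mu)^\top K^{-1}(X-\mu)\big]
    =\mathbb{E}\,\mathrm{tr}\big(K^{-1}(X-\mu)(X-\mu)^\top\big)
    =\mathrm{tr}\big(K^{-1}\,\mathbb{E}[(X-\mu)(X-\mu)^\top]\big)
    =\mathrm{tr}(K^{-1}K)=n .
\]
Exchanging expectation and trace is justified because $X$ has finite second moments. The mean $\mu$ drops out entirely, which reflects the translation invariance of differential entropy.

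\textbf{Step 3: collect terms.} Substituting Step 2 into Step 1 yields
\[
    h(X)=\tfrac{n}{2}\log(2\pi)+\tfrac12\log|K|+\tfrac{n}{2}\log e
    =\tfrac12\log\big((2\pi e)^n|K|\big).
\]
This is the claimed formula. The same computation identifies $h(X)$ with $h(N_n(\mu,K))$, since the entropy depends only on the law of $X$.

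There is no genuine obstacle in this argument. The only points needing care are the base-$2$ conversion of the exponential term, which produces the factor $e$ inside the logarithm, and the invertibility of $K$, which is needed for the density to exist. If $K$ were singular, the distribution would have no Lebesgue density and the formula would degenerate to $-\infty$; the positive-definiteness hypothesis rules this out.
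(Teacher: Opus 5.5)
Your proof is correct and is essentially the standard argument of the cited source (Cover--Thomas, Theorem~8.4.1), which the paper invokes without giving its own proof. That argument likewise computes $-\mathbb{E}[\log f(X)]$ from the Gaussian density and uses $\mathbb{E}\big[(X-\mu)^\top K^{-1}(X-\mu)\big]=n$; your base-$2$ conversion producing the factor $e$ inside the logarithm is handled correctly and matches the paper's convention that $\log$ is base~$2$.
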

\subsection{Signed Permutation Matrices}
\begin{definition}[Partial signed permutation matrix]
A \emph{partial signed permutation matrix} of size $s'$ in
$\R^{d_2 \times d_1}$ is a matrix
\begin{equation}\label{eq:pp-def}
    E = \sum_{k=1}^{s'} \eta_ke_{j_k} e_{i_k}^T,\hspace{5mm}\eta_k\in\{-1,+1\}
\end{equation}
where $i_1, \ldots, i_{s'} \in [d_1]$ are distinct and
$j_1, \ldots, j_{s'} \in [d_2]$ are distinct. Such a matrix has
entries in $\{-1,0,+1\}$, exactly $s'$ nonzeros, and at most one nonzero
per row and per column. We require $s' \leq \dmin := d_1 \wedge d_2$.
\end{definition}

\begin{lemma}[Lemma 2.2 in \cite{rahmani2025fundamental}]~\label{lem:signedpermmat}
    Let $A$ be a random matrix drawn uniformly from the set of signed permutation matrices $\mathcal{P}_d$. Then, for any fixed matrix $B\in\mathbb{R}^{d\times d}$, the following holds:
    \begin{equation}~\label{eq:signedpermmat}
        \mathbb{E}[A^TBA]=\frac{\text{Tr}\{B\}}{d}I_d.
    \end{equation}
\end{lemma}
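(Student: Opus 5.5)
This is Lemma~2.2 of \cite{rahmani2025fundamental}: for $A$ uniform on the signed permutation matrices, $\mathbb{E}[A^TBA]=\frac{\mathrm{Tr}\{B\}}{d}I_d$. The plan is to compute the entries of $\mathbb{E}[A^TBA]$ directly, using the product structure of the uniform law on $\mathcal{P}_d$. Write $A=D_\eta P_\pi$, where $P_\pi$ is the permutation matrix with $P_\pi e_i=e_{\pi(i)}$ and $D_\eta=\mathrm{diag}(\eta_1,\dots,\eta_d)$. Uniformity on $\mathcal{P}_d$ means that $\pi$ is uniform on the symmetric group, the signs $\eta_1,\dots,\eta_d$ are i.i.d.\ Rademacher, and $\pi$ is independent of $\eta$. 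This holds because the map $(\pi,\eta)\mapsto D_\eta P_\pi$ is a bijection onto $\mathcal{P}_d$, and $\mathcal{P}_d$ has $2^d d!$ elements.

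Next I will compute the entries. From $Ae_i=\eta_{\pi(i)}e_{\pi(i)}$ we get
\[
(A^TBA)_{ik}=e_i^TA^TBAe_k=\eta_{\pi(i)}\eta_{\pi(k)}B_{\pi(i)\pi(k)}.
\]
I then condition on $\pi$ and average over $\eta$.

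For $i\neq k$, the indices $\pi(i)$ and $\pi(k)$ are distinct, so the signs $\eta_{\pi(i)}$ and $\eta_{\pi(k)}$ are independent with mean zero. Hence $\mathbb{E}[\eta_{\pi(i)}\eta_{\pi(k)}\mid\pi]=0$, and every off-diagonal entry has expectation zero.

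For $i=k$, we have $\eta_{\pi(i)}^2=1$, so the entry reduces to $\mathbb{E}[B_{\pi(i)\pi(i)}]$. Since $\pi(i)$ is uniform on $[d]$, this equals $\frac1d\sum_{j}B_{jj}=\mathrm{Tr}\{B\}/d$.

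Combining the off-diagonal and diagonal cases gives the identity. The only step that needs care is justifying the factorization of the uniform measure into independent $\pi$ and $\eta$. The bijection argument settles it. Note also that the sign averaging alone kills the off-diagonal terms, so no symmetry assumption on $B$ is needed.
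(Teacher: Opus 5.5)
Your proof is correct and complete. The map $(\pi,\eta)\mapsto D_\eta P_\pi$ is a bijection from $S_d\times\{\pm1\}^d$ onto $\mathcal{P}_d$, so uniformity on $\mathcal{P}_d$ does give $\pi$ uniform, independent of i.i.d.\ Rademacher signs. The entry formula $(A^TBA)_{ik}=\eta_{\pi(i)}\eta_{\pi(k)}B_{\pi(i)\pi(k)}$ is right. The two cases then go through as you describe, with no symmetry assumption on $B$.

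The paper does not prove this lemma; it imports it from \cite{rahmani2025fundamental}. So there is no in-paper argument to compare against, and your entrywise computation serves as a self-contained justification.

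An equivalent coordinate-free route uses invariance. Set $M:=\mathbb{E}[A^TBA]$. Since $\mathcal{P}_d$ is a group and $AQ$ is again uniform on it, $Q^TMQ=M$ for every $Q\in\mathcal{P}_d$. Taking $Q$ to be diagonal sign matrices forces the off-diagonal entries of $M$ to vanish. Taking $Q$ to be permutation matrices forces the diagonal entries to be equal. Finally, $\mathrm{Tr}(M)=\mathbb{E}[\mathrm{Tr}(BAA^T)]=\mathrm{Tr}(B)$ fixes the common value.

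That version avoids parametrizing $\mathcal{P}_d$ explicitly. Yours instead makes visible the separate roles of the random signs, which kill the cross terms, and the random permutation, which averages the diagonal.
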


\subsection{Vectorization of Matrices}
\begin{definition}[Vectorization]~\label{def:vectorisation}
The bijection $\vecop: \R^{d_2 \times d_1} \to \R^{d_1 d_2}$ stacks
the columns of a matrix into a single vector. For any
$E, E' \in \R^{d_2 \times d_1}$:
\begin{equation}\label{eq:vec-isom}
    \|E - E'\|_F = \|\vecop(E) - \vecop(E')\|_2.
\end{equation}
This is the Frobenius--$\ell_2$ isometry
(This follows from {\cite[Appendix~A.6]{rahmani2025fundamental}}).
\end{definition}

\subsection{Packing, Covering, and Quantization~\label{sec:packing}}
\begin{definition}[Packing and Covering Numbers; {\cite[Definitions~5.1, 5.4]{Wainwright2019HDS}}]~\label{def:packingcovering}
    Let $(T,\rho)$ be a metric space.
    \begin{enumerate}[label=(\roman*)]
        \item A $\delta$-\emph{packing} of $T$ is a set $\{x_1,\dots,x_N\}\subset T$ such that $\rho(x_i,x_j)\geq\delta$ for all $i\neq j$. The \emph{packing number} $M(\delta;T,\rho)$ is the largest cardinality of a $\delta$-packing.
        \item A $\delta$-\emph{covering} (or $\delta$-\emph{net}) of $T$ is a set $\{x_1,\dots,x_N\}\subset T$ such that for every $x\in T$, there exists $x_i$ with $\rho(x,x_i)\leq\delta$. The \emph{covering number} $N(\delta;T,\rho)$ is the smallest cardinality of a $\delta$-covering.
    \end{enumerate}
\end{definition}
\begin{lemma}[Matrix Quantization via Covering Nets; {\cite[Appendix~A.6.1]{rahmani2025fundamental}}]~\label{lem:quantization}
    Let $A\in\mathbb{R}^{p\times q}$ with $\|A\|_{op}\leq r$. Using an $\omega$-covering of the operator-norm ball $\mathcal{B}^{p\times q}_{\|\cdot\|_{op}}(r)$ encoded with $B$ bits, one can produce a quantized matrix $\hat{A}$ satisfying
    \[
        \|\hat{A}-A\|_{op}\leq\omega,\qquad\text{where }\omega=3r\cdot 2^{-B/(pq)},
    \]
    provided $B\geq pq\log_2(3r/\omega)$.
\end{lemma}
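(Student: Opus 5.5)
The plan is the standard volumetric covering argument applied to the finite-dimensional normed space $(\mathbb{R}^{p\times q},\|\cdot\|_{\mathrm{op}})$. We view this space as $\mathbb{R}^{n}$ with $n=pq$. The operator norm is a genuine norm there, so its balls are convex, symmetric and of positive Lebesgue volume, with $\mathrm{vol}(\mathcal{B}^{p\times q}_{\|\cdot\|_{\mathrm{op}}}(\rho))=\rho^{n}\,\mathrm{vol}(\mathcal{B}^{p\times q}_{\|\cdot\|_{\mathrm{op}}}(1))$. The encoder works as follows. Fix an $\omega$-net $\mathcal{N}$ of $\mathcal{B}^{p\times q}_{\|\cdot\|_{\mathrm{op}}}(r)$, index its elements by $B$-bit strings, and send the index of a nearest net point $\hat{A}\in\mathcal{N}$ to the given $A$. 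By the definition of a covering (Definition~\ref{def:packingcovering}(ii)), $\|\hat{A}-A\|_{\mathrm{op}}\le\omega$. It therefore suffices to show that an $\omega$-net exists whose cardinality is at most $2^{B}$, under the stated condition on $B$.

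First, bound the covering number. Take a maximal $\omega$-packing $\{x_1,\dots,x_N\}$ of the ball $\mathcal{B}(r)$, as in Definition~\ref{def:packingcovering}(i). By maximality it is also an $\omega$-covering: any point at distance greater than $\omega$ from every $x_i$ could be added to the packing, contradicting maximality. The balls $x_i+\mathcal{B}(\omega/2)$ are pairwise disjoint, and each is contained in $\mathcal{B}(r+\omega/2)$. Comparing volumes gives
\[
N\,(\omega/2)^{n}\le (r+\omega/2)^{n},
\qquad\text{so}\qquad
N(\omega;\mathcal{B}(r),\|\cdot\|_{\mathrm{op}})\le\Big(1+\frac{2r}{\omega}\Big)^{pq}\le\Big(\frac{3r}{\omega}\Big)^{pq},
\]
where the last inequality uses $\omega\le r$.

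Next, match the net to the bit budget. A net of size at most $(3r/\omega)^{pq}$ can be indexed by $B$ bits whenever $2^{B}\ge(3r/\omega)^{pq}$, that is, whenever $B\ge pq\log_2(3r/\omega)$, which is exactly the proviso in the lemma. Solving this with equality gives $\omega=3r\cdot2^{-B/(pq)}$, the stated accuracy.

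The only delicate point is the regime where the volumetric step needs $\omega\le r$, equivalently $B\ge pq\log_2 3$. If $\omega>r$ instead, we use the trivial net $\{0\}$. It needs zero bits, and it gives $\|0-A\|_{\mathrm{op}}\le r<\omega$ directly, so the claim holds in every case. I expect no real obstacle beyond checking this case split and verifying that the volume-scaling identity is valid for an arbitrary norm. The identity holds because scaling by $\rho$ in $\mathbb{R}^{n}$ multiplies Lebesgue measure by $\rho^{n}$.
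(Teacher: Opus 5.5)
Your proposal is correct and is essentially the standard volumetric covering argument, namely $N(\omega;\mathcal{B}(r),\|\cdot\|_{\mathrm{op}})\le(1+2r/\omega)^{pq}\le(3r/\omega)^{pq}$ for $\omega\le r$ followed by indexing the net with $2^{B}$ messages, which is what the paper imports without proof from \cite[Appendix~A.6.1]{rahmani2025fundamental}. Your separate treatment of the regime $B<pq\log_2 3$ (where $\omega>r$ and the trivial net $\{0\}$ suffices) is a careful detail the paper's statement glosses over. The only cosmetic fix is in the disjointness claim: use open balls of radius $\omega/2$, or note that the closed ones meet only on their boundaries (a null set), since packing points may be at distance exactly $\omega$.
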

% ============================================================
%  §3.2 Fano's Inequality and the Averaged Fano Method
%  Detailed version — replaces the corresponding subsection
%  in sections23_v2.tex
%
%  Notation: non-bold X_k = single sample, bold \bX_k = dataset
% ============================================================

\subsection{Fano's Inequality and the Averaged Fano Method}%
\label{sec:fano}

We now recall Fano's method for minimax lower bounds and develop the two specialisations used in this paper: a basic version (without auxiliary randomness) used for the cross-covariance sample lower bound in Appendix~\ref{app:crosssample}, and an \emph{averaged} version (with auxiliary randomness $W$) used for the cross-covariance communication lower bound.

\subsubsection*{General Setup}

Let $\mathcal{P}$ be a family of distributions, and let $\theta\colon\mathcal{P}\to\Theta$ be a parameter of interest residing in a metric space $(\Theta,\|\cdot\|)$ (in our case, $\Theta$ is the space of covariance matrices or cross-covariance matrices equipped with the Frobenius norm). The goal is to approximate $\theta(P)$ for an unknown $P\in\mathcal{P}$ from data $X\sim P$.

A finite subset $\mathcal{P}_{\mathcal{V}}=\{P_v\}_{v\in\mathcal{V}}\subset\mathcal{P}$ is called \emph{$2\delta$-separated} if $\|\theta(P_v)-\theta(P_{v'})\|\geq 2\delta$ for all $v\neq v'\in\mathcal{V}$.  Fano's method reduces the estimation problem to a hypothesis testing problem: if no test can reliably distinguish which $P_v$ generated the data, then no estimator can achieve error smaller than~$\delta$.

\subsubsection*{Standard Fano Inequality}

\begin{lemma}[Fano's Inequality;
{\cite[Corollary~15.12]{Wainwright2019HDS}}]\label{lem:fano}
Let $\mathcal{P}_{\mathcal{V}}=\{P_v\}_{v\in\mathcal{V}} \subset\mathcal{P}$ be $2\delta$-separated. Let $V$ be drawn uniformly from $\mathcal{V}$, and given $V=v$, let $X\sim P_v$.  Assume $|\mathcal{V}|\geq 2$. Then
\begin{equation}\label{eq:fano}
    \inf_{\hat{\theta}}\;\sup_{P\in\mathcal{P}}\;
    \mathbb{E}_P\big[\|\hat{\theta}(X)-\theta(P)\|\big]
    \;\geq\; \delta\left(1-\frac{I(V;\,X)+\log 2}
    {\log|\mathcal{V}|}\right).
\end{equation}
\end{lemma}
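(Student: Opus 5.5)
The plan is the standard two-step argument: reduce estimation to multiway hypothesis testing over the $2\delta$-separated family $\mathcal{P}_{\mathcal{V}}$, then lower bound the testing error with Fano's inequality for discrete random variables.

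\emph{Step 1 (estimation to testing).} Fix any estimator $\hat\theta$ and define the minimum-distance test $\psi(X):=\arg\min_{v\in\mathcal{V}}\|\hat\theta(X)-\theta(P_v)\|$, breaking ties arbitrarily. Suppose $\|\hat\theta(X)-\theta(P_v)\|<\delta$. Then for every $v'\neq v$ the triangle inequality and $2\delta$-separation give $\|\hat\theta(X)-\theta(P_{v'})\|>2\delta-\delta=\delta$. Hence $\psi(X)=v$. Contrapositively, $\{\psi(X)\neq v\}\subseteq\{\|\hat\theta(X)-\theta(P_v)\|\geq\delta\}$. Markov's inequality then gives $\mathbb{E}_{P_v}\|\hat\theta-\theta(P_v)\|\geq\delta\,P_v(\psi\neq v)$. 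Bound the supremum over $\mathcal{P}\supseteq\mathcal{P}_{\mathcal{V}}$ below by the uniform average over $v$. This yields
\[
\sup_{P\in\mathcal{P}}\mathbb{E}_P\|\hat\theta-\theta(P)\|\;\geq\;\delta\,\mathbb{P}\big(\psi(X)\neq V\big),
\]
with $V$ uniform on $\mathcal{V}$.

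\emph{Step 2 (Fano for the test).} Let $p_e:=\mathbb{P}(\psi(X)\neq V)$. The classical Fano inequality gives
\[
H(V\mid\psi)\leq h_2(p_e)+p_e\log(|\mathcal{V}|-1)\leq \log 2+p_e\log|\mathcal{V}|.
\]
Since $V\to X\to\psi(X)$ is a Markov chain, the data processing inequality gives $H(V\mid\psi)=\log|\mathcal{V}|-I(V;\psi)\geq\log|\mathcal{V}|-I(V;X)$. Rearranging yields
\[
p_e\geq 1-\frac{I(V;X)+\log 2}{\log|\mathcal{V}|}.
\]
With the paper's base-2 convention, $\log 2=1$, so the constants are consistent. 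Combining with Step 1 and taking the infimum over $\hat\theta$ gives \eqref{eq:fano}.

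No step is a real obstacle here; the only care points are bookkeeping. First, the tie-breaking in $\psi$ must be measurable. Second, the strict versus non-strict inequality in Step 1 must be handled as above. Third, $\log 2$ and the logarithm base must be used consistently with the entropy units.
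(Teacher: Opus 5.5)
Your proposal is correct and is essentially the standard argument the paper relies on: the paper gives no proof of its own and defers to \cite{Wainwright2019HDS}, where the bound is obtained by reducing estimation to testing through the minimum-distance test and Markov's inequality, then applying the discrete Fano inequality with the data-processing bound $I(V;\psi(X))\le I(V;X)$. Your note that $\log 2=1$ under the paper's base-2 convention matches the paper's later use of this lemma, where the $\log 2$ term appears as $+1$.
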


\noindent We refer the reader to \cite{Wainwright2019HDS} for the proof.

\subsubsection*{Application to the DCME/DCCME Setting (Without $W$)}
% To apply Fano's inequality to our distributed setting, we need to relate the mutual information $I(V;X)$ in~\eqref{eq:fano} to the communication constraints $B_1,B_2$.  In the DCME setting, the data $X$ consists of the pair $(\bX_1,\bX_2)$ (the full datasets of both agents), and the estimator $\hat{\theta}$ has access to these only through the encoded messages $M_1=\mathcal{E}_1(\bX_1)$ and $M_2=\mathcal{E}_2(\bX_2)$.  The key observation is the Markov chain
% \begin{equation}\label{eq:markov}
%     V \;\longrightarrow\; (\bX_1,\bX_2)
%     \;\longrightarrow\; (M_1,M_2)
%     \;\longrightarrow\; \hat{C},
% \end{equation}
% which gives, by the data processing inequality, $I(V;\hat{C})\leq I(V;\,M_1,M_2)\leq I(V;\,\bX_1,\bX_2)$.Substituting into~\eqref{eq:fano}, we obtain the following.
Similar to \cite{rahmani2025fundamental}, the following variant of Fano's Inequality is utilized:
\begin{lemma}[Fano for DCME/DCCME]\label{lem:fanodcme}
Let $\mathcal{P}_{\mathcal{V}}=\{P_v\}_{v\in\mathcal{V}}
\subset\mathcal{P}_s\subset\mathrm{subG}^{(d)}(\sigma)$ be a family of $|\mathcal{V}|$ distributions with covariance matrices $C_v$ and cross-covariance matrices $D_v:=C_{v,21}$. Define the pairwise separations
\[
    \rho_F
    := \inf_{\substack{(v,v')\in\mathcal{V}^2\\v\neq v'}}
    \|C_v-C_{v'}\|_F,
    \qquad
    \rho_F^{(\mathrm{cross})}
    := \inf_{\substack{(v,v')\in\mathcal{V}^2\\v\neq v'}}
    \|D_v-D_{v'}\|_F.
\]
Let $V$ be uniform on $\mathcal{V}$, and given $V=v$, let $\bZ=\{Z^{(i)}\}_{i=1}^m\overset{\mathrm{i.i.d.}}{\sim}P_v$, with agents observing $\bX_1$ and $\bX_2$ respectively.  Fix any DCME (respectively, DCCME) scheme with parameters $(\sigma,m,d_{1:2},B_{1:2})$, and let $M_k=\mathcal{E}_k(\mathbf{X}_k)$, $k=1,2$. Assume $|\mathcal{V}|\geq 2$. We have
\begin{align}
     \sup_{P\in\mathcal{P}_s}
\mathbb{E}\!\left[\big\|\hat{C}-C\big\|_F \right] &\;\geq\; \frac{\rho_F}{2}
    \left(1-\frac{I(V;\,M_1,M_2)+\log 2}
    {\log|\mathcal{V}|}\right),
    \label{eq:fanodcme}\\[4pt]
    \sup_{P\in\mathcal{P}_s}
\mathbb{E}\!\left[\big\|\hat{C}_{21}-C_{21}\big\|_F\right] &\;\geq\;
    \frac{\rho_F^{(\mathrm{cross})}}{2}
    \left(1-\frac{I(V;\,M_1,M_2)+\log 2}
    {\log|\mathcal{V}|}\right).
    \label{eq:fanodccme}
\end{align}
\end{lemma}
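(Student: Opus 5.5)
The plan is to reduce the distributed problem to the centralized Fano bound of Lemma~\ref{lem:fano}, applied to the random variable $X:=(M_1,M_2)$. Once the estimator is viewed as a function of the messages, the two displayed inequalities follow directly.

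Any DCME scheme yields $\hat C=\mathcal{D}(M_1,M_2)=\mathcal{D}(\mathcal{E}_1(\bX_1),\mathcal{E}_2(\bX_2))$, so it is a measurable function of $(M_1,M_2)$. Given $V=v$, the pair $(M_1,M_2)$ has a well-defined law $Q_v$, namely the pushforward of $P_v^{\otimes m}$ under the fixed encoders. We therefore have the Markov chain
\[
V\to\bZ\to(\bX_1,\bX_2)\to(M_1,M_2)\to\hat C .
\]
Set $\delta:=\rho_F/2$. The hypotheses $\{P_v\}$ are then $2\delta$-separated in the parameter $\theta(P_v)=C_v$ under $\|\cdot\|_F$, since $\|C_v-C_{v'}\|_F\ge\rho_F=2\delta$ for every pair $v\neq v'$.

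Next we run the Fano argument with the observation $X=(M_1,M_2)$. For a fixed scheme, the estimator $\hat\theta(X)=\mathcal{D}(X)$ is one particular estimator. Lemma~\ref{lem:fano} lower bounds the infimum over all estimators, so it also lower bounds the worst case of this particular one. The supremum in the lemma is over $P\in\mathcal P_{\mathcal V}$, and this is at most the supremum over $\mathcal P_s$ because $\mathcal P_{\mathcal V}\subset\mathcal P_s$. Together these give~\eqref{eq:fanodcme} with $I(V;X)=I(V;M_1,M_2)$.

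The cross-covariance bound~\eqref{eq:fanodccme} follows identically. We take $\theta(P_v)=D_v$ and $\delta=\rho_F^{(\mathrm{cross})}/2$, and use $\hat C_{21}=\mathcal{D}_{21}(M_1,M_2)$.

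The only subtle point is a formal one. Lemma~\ref{lem:fano} is stated with the data $X\sim P_v$ drawn from the family itself, whereas here the observation law is $Q_v$ and the parameter is indexed through $P_v$. The obstacle is to confirm that Fano's reduction only uses that $\theta$ is attached to the index $v$ and that $X\mid V=v$ has a fixed law. I would handle this by recalling the standard proof. One defines the test $\hat v:=\arg\min_v\|\hat\theta-\theta_v\|$. Separation gives $\{\hat v\ne V\}\subseteq\{\|\hat\theta-\theta_V\|\ge\delta\}$, and Markov's inequality gives $\mathbb{E}\|\hat\theta-\theta_V\|\ge\delta\,\mathbb{P}(\hat v\ne V)$. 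Fano's inequality for $V\to X\to\hat v$ then gives $\mathbb{P}(\hat v\ne V)\ge 1-\frac{I(V;X)+\log 2}{\log|\mathcal V|}$. Finally, the average over $V$ is at most the maximum over $v$. None of these steps uses anything about $Q_v$ beyond its being a fixed conditional law.
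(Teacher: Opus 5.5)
Your proposal is correct and takes essentially the same route the paper intends: the paper gives no separate proof and simply invokes Lemma~\ref{lem:fano}, following Rahmani et al., with the messages $(M_1,M_2)$ as the observation via the Markov chain $V\to(\bX_1,\bX_2)\to(M_1,M_2)\to\hat C$. Your extra step of re-running the standard Fano argument (argmin test, Markov's inequality, Fano plus data processing, then average $\le$ max) is a sound and slightly more careful handling of the point that $(M_1,M_2)\mid V=v$ has the pushforward law $Q_v$ rather than $P_v$, so the parameter need not be a well-defined function of $Q_v$.
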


\begin{remark}\label{rem:fanodcme}
The bound in Lemma~\ref{lem:fanodcme} reduces the minimax lower bound problem to two tasks: \emph{(i)}~constructing a well-separated family $\{P_v\}_{v\in\mathcal{V}}$ (large $\rho_F^{(\mathrm{cross})}$ and large $|\mathcal{V}|$), and \emph{(ii)}~upper-bounding the mutual information $I(V;\,M_1,M_2)$ uniformly over all admissible schemes. After substituting such a uniform information bound into Eqs.~\ref{eq:fanodcme} and~\ref{eq:fanodccme}, we may take the infimum over the encoders and decoder to obtain the corresponding minimax lower bounds.
\end{remark}

\subsubsection*{Averaged Fano Inequality (With Auxiliary Randomness $W$)}
In the general $s$-sparse setting, the hypothesis family is parameterised not only by the index $v\in\mathcal{V}$ (which selects the perturbation direction) but also by an auxiliary variable $w\in\mathcal{W}$ (which randomizes the perturbation by left and right signed permutations).  This auxiliary randomisation is needed to make the C-SDPI coefficient computation tractable: averaging over $W$ makes the expected Gram matrices scalar multiples of the identity.

To handle this, we use the \emph{Averaged Fano} method of~\cite{rahmani2025fundamental}, which extends Fano's inequality to families that vary with~$W$.
\begin{lemma}[Averaged Fano;
{\cite[Example~15.19]{Wainwright2019HDS},
\cite[Lemma~5.1]{rahmani2025fundamental}}]
\label{lem:avgfano}
Let $W\sim\pi_W$ be an auxiliary random variable taking values in $\mathcal{W}$.  For each $w\in\mathcal{W}$, let $\mathcal{P}_{\mathcal{V}}^{(w)}=\{P_v^{(w)}\}_{v\in\mathcal{V}}
\subset\mathcal{P}_s\subset\mathrm{subG}^{(d)}(\sigma)$ be a family of $|\mathcal{V}|$ distributions with covariance matrices $C_v^{(w)}$ and cross-covariance matrices $D_v^{(w)}$.  Define the worst-case separations
\[
    \rho_F
    := \inf_{\substack{w\in\mathcal{W}\\(v,v')\in\mathcal{V}^2,\,
    v\neq v'}}
    \big\|C_v^{(w)}-C_{v'}^{(w)}\big\|_F,
    \qquad
    \rho_F^{(\mathrm{cross})}
    := \inf_{\substack{w\in\mathcal{W}\\(v,v')\in\mathcal{V}^2,\,
    v\neq v'}}
    \big\|D_v^{(w)}-D_{v'}^{(w)}\big\|_F.
\]
Let $V$ be uniform on $\mathcal{V}$, independent of $W$, and given $(W,V)=(w,v)$, let $\bZ=\{Z^{(i)}\}_{i=1}^m\overset{\mathrm{i.i.d.}}{\sim}P_v^{(w)}$, with agents observing $\bX_1$ and $\bX_2$ respectively.  Fix any DCME (respectively, DCCME) scheme with parameters $(\sigma,m,d_{1:2},B_{1:2})$, and let $M_k=\mathcal{E}_k(\mathbf{X}_k)$, $k=1,2$. Assume $|\mathcal{V}|\geq 2$. We have
\begin{align}
    \sup_{P\in\mathcal{P}_s}
\mathbb{E}\!\left[\|\hat{C}-C\|_F\right]&\;\geq\; \frac{\rho_F}{2}
    \left(1-\frac{I(W,V;\,M_1,M_2)+\log 2}
    {\log|\mathcal{V}|}\right),
    \label{eq:avgfanodcme}\\[4pt]
    \sup_{P\in\mathcal{P}_s}
\mathbb{E}\!\left[\|\hat{C}_{21}-C_{21}\|_F\right] &\;\geq\;
    \frac{\rho_F^{(\mathrm{cross})}}{2}
    \left(1-\frac{I(W,V;\,M_1,M_2)+\log 2}
    {\log|\mathcal{V}|}\right).
    \label{eq:avgfanodccme}
\end{align}
\end{lemma}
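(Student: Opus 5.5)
The plan is to reduce the averaged statement to the standard Fano inequality (Lemma~\ref{lem:fano}) applied to the pair $(W,V)$ with the messages as observations, using the separation only within each fixed $w$. I will give the argument for the DCCME bound \eqref{eq:avgfanodccme}; the DCME bound \eqref{eq:avgfanodcme} is identical with $C$ in place of $D$ and $\rho_F$ in place of $\rho_F^{(\mathrm{cross})}$.

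Fix a scheme $(\mathcal{E}_1,\mathcal{E}_2,\mathcal{D}_{21})$ and let $\hat{C}_{21}=\mathcal{D}_{21}(M_1,M_2)$. Since every $P_v^{(w)}$ lies in $\mathcal{P}_s$, the worst-case risk dominates the Bayes risk under the prior $\pi_W\otimes\mathrm{Unif}(\mathcal{V})$:
\[
\sup_{P\in\mathcal{P}_s}\mathbb{E}\|\hat{C}_{21}-C_{21}\|_F\;\geq\;\mathbb{E}_{W,V}\,\mathbb{E}\big[\|\hat{C}_{21}-D_V^{(W)}\|_F\,\big|\,W,V\big].
\]
Next, define $\delta:=\rho_F^{(\mathrm{cross})}/2$ and a minimum-distance test that is allowed to depend on $w$: $\hat v_w:=\arg\min_{v}\|\hat C_{21}-D_v^{(w)}\|_F$. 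By the triangle inequality and the $2\delta$-separation within each $w$ (which holds uniformly by definition of the infimum), the event $\hat v_W\neq V$ forces $\|\hat C_{21}-D_V^{(W)}\|_F\geq\delta$. Markov's inequality then gives
\[
\mathbb{E}\|\hat C_{21}-D_V^{(W)}\|_F\;\geq\;\delta\,\mathbb{P}(\hat v_W\neq V).
\]

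It remains to lower bound this error probability. Here $\hat v_W$ is a function of $(W,M_1,M_2)$. Conditioning on $W=w$, Fano's inequality (in the form underlying Lemma~\ref{lem:fano}) for $V$ uniform on $\mathcal{V}$ gives
\[
\mathbb{P}(\hat v_w\neq V\mid W=w)\;\geq\;1-\frac{I(V;M_1,M_2\mid W=w)+\log 2}{\log|\mathcal{V}|}.
\]
Averaging over $w$ yields the bound with $I(V;M_1,M_2\mid W)$. Because $V$ and $W$ are independent, $I(V;M_1,M_2\mid W)=I(V;M_1,M_2,W)\leq I(W,V;M_1,M_2)+I(V;W)=I(W,V;M_1,M_2)$, where the last equality uses $I(V;W)=0$. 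Actually the chain rule gives the cleaner identity $I(W,V;M_1,M_2)=I(W;M_1,M_2)+I(V;M_1,M_2\mid W)\geq I(V;M_1,M_2\mid W)$, which is all we need. Combining the three displays yields \eqref{eq:avgfanodccme}.

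The main subtlety, rather than a real obstacle, is that the separation is only guaranteed within each fixed $w$; across different $w$ the hypotheses may nearly coincide. This is why the test must be allowed to know $w$, and why the mutual information appearing in the bound involves $W$ as well. Both points are handled by conditioning on $W$ before applying Fano's inequality, as above. The bound is vacuous but still valid when the bracket is negative.
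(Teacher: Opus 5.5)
Your proof is correct and is the standard argument behind the results the paper cites; the paper gives no proof of its own. You condition on $W$, run Fano for the $w$-indexed family with a $w$-dependent minimum-distance test, average over $W$, and use $I(W,V;M_1,M_2)=I(W;M_1,M_2)+I(V;M_1,M_2\mid W)\ge I(V;M_1,M_2\mid W)$. Your first chain through $I(V;M_1,M_2,W)$ is also valid but redundant, so you can drop it and keep only the chain-rule identity.
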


\section{Main Results~\label{sec:mainresults}}
\subsection{Lower Bounds\label{sec:lowerbounds}}
Throughout this section, let $s':=s\wedge d_{\min}$ and $L:=\frac{3}{4}\log(2\,d_1 d_2)-\log s'-\log(4e)$. We have the asymptotic bound:
\begin{corollary}~\label{cor:lasymp}
    There exists a universal constant $D>0$ such that whenever $\frac{d_1d_2}{s'}\ge D$, we have
    \[ L=\Theta\bigg(\log\frac{d_1d_2}{s'}\bigg).\]
\end{corollary}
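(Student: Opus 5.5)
We need to show that $L=\frac34\log(2d_1d_2)-\log s'-\log(4e)$ is bounded above and below by constant multiples of $\log(d_1d_2/s')$ once $x:=d_1d_2/s'$ is large. The plan is to rewrite $L$ entirely in terms of $x$, plus a nonnegative correction controlled by $s'\le d_{\min}$, and then compare each piece with $\log x$.

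\emph{Rewriting $L$.} Expanding,
\[
L=\tfrac34\log x+\tfrac34\log 2-\tfrac14\log s'-\log(4e).
\]
The only term not expressed through $x$ is $-\tfrac14\log s'$. Since $s'\le d_{\min}$, we have $s'^2\le d_1d_2$, so $s'\le \sqrt{d_1d_2}$. Hence $x=d_1d_2/s'\ge s'$, which gives $\log s'\le \log x$.

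\emph{Upper bound.} Because $\log s'\ge 0$,
\[
L\le \tfrac34\log x+\tfrac34\log2-\log(4e)\le \tfrac34\log x .
\]
The last step holds because $\tfrac34\log 2<\log(4e)$. This bound is valid for all $x$.

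\emph{Lower bound.} Using $\log s'\le\log x$,
\[
L\ge \tfrac34\log x-\tfrac14\log x+\tfrac34\log2-\log(4e)=\tfrac12\log x-c_0,
\]
where $c_0:=\log(4e)-\tfrac34\log 2>0$. Choose $D:=2^{4c_0}$. Then for $x\ge D$ we have $c_0\le\tfrac14\log x$, and therefore $L\ge\tfrac14\log x$.

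Combining the two bounds, $\tfrac14\log(d_1d_2/s')\le L\le\tfrac34\log(d_1d_2/s')$ whenever $d_1d_2/s'\ge D$, which is the claim. The only step that needs any thought is handling the extra $-\tfrac14\log s'$ term, and the sparsity cap $s'\le d_{\min}$ disposes of it. The rest is routine constant bookkeeping.
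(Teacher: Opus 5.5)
Your proof is correct and follows essentially the same route as the paper. Your step $\log s'\le\log(d_1d_2/s')$ is equivalent to the paper's $\log s'\le\frac12\log(d_1d_2)$, both coming from $s'\le d_{\min}\le\sqrt{d_1d_2}$, and it gives the same constant $c_0=\frac54+\log e$, the same threshold $D=2^{4c_0}=32e^4$, and the same lower constant $\frac14$. Your upper bound $L\le\frac34\log(d_1d_2/s')$ is slightly sharper than the paper's $L\le\log(d_1d_2/s')$, which makes no difference to the claim.
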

\begin{proof}
    Since $s'\le d_{\min} \le \sqrt{d_1d_2}$, we have $\log s'\le\frac12\log d_1d_2$. Also,
    \[ L = \frac34\log(2d_1d_2)-\log s'-\log(4e) = \frac34 \log d_1d_2 - \log s' -\underbrace{\big(\frac54 + \log e\big)}_{:=c_0}.\]
    Note that
    \[ \log\frac{d_1d_2}{s'}-L = \log d_1d_2 - \log s' - \big(\frac34\log d_1d_2 - \log s' - c_0\big) = \frac14\log d_1d_2 + c_0>0.\]
    Hence,
    \[
    L\le\log\frac{d_1d_2}{s'}.
    \]
    For the reverse bound, by $\log s'\le\frac12\log d_1d_2$, we have
    \[
    L-\frac12\log\frac{d_1d_2}{s'} = \frac14\log d_1d_2 - \frac12 \log s' - c_0 \ge -c_0.
    \]
    Hence,
    \[
    L\ge\frac12\log\frac{d_1d_2}{s'}-c_0.
    \]
    If $\log\frac{d_1d_2}{s'}\ge4c_0$, then $c_0\le\frac14\log\frac{d_1d_2}{s'}$, and so
    \[
    L\ge\frac14\log\frac{d_1d_2}{s'}.
    \]
    Since $2^{4c_0}=2^{5+4\log e} = 32e^4$, one such possible constant is
    \[
    \frac{d_1d_2}{s'}\ge 32e^4.
    \]
\end{proof}
\begin{remark}
The numerical threshold in Corollary~4.1 is conservative, constants of a much smaller order would also work. Moreover, since $s'\le d_{\min}$ implies $d_1d_2/s'\ge d_{\max}$, the logarithmic comparison holds eventually whenever $d_{\max}\to\infty$, uniformly over all admissible choices of $s'$.
\end{remark}

Our first result gives a minimax lower bound for cross-covariance estimation under the Frobenius norm.

\begin{theorem}[$s$-sparse DCCME Lower Bound]
\label{thm:ssparse-dccme}
Consider the $\mathrm{DCCME}(\sigma,m,d_{1:2},B_{1:2})$ problem
with $C_{21}\in\mathcal{S}(s)$.   Suppose $s'L\ge4$. Then
\begin{equation}\label{eq:ssparse-dccme}
    \MF^{(\mathrm{cross})} \;\geq\; \frac{\sigma^2}{32}\left[
    \left(\sqrt{s'L\left(\frac{d_1}{B_1}
    \vee\frac{d_2}{B_2}\right)}
    \;\vee\;\sqrt{\frac{s'L}{m}}\right)
    \;\wedge\;\sqrt{s'}\right].
\end{equation}
In particular, for large enough $d_1d_2/s'$ such that $L=\Theta\big(\log\frac{d_1d_2}{s'}\big)$,
\begin{equation}\label{eq:ssparse-dccme-asymp}
    \MF^{(\mathrm{cross})} \;=\;
    \Omega\!\left(\sigma^2\left[\left(
    \sqrt{s'\log\!\left(\frac{d_1d_2}{s'}\right)
    \!\left(\frac{d_1}{B_1}\vee\frac{d_2}{B_2}\right)}
    \;\vee\;
    \sqrt{\frac{s'\log(d_1d_2/s')}{m}}
    \right)\land\sqrt{s'}\right]\right).
\end{equation}
\end{theorem}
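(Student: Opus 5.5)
The lower bound has three terms, and I will prove them separately and then combine them: a communication term, a sample term, and the cap $\sqrt{s'}$. Each comes from a Fano-type bound (Lemma~\ref{lem:fanodcme} or Lemma~\ref{lem:avgfano}) applied to a Gaussian hypothesis family built from the sparse packing of signed partial permutation matrices (Lemma~\ref{lem:VGpartialperm}). That lemma gives a set $\mathcal{V}$ of $s'$-sparse signed partial permutations $E_v\in\{-1,0,+1\}^{d_2\times d_1}$ with $\log|\mathcal{V}|\ge s'L$ and pairwise Hamming distance exceeding $s'/2$.

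\textbf{Hypothesis family.} For a scale $\delta\in(0,1/2]$ and $w=(\Pi_1,\Pi_2)$ a pair of uniformly random full signed permutations of sizes $d_2$ and $d_1$, set
\[
A_{v}^{(w)}=\delta\,\Pi_1 E_v\Pi_2^\top
\]
and let $P_v^{(w)}=N\bigl(0,\sigma^2\bigl(\begin{smallmatrix}I&A^\top\\A&I\end{smallmatrix}\bigr)\bigr)$. The required properties are then checked as follows.
\begin{itemize}
\item Sparsity is preserved because signed permutations keep the number of nonzeros, so $|A_v^{(w)}|_0=s'\le s$.
\item Since $E_v$ is a partial permutation, $\|A\|_{\op}=\delta$. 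By Lemma~\ref{lem:blockeig} the covariance has operator norm $\sigma^2(1+\delta)$; I will rescale $\sigma$ by a constant so that $P_v^{(w)}\in\mathrm{subG}^{(d)}(\sigma)$. It is also positive definite by the Schur complement (Lemma~\ref{lem:schur}).
\item Entries of distinct $E_v$ differ by at least $1$ on more than $s'/2$ positions, and conjugation by signed permutations preserves the Frobenius norm. Hence $\rho_F^{(\mathrm{cross})}\ge\sigma^2\delta\sqrt{s'/2}$ uniformly in $w$.
\end{itemize}

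\textbf{Communication term.} In the standardised channel \eqref{eq:channel} with state $(W,V)$, Lemma~\ref{lem:signedpermmat} applied to the $\Pi_2$ average gives
\[
\mathbb{E}[A^\top A]=\delta^2\,\mathbb{E}\bigl[\Pi_2E_V^\top E_V\Pi_2^\top\bigr]=\frac{\delta^2 s'}{d_1}I_{d_1},
\]
so the C-SDPI coefficient (Theorem~\ref{thm:csdpi}) equals $\delta^2s'/d_1$, independent of $m$. Symmetrically, the coefficient for the channel from $X_2$ to $X_1$ is $\delta^2 s'/d_2$. Following the argument of \cite{rahmani2025fundamental}, I will bound $I(W,V;M_1,M_2)$ in three steps.
\begin{enumerate}
\item Write $I(W,V;M_1,M_2)$ as $I(W,V;M_1)+I(W,V;M_2\mid M_1)$.
\item The first term vanishes, because the marginal of $\bX_1$ does not depend on the state.
\item Control the second term through the C-SDPI contraction of $I(\bX_1;M_2)$ and related quantities, each bounded by $B_2$.
\end{enumerate}
The expected result is $I\lesssim \delta^2 s'\,(B_1/d_1\wedge B_2/d_2)$, with constants. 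I expect this to be the main obstacle. The C-SDPI must be applied with the conditioning on $W$ handled correctly, and the bound has to come out in the $\wedge$ form so that the final rate carries $d_1/B_1\vee d_2/B_2$. I would adapt the dense-case derivation verbatim, replacing $d_1d_2$ with the sparse trace $\mathrm{Tr}(E^\top E)=s'$. Given the information bound, choose
\[
\delta^2\asymp L\Bigl(\frac{d_1}{B_1}\vee\frac{d_2}{B_2}\Bigr)\wedge\frac14,
\]
so that $I+\log2\le \tfrac12 s'L$; here the assumption $s'L\ge4$ absorbs the $\log 2$. Lemma~\ref{lem:avgfano} then yields the first term, and the cap $\delta\le1/2$ produces the $\sqrt{s'}$ truncation.

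\textbf{Sample term.} Fix $w$ at the identity and use Lemma~\ref{lem:fanodcme} with data processing, $I(V;M_1,M_2)\le I(V;\bZ)\le m\max_{v,v'}D(P_v\|P_{v'})$. Using Theorem~\ref{thm:gaussianentropy} and the Schur determinant formula, the Gaussian KL divergence between two such covariances is $O(\|A_v-A_{v'}\|_F^2)=O(\delta^2s')$ for $\delta\le1/2$. Taking $\delta^2\asymp L/m\wedge 1/4$ gives $\sqrt{s'L/m}$.

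Combining the two bounds through the maximum and tracking constants, with $1/32$ as the target after separation $\sigma^2\delta\sqrt{s'/2}/2$ times a factor $1/2$ from Fano, gives \eqref{eq:ssparse-dccme}. The asymptotic form \eqref{eq:ssparse-dccme-asymp} then follows from Corollary~\ref{cor:lasymp}.
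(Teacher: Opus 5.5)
Your proposal is correct and follows the paper's proof essentially step for step. The ingredients match: the same $W$-randomized signed-partial-permutation family over the packing of Lemma~\ref{lem:VGpartialperm} (up to the normalizing constant and the cap on $\delta$), the same C-SDPI coefficients $\delta^2s'/d_1$ and $\delta^2s'/d_2$ from Lemma~\ref{lem:signedpermmat}, averaged Fano for the communication term, and ordinary Fano with $W$ dropped for the sample term.

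The step you flag as the main obstacle takes two lines in the paper. You write $I(W,V;M_2\mid M_1)\le I(W,V,M_1;M_2)=I(M_1;M_2\mid V,W)$, which uses $M_2\perp(V,W)$ as well as $M_1\perp(V,W)$. Data processing along $M_1-\bX_1-\bX_2-M_2$ then gives the minimum of the two C-SDPI bounds in one step.

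For the sample term, the paper uses $I(V;M_1,M_2)\le I(\bX_1;\bX_2\mid V)=-\frac{s'm}{2}\log(1-\delta^2)\le s'm\delta^2$. This is roughly four times smaller than your maximal pairwise KL of $\frac{2\delta^2 s'}{1-\delta^2}$ nats per sample. Your route still reaches $1/32$, but only narrowly (for example with $\delta^2=\frac{L}{16m}\wedge\frac14$), so the paper's bound leaves more slack for that constant.
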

\noindent

\begin{corollary}\label{cor:ssparse-dccme}
Under the assumptions of Theorem~\ref{thm:ssparse-dccme}, suppose
\[0<\varepsilon<\frac{\sigma^2\sqrt{s'}}{32}.\]
Then any $s$-sparse $DCCME$ scheme achieving expected Frobenius distortion $\varepsilon$ must satisfy, for $k=1,2$,
\[B_k\ge\frac{\sigma^4d_k s'L}{2^{10}\varepsilon^2}\quad\text{ and }\quad m\ge\frac{\sigma^4s'L}{2^{10}\varepsilon^2}.\]
Consequently, for large enough $d_1d_2/s'$ such that $L=\Theta\big(\log\frac{d_1d_2}{s'}\big)$, we have
\begin{equation}\label{eq:ssparse-dccme-cor}
    B_k \;=\; \Omega\!\left(
    \frac{\sigma^4\,d_k\,s'\log(d_1d_2/s')}
    {\varepsilon^2}\right),
    \qquad
    m \;=\; \Omega\!\left(
    \frac{\sigma^4\,s'\log(d_1d_2/s')}
    {\varepsilon^2}\right).
\end{equation}
\end{corollary}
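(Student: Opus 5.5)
The corollary follows from Theorem~\ref{thm:ssparse-dccme} by contraposition. Fix a scheme whose worst-case expected Frobenius distortion over $\mathcal{P}_s$ is at most $\varepsilon$. Then $\MFcross\le\varepsilon$. Combining this with \eqref{eq:ssparse-dccme} gives
\[
\frac{\sigma^2}{32}\left[\left(\sqrt{s'L\Big(\tfrac{d_1}{B_1}\vee\tfrac{d_2}{B_2}\Big)}\vee\sqrt{\tfrac{s'L}{m}}\right)\wedge\sqrt{s'}\right]\le\varepsilon .
\]
The hypothesis $\varepsilon<\sigma^2\sqrt{s'}/32$ says that $\varepsilon$ lies strictly below the saturation value $\frac{\sigma^2}{32}\sqrt{s'}$. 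So the minimum on the left cannot be attained by $\sqrt{s'}$. It must be attained by the first argument, which is therefore at most $32\varepsilon/\sigma^2$. Since this first argument is a maximum of two terms, each term is separately at most $32\varepsilon/\sigma^2$.

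Next we extract the two bounds. From the communication term, for each $k$ we have $\sqrt{s'L\,d_k/B_k}\le\sqrt{s'L(\tfrac{d_1}{B_1}\vee\tfrac{d_2}{B_2})}\le 32\varepsilon/\sigma^2$. Squaring gives $s'L\,d_k/B_k\le 2^{10}\varepsilon^2/\sigma^4$, which rearranges to $B_k\ge \sigma^4 d_k s'L/(2^{10}\varepsilon^2)$. In the same way, the sample term gives $s'L/m\le 2^{10}\varepsilon^2/\sigma^4$, so $m\ge\sigma^4 s'L/(2^{10}\varepsilon^2)$. The hypothesis $s'L\ge4$ is inherited from the theorem; it also ensures $L>0$, so the rearrangements are legitimate.

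For the asymptotic statement \eqref{eq:ssparse-dccme-cor}, we invoke Corollary~\ref{cor:lasymp}. Once $d_1d_2/s'\ge D$, we have $L\ge c\log(d_1d_2/s')$ for a universal $c>0$; the proof gives $c=1/4$. Substituting this into the explicit bounds yields the $\Omega(\cdot)$ forms.

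No step is a real obstacle. The only point needing care is to use the strict inequality $\varepsilon<\sigma^2\sqrt{s'}/32$ to rule out the $\sqrt{s'}$ branch of the minimum. Without it, the lower bound would be vacuous.
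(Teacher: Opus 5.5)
Your proposal is correct and follows essentially the same argument as the paper: combine the bound of Theorem~\ref{thm:ssparse-dccme} with distortion at most $\varepsilon$, use the strict inequality $\varepsilon<\sigma^2\sqrt{s'}/32$ to rule out the $\sqrt{s'}$ branch of the minimum, then bound each term of the maximum by $32\varepsilon/\sigma^2$, square, and rearrange. Your appeal to Corollary~\ref{cor:lasymp} (via $L\ge\frac14\log(d_1d_2/s')$) for the asymptotic form spells out a step the paper leaves implicit.
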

\begin{proof}
    We have
    \[
    \frac{\sigma^2}{32}\left[
    \left(\sqrt{s'L\left(\frac{d_1}{B_1}
    \vee\frac{d_2}{B_2}\right)}
    \;\vee\;\sqrt{\frac{s'L}{m}}\right)
    \;\wedge\;\sqrt{s'}\right] \le\varepsilon<\frac{\sigma^2\sqrt{s'}}{32},
    \]
    so
    \[
    \sqrt{s'L\left(\frac{d_1}{B_1}
    \vee\frac{d_2}{B_2}\right)}
    \;\vee\;\sqrt{\frac{s'L}{m}}<\sqrt{s'} \implies \frac{\sigma^2}{32}\left(\sqrt{s'L\left(\frac{d_1}{B_1}
    \vee\frac{d_2}{B_2}\right)}
    \;\vee\;\sqrt{\frac{s'L}{m}}\right) \le\varepsilon.
    \]
    Then,
    \[
    \sqrt{s'L\left(\frac{d_1}{B_1}
    \vee\frac{d_2}{B_2}\right)}
    \;\vee\;\sqrt{\frac{s'L}{m}}\le\frac{32\varepsilon}{\sigma^2}\implies \sqrt{\frac{s'L d_k}{B_k}}\le\frac{32\varepsilon}{\sigma^2}\implies B_k \ge \frac{\sigma^4 d_ks'L}{2^{10}\varepsilon^2},\text{ for }k=1,2.
    \]
    Similarly,
    \[
    \sqrt{s'L\left(\frac{d_1}{B_1}
    \vee\frac{d_2}{B_2}\right)}
    \;\vee\;\sqrt{\frac{s'L}{m}}\le\frac{32\varepsilon}{\sigma^2}\implies \sqrt{\frac{s'L}{m}}\le\frac{32\varepsilon}{\sigma^2}\implies m\ge\frac{\sigma^4s'L}{2^{10}\varepsilon^2}.
    \]
\end{proof}

\noindent Our second result extends the cross-covariance bound to the full covariance matrix by incorporating the self-covariance terms.

\begin{theorem}[$s$-Sparse DCME Lower Bound]
\label{thm:ssparse-dcme}
Under the same assumptions as Theorem~\ref{thm:ssparse-dccme},
\begin{align}\label{eq:ssparse-dcme}
    \MF &\;\geq\; \frac{\sigma^2}{32}\left[
    \left(\sqrt{s'L\left(\frac{d_1}{B_1}\vee
    \frac{d_2}{B_2}\right)}
    \;\vee\;\sqrt{\frac{s'L}{m}}\right)
    \;\wedge\;\sqrt{s'}\right]
    \nonumber\\[4pt]
    &\quad\;\vee\;\;
    \frac{\sigma^2}{7168}\left(
    \frac{d_1\vee d_2}{\sqrt{m}}
    \;\wedge\;\sqrt{2(d_1\vee d_2)}\right)
    \nonumber\\[4pt]
    &\quad\;\vee\;\;
    \frac{\sigma^2}{56}\left(
    \sqrt{d_1}\cdot 2^{-16B_1/d_1^2}
    \;\vee\;
    \sqrt{d_2}\cdot 2^{-16B_2/d_2^2}\right).
\end{align}
In particular, for large enough $d_1d_2/s'$ such that $L=\Theta\big(\log\frac{d_1d_2}{s'}\big)$,
% \begin{equation}\label{eq:ssparse-dcme-asymp}
%     \MF \;=\; \Omega\!\left(\sigma^2\left(
%     \sqrt{s'\log\!\left(\frac{d_1d_2}{s'}\right)
%     \!\left(\frac{d_1}{B_1}\vee\frac{d_2}{B_2}\right)}
%     \;\vee\;
%     \frac{d_1\vee d_2}{\sqrt{m}}
%     \;\vee\;
%     \sqrt{d_1\cdot 2^{-16B_1/d_1^2}}
%     \;\vee\;
%     \sqrt{d_2\cdot 2^{-16B_2/d_2^2}}
%     \right)\right).
% \end{equation}
\begin{align}\label{eq:ssparse-dcme-asymp}
\MF
&=
\Omega\!\Bigg(
\sigma^2
\Bigg[
\left(
\left(
\sqrt{
s'\log\!\left(\frac{d_1d_2}{s'}\right)
\left(
\frac{d_1}{B_1}\vee\frac{d_2}{B_2}
\right)
}
\vee
\sqrt{
\frac{
s'\log(d_1d_2/s')
}{m}
}
\right)
\wedge
\sqrt{s'}
\right)
\nonumber\\
& \vee
\left(
\frac{d_1\vee d_2}{\sqrt{m}}
\wedge
\sqrt{d_1\vee d_2}
\right)
\vee
\sqrt{
d_1}\,2^{-16B_1/d_1^2
}
\vee
\sqrt{
d_2}\,2^{-16B_2/d_2^2
}
\Bigg]
\Bigg).
\end{align}
\end{theorem}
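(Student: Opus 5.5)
The bound in Theorem~\ref{thm:ssparse-dcme} is a maximum of three terms, so I would establish each one separately and then combine them. Showing each term is a valid lower bound for $\MF$ suffices.

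\textbf{First term (cross-covariance part).} Theorem~\ref{thm:ssparse-dccme} already gives this term as a lower bound on $\MF^{(\mathrm{cross})}$. By the reduction in Section~\ref{sec:commmodel}, $\|\hat C-C\|_F\ge\sqrt2\,\|\hat C_{21}-C_{21}\|_F\ge\|\hat C_{21}-C_{21}\|_F$, so every DCME scheme induces a DCCME scheme with no larger distortion. Hence $\MF\ge\MF^{(\mathrm{cross})}$, and the first line follows immediately.

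\textbf{Second and third terms (diagonal blocks).} These come from the self-covariance blocks, and I would import them from the dense analysis of \cite{rahmani2025fundamental}. The key observation is that the hypothesis families there for the diagonal blocks can be taken with $C_{21}=0$. Such distributions lie in $\mathcal{P}_s$ for every $s\ge1$, because $|0|_0=0\le s$. So the sparsity constraint never restricts those families, and their lower bounds transfer verbatim to the sparse minimax problem.

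Concretely, for the sample term I would take $C=\mathrm{diag}(C_{11},C_{22})$ with $C_{11}=\tfrac{\sigma^2}{2}(I+\delta\,\text{perturbation})$ drawn from a Varshamov--Gilbert packing of symmetric $d_k\times d_k$ matrices. Local estimation of $C_{kk}$ from $m$ samples then faces the centralized Frobenius rate $\frac{d_k}{\sqrt m}\wedge\sqrt{d_k}$, via Lemma~\ref{lem:fano} with a Gaussian KL bound computed from Theorem~\ref{thm:gaussianentropy}. The communication term $\sqrt{d_k}\,2^{-16B_k/d_k^2}$ is a pure rate-distortion/packing argument: with $B_k$ bits, the decoder's output for $C_{kk}$ takes at most $2^{B_k}$ values, while a Frobenius packing of the $d_k\times d_k$ block at scale $\asymp\sigma^2\sqrt{d_k}$ has cardinality $\exp(\Theta(d_k^2))$.

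For both terms I would check three things: that each hypothesis is $\sigma$-sub-Gaussian, i.e.\ has operator norm at most $\sigma^2$; that $C_{21}=0$; and that the resulting constants are the stated $1/7168$ and $1/56$, exactly as in the corresponding dense theorem.

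\textbf{Main obstacle.} The only real subtlety is confirming that the dense constructions for the diagonal terms indeed have $C_{21}=0$, or can be modified to do so, without changing constants. If a construction in \cite{rahmani2025fundamental} uses a nonzero cross block, I would instead zero it out. Since the diagonal-block separation $\|C_v-C_{v'}\|_F$ only involves the $C_{kk}$ parts, this keeps the separation, and it only decreases the operator norm and the mutual information. Taking the maximum of the three bounds gives~\eqref{eq:ssparse-dcme}. Corollary~\ref{cor:lasymp} then gives the asymptotic form~\eqref{eq:ssparse-dcme-asymp}.
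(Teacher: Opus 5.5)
Your first term ($\MF\ge\MFcross$ plus Theorem~\ref{thm:ssparse-dccme}) matches the paper. So does your plan to import the self-covariance communication bound from \cite[Appendix~E.2]{rahmani2025fundamental}, whose family varies only $C_{kk}$ with $C_{21}=0$. The gap is the sample-complexity term, which does \emph{not} transfer verbatim.

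The paper describes the dense sample family of \cite[Section~E.1.2]{rahmani2025fundamental} as splitting all $d$ coordinates into two halves of size $d/2$. The perturbation $\delta F_v$ sits in the off-diagonal block between the halves. That block straddles the agent boundary (for $d_1=d_2$ it is exactly $C_{21}$), so the family is not in $\mathcal{P}_s$. Your fallback of zeroing the cross block then fails. The separation lives entirely in the block you delete, so all hypotheses collapse to one covariance and Fano yields nothing.

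The paper instead relocates the split inside a single agent. Assuming $d_2\ge d_1$, write $X_2=(X_{2a},X_{2b})$ with sizes $d_2'=\lfloor d_2/2\rfloor$ and $d_2''=\lceil d_2/2\rceil$. Perturb only $C_{2a,2b}$ by $\delta F_v$, where $\{F_v\}$ is a Frobenius packing of the operator-norm unit ball with $|\mathcal V|\ge 4^{d_2'd_2''}$, and keep $C_{21}=0$. Since the marginals of $X_{2a}$ and $X_{2b}$ do not depend on $V$, one gets $I(V;M_1,M_2)\le I(\bX_{2a};\bX_{2b}\mid V)\le md_2\delta^2/2$. Combine this with $\log|\mathcal V|\ge d_2^2/4$, $\rho_F\ge\delta\sigma^2\sqrt{d_2}/448$, $\delta^2=\frac{d_2}{4m}\wedge\frac12$ and $d_2\ge4$ (implied by $s'L\ge4$). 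Lemma~\ref{lem:fanodcme} then gives exactly $\frac{\sigma^2}{7168}\bigl(\frac{d_2}{\sqrt m}\wedge\sqrt{2d_2}\bigr)$. This relocation is also why the statement carries $d_1\vee d_2$ rather than $d$.

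Your own concrete alternative, a symmetric perturbation of $C_{kk}$ itself, does avoid the cross block and gives the right order $\sigma^2\bigl(\frac{d_k}{\sqrt m}\wedge\sqrt{d_k}\bigr)$. The paper's family is in fact the special case with perturbation $\bigl(\begin{smallmatrix}0&F_v^\top\\F_v&0\end{smallmatrix}\bigr)$ of $C_{22}$. Its zero diagonal fixes the marginals and reduces the information bound to a determinant computation. With a general symmetric perturbation, however, the constant $1/7168$ is not ``exactly as in the dense theorem''; you must track it yourself. You also need the packing to lie in an operator-norm ball at Frobenius scale $\asymp\sqrt{d_k}$ (as in \cite[Lemma~A.10]{rahmani2025fundamental}), which a bare Varshamov--Gilbert sign packing does not guarantee.

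Finally, importing Eq.~(148) of \cite{rahmani2025fundamental} for the communication term is fine, but your sketched rederivation is off in two places. First, $\hat C_{kk}$ depends on both messages, so counting $2^{B_k}$ outputs is not the right reason. The operative fact is that with $C_{21}=0$ and the other diagonal block fixed, $M_{3-k}$ is independent of $(V,M_k)$, giving $I(V;M_1,M_2)\le B_k$. Second, a packing at the fixed scale $\sigma^2\sqrt{d_k}$ only covers the regime $B_k\lesssim d_k^2$. The factor $2^{-16B_k/d_k^2}$ requires a packing radius that shrinks with $B_k$.
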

% \begin{corollary}\label{cor:ssparse-dcme}
% Any $s$-sparse DCME scheme achieving expected Frobenius distortion
% $\varepsilon$ requires
% \begin{equation}\label{eq:ssparse-dcme-cor}
%     B_k \;=\; \Omega\!\left(
%     \frac{\sigma^4\,d_k\,s'\log(d_1d_2/s')}
%     {\varepsilon^2}\right),
%     \qquad
%     m \;=\; \Omega\!\left(
%     \frac{\sigma^4\,(d_1\vee d_2)^2}
%     {\varepsilon^2}\right).
% \end{equation}
% \end{corollary}
\begin{corollary}\label{cor:ssparse-dcme}
Under the assumptions of Theorem~\ref{thm:ssparse-dcme}, suppose
\[
0<\varepsilon<
\sigma^2\left(
\frac{\sqrt{s'}}{32}
\wedge
\frac{\sqrt{2(d_1\vee d_2)}}{7168}
\right).
\]
Then any $s$-sparse $DCME$ scheme achieving expected Frobenius distortion $\varepsilon$ must satisfy, for $k=1,2$,
\[
B_k\ge\frac{\sigma^4d_k s'L}{2^{10}\varepsilon^2}
\quad\text{and}\quad
m\ge\frac{\sigma^4(d_1\vee d_2)^2}{7168^2\varepsilon^2}.
\]
Consequently, for large enough $d_1d_2/s'$ such that
$L=\Theta\big(\log\frac{d_1d_2}{s'}\big)$, we have
\begin{equation}\label{eq:ssparse-dcme-cor}
    B_k \;=\; \Omega\!\left(
    \frac{\sigma^4\,d_k\,s'\log(d_1d_2/s')}
    {\varepsilon^2}\right),
    \qquad
    m \;=\; \Omega\!\left(
    \frac{\sigma^4\,(d_1\vee d_2)^2}
    {\varepsilon^2}\right).
\end{equation}
\end{corollary}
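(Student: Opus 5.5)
This corollary is deduced from Theorem~\ref{thm:ssparse-dcme} by the same reasoning used for Corollary~\ref{cor:ssparse-dccme}. The plan is to take any DCME scheme with worst-case distortion at most $\varepsilon$, so that $\MF\le\varepsilon$. We then compare $\varepsilon$ separately with the first two lines of the maximum in~\eqref{eq:ssparse-dcme}; the third (exponential) line is not needed. Since $\MF$ dominates each line, each line is at most $\varepsilon$.

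\emph{Communication bound.} The first line gives
\[
\frac{\sigma^2}{32}\Bigl[\Bigl(\sqrt{s'L(\tfrac{d_1}{B_1}\vee\tfrac{d_2}{B_2})}\vee\sqrt{s'L/m}\Bigr)\wedge\sqrt{s'}\Bigr]\le\varepsilon<\frac{\sigma^2\sqrt{s'}}{32}.
\]
The strict inequality means the minimum cannot be attained by $\sqrt{s'}$. Hence the first argument of the minimum is itself at most $32\varepsilon/\sigma^2$. In particular $\sqrt{s'Ld_k/B_k}\le 32\varepsilon/\sigma^2$, which rearranges to $B_k\ge \sigma^4 d_k s' L/(2^{10}\varepsilon^2)$ for $k=1,2$. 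This is exactly the argument of Corollary~\ref{cor:ssparse-dccme}.

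\emph{Sample bound.} The second line gives $\frac{\sigma^2}{7168}\bigl(\frac{d_1\vee d_2}{\sqrt m}\wedge\sqrt{2(d_1\vee d_2)}\bigr)\le\varepsilon<\frac{\sigma^2\sqrt{2(d_1\vee d_2)}}{7168}$. Again the minimum cannot be the second term, so $\frac{d_1\vee d_2}{\sqrt m}\le 7168\varepsilon/\sigma^2$. Squaring yields $m\ge \sigma^4(d_1\vee d_2)^2/(7168^2\varepsilon^2)$.

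\emph{Asymptotic form and main subtlety.} The asymptotic statement~\eqref{eq:ssparse-dcme-cor} follows by substituting $L=\Theta(\log(d_1d_2/s'))$ from Corollary~\ref{cor:lasymp} once $d_1d_2/s'\ge D$. The only step needing care is the case analysis showing that $\sqrt{s'}$ (respectively $\sqrt{2(d_1\vee d_2)}$) cannot be the active term of the minimum. This is why the hypothesis on $\varepsilon$ takes the minimum of the two thresholds: each bound uses its own threshold.
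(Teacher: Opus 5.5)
Your proof is correct and follows the paper's approach. The paper writes this argument out only for Corollary~\ref{cor:ssparse-dccme} and leaves this DCME version implicit. The argument is to bound each of the first two lines of~\eqref{eq:ssparse-dcme} by $\varepsilon$, use the corresponding strict threshold on $\varepsilon$ to rule out the saturating term of each minimum, and rearrange; the asymptotic form then follows from $L\ge\frac14\log(d_1d_2/s')$ in Corollary~\ref{cor:lasymp}.
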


\begin{remark}[Special 1-sparse case]
\label{rem:specialcases}
Setting $s=1$ gives $s'=1$ and with the same standing assumptions, Theorem~\ref{thm:ssparse-dccme} reduces to
\begin{equation}~\label{eq:1sparse-dccme}
    \MF^{(\mathrm{cross})} \;=\;
    \Omega\!\left(\sigma^2\left[\left(
    \sqrt{\log(d_1d_2)
    \left(\frac{d_1}{B_1}\vee\frac{d_2}{B_2}\right)}
    \;\vee\;
    \sqrt{\frac{\log(d_1 d_2)}{m}}\right)
    \land 1\right]\right).
\end{equation}
\end{remark}

\subsection{Achievability of Bounds}
We construct a DCME protocol that matches the communication lower bounds of Theorems~\ref{thm:ssparse-dccme} and \ref{thm:ssparse-dcme} up to polylogarithmic factors under the conditions of Remark~\ref{rem:achievmatch}.  The protocol combines the covering-net quantisation framework of~\cite{rahmani2025fundamental} with entry-wise hard thresholding inspired by~\cite{garg2014communication}.

\begin{theorem}[$s$-Sparse Achievability]
\label{thm:sachievability}
Consider the $\mathrm{DCME}(\sigma,m,d_{1:2},B_{1:2})$
problem with $C_{21}\in\mathcal{S}(s)$.  Let $d=d_1+d_2$
and $0<\tilde\varepsilon=\varepsilon/(\sigma^2\sqrt{d})\leq 1$.
Suppose $d_1d_2\ge2$,
\begin{align}
m &\ge2
\max\left\{
    \left\lceil\frac{2^{19}d}{\tilde\varepsilon^2}\right\rceil,
    \left\lceil
        C_2\left(
            d+\frac{\sigma^4s\log(d_1d_2)}{\varepsilon^2}
        \right)
    \right\rceil
\right\},
\label{eq:achiev-m}\\[3pt]
B_k &\ge
1+\beta d_k
\left\lceil
    C_2\left(
        d+\frac{\sigma^4s\log(d_1d_2)}{\varepsilon^2}
    \right)
\right\rceil
+
\left\lceil
    2d_k^2\log\!\left(\frac{528}{\tilde\varepsilon}\right)
\right\rceil,
\label{eq:achiev-B}\\
\intertext{where}
\beta &=
\max\left\{
    2,
    \left\lceil
        \log\!\left(
            \frac{C_3\sigma^2\sqrt{s}}{\varepsilon}
        \right)
    \right\rceil
\right\},
\notag
\end{align}
and $C_2,C_3$ are universal constants chosen in the proof.  Then there exists a DCME protocol achieving
$\mathbb{E}[\|\hat{C}-C\|_F]\leq\varepsilon$.
\end{theorem}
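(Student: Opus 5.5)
We will construct an explicit protocol and bound the error of its three blocks separately. The bound $\|\hat C-C\|_F\le \|\hat C_{11}-C_{11}\|_F+\|\hat C_{22}-C_{22}\|_F+2\|\hat C_{21}-C_{21}\|_F$ reduces the task to getting each diagonal block within $\varepsilon/4$ and the cross block within $\varepsilon/8$. The protocol splits the $m$ samples into two halves, which explains the factor $2$ in \eqref{eq:achiev-m}.

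\textbf{Diagonal blocks.} On the first half, Agent~$k$ computes its empirical covariance $\widetilde C_{kk}$. By Proposition~\ref{prop:f2crosscov-expected} with $U=V=X_k$, we get $\mathbb{E}\|\widetilde C_{kk}-C_{kk}\|_F\le \sqrt{d_k}\,\mathbb{E}\|\cdot\|_{\op}\lesssim \sigma^2\sqrt{d_k}\sqrt{2d_k/m}$. This is at most $\varepsilon/8$ once $m\ge 2^{19}d/\tilde\varepsilon^2$. The agent then quantizes $\widetilde C_{kk}$ with Lemma~\ref{lem:quantization} on the operator ball of radius $11\sigma^2$, using $\lceil 2d_k^2\log(528/\tilde\varepsilon)\rceil$ bits. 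This yields operator error $\omega\le \tilde\varepsilon\sigma^2/48$, hence Frobenius error at most $\sqrt{d_k}\,\omega\le\varepsilon/16$. The event $\|\widetilde C_{kk}\|_{\op}>11\sigma^2$ is flagged with the extra $1$ bit, and it has probability $e^{3d-m}$ by Lemma~\ref{lem:f1crosscov-conc}. On that event the decoder outputs $0$, and we control its contribution by Cauchy--Schwarz using Lemma~\ref{lem:subexpmoment}. A PSD projection at the decoder does not increase the Frobenius error by more than a factor of $2$.

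\textbf{Cross block.} We use the second half, of size $n:=\lceil C_2(d+\sigma^4 s\log(d_1d_2)/\varepsilon^2)\rceil$. Each agent sends its raw data matrix $\bX_k$ restricted to these $n$ samples, entrywise quantized to $\beta$ bits per entry after clipping at level $\sigma\sqrt{c\log(nd_1d_2)}$; this costs $\beta d_k n$ bits, matching \eqref{eq:achiev-B}. The server forms $\widehat S=\frac1n\sum_i \hat X_2^{(i)}\hat X_1^{(i)\top}$ and hard-thresholds it entrywise at $\tau\asymp\sigma^2\sqrt{\log(d_1d_2)/n}$. If the means are unknown, we centre using the first-half local means, which the agents transmit cheaply. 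The quantization plus clipping bias in each entry of $\widehat S$ is $O(\sigma^2 2^{-\beta}\,\mathrm{polylog})$. With $\beta\ge\log(C_3\sigma^2\sqrt s/\varepsilon)$ this is $\le \varepsilon/(c\sqrt s)$, which is negligible on the $O(s)$ retained entries and below $\tau/2$ elsewhere.

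\textbf{Main obstacle: the thresholding analysis.} We apply Bernstein's inequality (Theorem~\ref{thm:bernstein}) together with Lemma~\ref{lem:subgaussianproduct} to each entry $(\widetilde S-C_{21})_{ji}$, then take a union bound over the $d_1d_2$ entries. This gives the event $\mathcal G=\{\max_{ji}|\widehat S_{ji}-(C_{21})_{ji}|\le\tau/2\}$ with probability at least $1-(d_1d_2)^{-2}$, provided $n\gtrsim\log(d_1d_2)$ so that we are in the sub-Gaussian regime of Bernstein. On $\mathcal G$ the thresholded estimate is zero off the support. On the support, each entry has error at most $\min(|C_{ji}|,\tfrac32\tau)$ when it is killed and at most $\tau/2$ when it is kept. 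Hence $\|\hat C_{21}-C_{21}\|_F\le 2\sqrt s\,\tau\lesssim\sigma^2\sqrt{s\log(d_1d_2)/n}\le\varepsilon/16$ by the choice of $C_2$.

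The delicate part is the complement $\mathcal G^c$. There the error can be as large as $\|\widehat S\|_F+\|C_{21}\|_F$, whose second moment is $\mathrm{poly}(d)\sigma^4$ via Lemma~\ref{lem:subexpmoment}. We multiply by $\sqrt{\mathbb P(\mathcal G^c)}$ and absorb the result by strengthening the union bound to failure probability $(d_1d_2)^{-c}\,e^{-d}$, using the $d$ term in $n$. Summing the four contributions gives $\mathbb{E}\|\hat C-C\|_F\le\varepsilon$.
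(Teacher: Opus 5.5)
Your overall architecture matches the paper's. You quantize the local covariance on the $11\sigma^2$ operator ball, send data for the cross block, hard-threshold entrywise at scale $\sigma^2\sqrt{\log(d_1d_2)/n}$, and output zero on flagged events. The step that fails is the cross-block quantizer. With uniform $\beta$-bit quantization after clipping at $T=\sigma\sqrt{c\log(nd_1d_2)}$, each data entry is off by up to $T2^{-\beta}$. So for an entry of $\widehat S-\widetilde S$ you can only guarantee a bound of order $\sigma T2^{-\beta}+T^2 4^{-\beta}$, i.e.\ $\sigma^2\sqrt{\log(nd_1d_2)}\,2^{-\beta}$ at best.

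Under the theorem's $\beta$ you only know $2^{-\beta}\le\varepsilon/(C_3\sigma^2\sqrt s)$ with $C_3$ universal. The bias is therefore of order $\varepsilon\sqrt{\log(nd_1d_2)}/(C_3\sqrt s)$, and your claim that it is $\le\varepsilon/(c\sqrt s)$ does not follow: the logarithm grows with the dimensions and cannot be absorbed into $C_3$. Both uses of that claim then break:
\begin{itemize}
\item on the support, the bias contributes of order $\varepsilon\sqrt{\log(nd_1d_2)}/C_3$ to the Frobenius error;
\item off the support, your bias bound is not below $\tau/2$, because $n\ge C_2\sigma^4 s\log(d_1d_2)/\varepsilon^2$ gives $\tau\lesssim\varepsilon/\sqrt{C_2 s}$, with no such factor.
\end{itemize}
Raising the threshold by the bias does not help, since the bias re-enters through the $s$ retained entries. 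As written, your scheme proves the theorem only with $\beta$ enlarged by $\Theta(\log\log(nd_1d_2))$.

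The missing idea is the paper's: quantize each block $\mathbf X_k\in\mathbb R^{d_k\times n}$ as a whole, using a covering net of the operator-norm ball of radius $6\sigma\sqrt{d_k+n}$ (Lemma~\ref{lem:quantization}). This gives $\|\hat{\mathbf X}_k-\mathbf X_k\|_{\op}\le 18\sigma\sqrt{d_k+n}\,2^{-\beta}$. Combined with $\|\mathbf X_k\|_{\op}\le 6\sigma\sqrt{d_k+n}$ and $n\ge d$, it yields $\|Q\|_{\op}\le 1080\sigma^2 2^{-\beta}$. Hence $|Q_{ij}|\le Q_{\max}$ uniformly, with no logarithm. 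Thresholding at $\lambda+Q_{\max}$ then costs only $4\sqrt{2s}\,Q_{\max}\le\varepsilon/8$ once $C_3\ge 34560\sqrt2$.

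Your treatment of $\mathcal G^c$ also does not work as stated. At threshold $\tau\asymp\sigma^2\sqrt{\log(d_1d_2)/n}$, Bernstein gives a per-entry tail $(d_1d_2)^{-O(1)}$ whatever the size of $n$. The $d$ term in $n$ only places you in the sub-Gaussian regime; it cannot generate an $e^{-d}$ factor, which would need $\tau\gtrsim\sigma^2\sqrt{d/n}$ and would ruin the rate. Moreover, no $\varepsilon$-free failure probability can offset an $n$-independent $\mathrm{poly}(d)\,\sigma^4$ second moment, because $\tilde\varepsilon$ may be arbitrarily small.

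What rescues the step is a sharper second moment. Deterministically, $\|\hat C_{21}-C_{21}\|_F\le\|\widehat S-C_{21}\|_F+\sqrt{d_1d_2}\,\tau$. Also $\mathbb E\|\widehat S-C_{21}\|_F^2\lesssim d_1d_2(\sigma^4/n+b^2)$, where $b$ is the per-entry quantization error. With these, $\sqrt{\mathbb P(\mathcal G^c)}\le (d_1d_2)^{-1}$ already suffices. The paper avoids a global good event altogether: it bounds the conditional second moment entry by entry, integrating the Bernstein tail on the zero entries.

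Two smaller points:
\begin{itemize}
\item Projection onto the PSD cone is nonexpansive in Frobenius norm (the paper's trace argument gives factor $1$). With the factor $2$ you state, your own $3\varepsilon/4$ budget would not close.
\item Means in $\mathrm{subG}^{(d)}(\sigma)$ are unbounded, so they cannot be sent uniformly with finitely many bits. None need be sent: local centering suffices. The paper spends the factor $2$ in $m$ on paired differencing rather than on a sample split.
\end{itemize}
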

The factor 2 in Eq.~\ref{eq:achiev-m} accounts for paired differencing in Step 1 in Section 6. The extra bit in Eq.~\ref{eq:achiev-B} is for an error flag.
% \label{rem:achievmatch}
% The cross-covariance component of the achievable distortion satisfies
% \[
%     \mathbb{E}\big[\|\hat{C}_{21}^{\,\mathrm{thr}}
%     -C_{21}\|_F\big]
%     \;\leq\;
%     C\,\sigma^2\sqrt{s\log(d_1 d_2)
%     \left(\frac{d_1}{B_1}\vee\frac{d_2}{B_2}\right)}
%     \cdot\sqrt{\beta},
% \]
% which matches the cross-covariance lower bound of Proposition~\ref{prop:ssparse-comm} up to the
% polylogarithmic factor
% $\sqrt{\beta}=\sqrt{2\log(C_3\sigma^2/\varepsilon)}$.
\begin{remark}[Matching the lower bound]\label{rem:achievmatch}
Under the hypotheses for the logarithmic-rate lower bound in
Corollary~\ref{cor:ssparse-dccme}, suppose
\[
    s\le d_{\min},\qquad
    0<\varepsilon<\frac{\sigma^2\sqrt{s}}{32},
    \qquad
    \varepsilon^2\le
    \frac{\sigma^4s\log(d_1d_2)}{d}.
\]
Then $s'=s$, and the cross-covariance communication budget
can be chosen to satisfy
\[
    B_k^{(\mathrm{cov})}
    =O\!\left(
        \frac{\beta\sigma^4d_k s\log(d_1d_2)}
             {\varepsilon^2}
    \right).
\]
Since $s\le d_{\min}\le\sqrt{d_1d_2}$,
$\log(d_1d_2)=\Theta(\log(d_1d_2/s))$.
Consequently, under existing assumptions, the cross-covariance communication requirement
matches the lower bound up to the logarithmic factor
\[
    \beta
    =O\!\left(\log\frac{\sigma^2\sqrt{s}}{\varepsilon}\right).
\]
Under these assumptions, $d_k^2\le d_kd\le\sigma^4d_ks\log(d_1d_2)/\varepsilon^2$, so the self-covariance communication cost is also absorbed up to logarithmic factors, and the total communication budget matches the lower bound up to polylogarithmic factors.
\end{remark}

\section{Proof of Lower Bounds~\label{sec:prooflower}}
\subsection{Proof Architecture~\label{sec:lowerproofarchi}}
The proof of Theorem~\ref{thm:ssparse-dcme} establishes four independent lower bounds, each isolating a different source of estimation difficulty:
\begin{center}
\begin{tabular}{@{}lll@{}}
\toprule
Section & Source of difficulty & Rate \\
\midrule
\ref{sec:cross-comm}
  & Communication for $C_{21}$
  & $\frac{\sigma^2}{32}
\left[
\sqrt{
s'L
\left(
\frac{d_1}{B_1}
\vee
\frac{d_2}{B_2}
\right)
}
\wedge
\sqrt{s'}
\right]$ \\
\ref{app:crosssample}
  & Sample for $C_{21}$
  & $\frac{\sigma^2}{32}
\left[
\sqrt{
\frac{s'L}{m}
}
\wedge
\sqrt{s'}
\right]$ \\
\ref{app:fullsample}
  & Sample for $C_{kk}$
  & $\frac{\sigma^2}{7168}
\left[
\frac{d_1\vee d_2}{\sqrt{m}}
\wedge
\sqrt{2(d_1\vee d_2)}
\right]$ \\
\ref{app:self-comm}
  & Communication for $C_{kk}$
  & $\frac{\sigma^2}{56}
\left[
\sqrt{
d_1}\,2^{-16B_1/d_1^2
}
\vee
\sqrt{
d_2}\,2^{-16B_2/d_2^2
}
\right]$ \\
\bottomrule
\end{tabular}
\end{center}
Recall that by Corollary~\ref{cor:lasymp}, under certain conditions, we have $L=\Theta\big(\log\frac{d_1d_2}{s'}\big)$.

\subsection{Results on Sparse Packing}
The following lemma extends Lemma 4 of \cite{raskutti2011minimax} to the partial permutation setting. As it is the combinatorial core of the $s$-sparse lower bound, we state the lemma and its proof in detail here.

\begin{lemma}~\label{lem:VGpartialperm}
    For $1\le s'\le d_1\wedge d_2$, define $\mathcal{G}(s')$ as the set of vectors in $\{-1,0,+1\}^{d_1d_2}$ with exactly $s'$ nonzero entries whose unique matrix representation via $\vecop^{-1}$, defined in Definition~\ref{def:vectorisation}, is a signed partial
permutation.
    % $$\mathcal{G}(s'):=\{s'\text{-sparse vectors in }\{-1,0,+1\}^{d_1d_2} \text{ with unique matrix rep. via }\vecop\text{ a signed partial permutation}\}.$$
    There exists a subset
    $\tilde{\mathcal{G}}(s') \subset \mathcal{G}(s')$ with cardinality
    \[
        |\tilde{\mathcal{G}}(s')| \geq
        \bigg(\frac{(2d_1d_2)^{3/4}}{4es'}\bigg)^{s'},
    \]
    such that $\rho_H(\vecop(E),\vecop(E'))>s'/2$ for all $E\neq E'$ with $\vecop(E),\vecop(E')\in\tilde{\mathcal{G}}(s')$, where $\rho_H$ denotes Hamming distance.
\end{lemma}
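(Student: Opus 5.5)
We prove the lemma with a greedy (Gilbert--Varshamov) construction, in the spirit of Lemma~4 of \cite{raskutti2011minimax}. First we count $\mathcal{G}(s')$ exactly. A signed partial permutation of size $s'$ is determined by three choices: an ordered list of $s'$ distinct columns, a list of $s'$ distinct rows paired with them, and $s'$ signs. Modulo reordering of the pairs, this gives
\[
|\mathcal{G}(s')| = \binom{d_1}{s'}\binom{d_2}{s'}\, s'!\, 2^{s'}
= \frac{d_1!\,d_2!\,2^{s'}}{(d_1-s')!\,(d_2-s')!\,s'!}.
\]
We then bound this from below by $\big(\tfrac{2 d_1 d_2}{s'}\big)^{s'}\cdot(\text{correction})$. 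The estimates are $\binom{d}{s'}\ge (d/s')^{s'}$ and $s'!\ge (s'/e)^{s'}$, and these give
\[
|\mathcal{G}(s')|\ge \Big(\frac{2d_1d_2}{e\, s'}\Big)^{s'} .
\]

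Next comes the greedy step. We pick elements of $\mathcal{G}(s')$ one at a time and, after each choice, discard every element within Hamming distance $\le s'/2$ of it. The packing obtained this way has size at least $|\mathcal{G}(s')|/\max_E |\mathcal{B}_H(E,s'/2)\cap\mathcal{G}(s')|$. So it suffices to bound the size of a Hamming ball of radius $s'/2$ inside $\mathcal{G}(s')$. Any vector $E'$ in such a ball shares at least $s'/2$ of its nonzero positions with $E$, with the same signs there; otherwise those coordinates would add to the distance. We crudely overcount such $E'$ by the entire sparse sign set. There are at most $\binom{s'}{\lfloor s'/2\rfloor}\le 2^{s'}$ choices for which positions of $E$ are kept. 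The remaining at most $\lceil s'/2\rceil$ nonzeros of $E'$ sit at arbitrary positions with arbitrary signs, giving at most $\binom{d_1d_2}{s'/2}2^{s'/2}\le (2e\,d_1d_2/(s'/2))^{s'/2}$ choices. This ignores the permutation constraint on $E'$, which only helps. The result is a ball bound of the form
\[
|\mathcal{B}|\le 2^{s'}\Big(\frac{4e\,d_1d_2}{s'}\Big)^{s'/2}.
\]

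Dividing the two bounds gives
\[
|\tilde{\mathcal G}|\ge \Big(\frac{2d_1d_2}{es'}\Big)^{s'}\Big/\Big[2^{s'}\Big(\frac{4e d_1d_2}{s'}\Big)^{s'/2}\Big]
=\Big(\frac{(d_1d_2)^{1/2}}{c\,(s')^{1/2}}\Big)^{s'},
\]
which is already of the claimed type $\big((d_1d_2)^{\alpha}/s'\big)^{s'}$ up to constants. The main obstacle is matching the stated constant and exponent, $(2d_1d_2)^{3/4}/(4es')$. To get it, we will use the tighter count of the ball. We account for the fact that $E'$ at distance $\le s'/2$ differs from $E$ on a set of size $\le s'/2$, which is at most $s'/4$ new entries when entries are swapped (each swap costs $2$ in Hamming distance). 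The exponents then redistribute, giving a ball of size about $2^{s'}(2e d_1d_2/s')^{s'/4}$ and hence the exponent $3/4$. We will verify the arithmetic to produce exactly $\big((2d_1d_2)^{3/4}/(4es')\big)^{s'}$, using $s'\le\sqrt{d_1d_2}$ where needed to absorb the leftover factors. We will also treat separately the rounding of $s'/2$ and $s'/4$ for odd or small $s'$, where $\lceil\cdot\rceil$ issues may affect the constant.
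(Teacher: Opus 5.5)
Your plan follows the paper's proof. You use the same count $|\mathcal{G}(s')|=\binom{d_1}{s'}\binom{d_2}{s'}s'!\,2^{s'}\ge\big(2d_1d_2/(es')\big)^{s'}$ and the same greedy Gilbert--Varshamov selection. You also have the same key observation: $\rho_H=b+2c$, where $c$ counts moved support positions and $b$ counts sign flips on shared positions, so every element of a radius-$s'/2$ ball has at most $s'/4$ new positions. You are right that your first count is insufficient. It gives $\big(\sqrt{d_1d_2/s'}/(2e^{3/2})\big)^{s'}$, which is short of the claim by a factor of order $\big((d_1d_2)^{1/4}/\sqrt{s'}\big)^{s'}$. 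The gap is that the refined ball count is never carried out, and this is the only step that produces the stated bound. The sketch you give of it is also off.

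Your estimate $2^{s'}(2e\,d_1d_2/s')^{s'/4}$ has two problems:
\begin{itemize}
\item It counts which entries of $E$ are removed and where the new ones go. It omits which of the $s'-c$ retained entries flip sign, which costs up to another factor $2^{s'}$.
\item Your own bound $\binom{d_1d_2}{c}2^c\le(2e\,d_1d_2/c)^c$, evaluated at $c=s'/4$, gives $(8e\,d_1d_2/s')^{s'/4}$, not $(2e\,d_1d_2/s')^{s'/4}$.
\end{itemize}
Completing your route with both corrections, with prefactor $4^{s'}$, yields
\[
\Big(\frac{(2d_1d_2)^{3/4}}{4es'}\Big)^{s'}\Big(\frac{s'}{4e}\Big)^{s'/4}.
\]
This is below the claim whenever $s'<4e$. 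For example, at $s'=4$ we have $c\le1$, and the loss is exactly the factor $e$ in $\binom{d_1d_2}{1}\cdot 2\le 2e\,d_1d_2$. The hypothesis $s'\le\sqrt{d_1d_2}$ cannot absorb this deficit, because the deficit does not involve $d_1d_2$. The entropy-type bound on $\binom{d_1d_2}{c}$ is simply lossy when $c$ is small, so refining at this point works against you.

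The argument closes with the cruder count the paper uses:
\begin{itemize}
\item there are $\sum_{c\le s'/4}\binom{s'}{c}<2^{s'}$ choices of removed positions;
\item there are at most $(2d_1d_2)^c\le(2d_1d_2)^{s'/4}$ choices of positions and signs for the new entries;
\item there are at most $2^{s'}$ sign patterns on the kept entries.
\end{itemize}
So each ball has fewer than $4^{s'}(2d_1d_2)^{s'/4}$ elements. Dividing $\big(2d_1d_2/(es')\big)^{s'}$ by this gives exactly $\big((2d_1d_2)^{3/4}/(4es')\big)^{s'}$. This needs no separate rounding cases, since $c\le\lfloor s'/4\rfloor\le s'/4$ suffices, and no use of $s'\le\sqrt{d_1d_2}$.
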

\begin{proof}
    We adapt the greedy construction argument in \cite{raskutti2011minimax} Appendix~A. Within this proof, we identify each element of $\mathcal{G}(s')$ with its matrix representation under $\vecop^{-1}$, and write $\rho_H(E,E')$ for the number of entries at which the matrices $E$ and $E'$ differ.

    Each element of $\mathcal{G}(s')$ is specified by choosing
    $s'$ column indices from $[d_1]$, then assigning each a
    distinct row index from $[d_2]$, then choosing a sign
    $\pm 1$ for each entry. Hence, by $\binom{n}{k}\ge\frac{n^k}{k^k}$ and $s'!\ge\big(\frac{s'}{e}\big)^{s'}$, we have
    \begin{equation}\label{eq:ground-set-size}
    |\mathcal{G}(s')| = \binom{d_1}{s'}\cdot
    \frac{d_2!}{(d_2 - s')!}\cdot 2^{s'}
    \geq \left(\frac{d_1}{s'}\right)^{s'}
    \cdot\left(\frac{d_2}{s'}\right)^{s'}\cdot\left(\frac{s'}{e}\right)^{s'}\cdot 2^{s'}
    = \left(\frac{2\,d_1 d_2}{es'}\right)^{s'}.
    \end{equation}

    For some fixed $E\in\mathcal{G}(s')$, consider the set $\{E'\in\mathcal{G}(s'):\rho_H(\vecop(E),\vecop(E'))\le s'/2\}$. We wish to upper-bound its cardinality.

    Since both $E$ and $E'$ have exactly $s'$ nonzeros, write
    $c := |S(E) \setminus S(E')| = |S(E') \setminus S(E)|$
    (entries present in one but not the other) and
    $b := |\{(j,i) \in S(E) \cap S(E') : E_{ji} \neq E'_{ji}\}|$
    (shared positions with different signs). The Hamming distance
    decomposes as $\rho_H = b + 2c$, so
    $\rho_H \leq s'/2$ implies $c \leq s'/4$.

    To construct such an $E'$ from $E$:
    \begin{enumerate}
        \item \emph{Choose which entries to remove $(S(E)\setminus S(E'))$.}
        Select $c$ of $E$'s $s'$ support positions to remove.
        Summing over $c = 0, 1, \ldots, \lfloor s'/4\rfloor$:
        \[
            \sum_{c=0}^{\lfloor s'/4 \rfloor}\binom{s'}{c}
            < 2^{s'} \text{ choices.}
        \]
    \item \emph{Place the $c$ new entries $(S(E')\setminus S(E))$.}
        Each new entry needs a position $(j', i')$ in the
        $d_2 \times d_1$ matrix and a sign $\pm 1$.
        Crudely (ignoring the row/column occupancy constraint):
        at most $2 d_1 d_2$ choices per entry, giving
        $(2\,d_1 d_2)^c \leq (2\,d_1 d_2)^{s'/4}$ total.
    \item \emph{Choose sign flips at shared positions.}
        Each of the $s' - c$ kept entries can independently
        keep or flip its sign: at most $2^{s'-c} \leq 2^{s'}$
        choices.
    \end{enumerate}
Hence:
\begin{equation}\label{eq:ball-bound}
    |\{E' \in \mathcal{G}(s') : \rho_H \leq s'/2\}|
    < 2^{s'}\cdot(2\,d_1 d_2)^{s'/4}\cdot 2^{s'}
    = 4^{s'}\cdot(2\,d_1 d_2)^{s'/4}.
\end{equation}
% Note that the first inequality is strict because the crude bound $(2d_1d_2)^c$ strictly overcounts.

Consider any $\emptyset\neq \mathcal{A} \subset \mathcal{G}(s')$ with
$|\mathcal{A}| \leq M$, where
$M := |\mathcal{G}(s')|\big/
\big(4^{s'}\cdot(2\,d_1 d_2)^{s'/4}\big)$.
The set of elements $E\in\mathcal{G}(s')$ that are within Hamming distance $s'/2$ of some element of $\mathcal{A}$ has cardinality at most
\[
\begin{aligned}
|\{E\in\mathcal{G}(s')\mid\rho_H(E,E')\le s'/2\text{ for some }E'\in\mathcal{A}\}|&<|\mathcal{A}|\cdot 4^{s'}\cdot (2d_1d_2)^{s'/4}\\
&\le M\cdot 4^{s'}\cdot (2d_1d_2)^{s'/4}=|\mathcal{G}(s')|.
\end{aligned}
\]

Consequently, for any such set with cardinality $|\mathcal{A}|\le M$, there exists an $E\in\mathcal{G}(s')$ such that $\rho_H(E,E')>s'/2$ for all $E'\in\mathcal{A}$. By inductively adding this element at each round, we then construct a subset $\mathcal{A}\subset\mathcal{G}(s')$ with $|\mathcal{A}|> M$ such that $\rho_H(E,E')>s'/2$ for all distinct $E,E'\in\mathcal{A}$.

To conclude, let us lower-bound the quantity $M$. We have
\begin{align*}
    M&=\frac{\binom{d_1}{s'}\cdot\frac{d_2!}{(d_2-s')!}\cdot 2^{s'}}{4^{s'}\cdot (2d_1d_2)^{s'/4}}\ge\frac{(\frac{2d_1d_2}{es'})^{s'}}{4^{s'}\cdot (2d_1d_2)^{s'/4}}=\frac{(2d_1d_2)^{s'}}{e^{s'}\cdot s'^{s'}\cdot 4^{s'}\cdot(2d_1d_2)^{s'/4}}=\bigg(\frac{(2d_1d_2)^{3/4}}{4es'}\bigg)^{s'}
\end{align*}
\end{proof}

We present another crucial lemma bridging the Hamming distance used in Lemma~\ref{lem:VGpartialperm} to the Frobenius norm.

\begin{lemma}[Frobenius--Hamming inequality for
$\{-1,0,+1\}$-valued matrices]
\label{lem:frob-hamming}
For any $E, E' \in \{-1,0,+1\}^{d_2 \times d_1}$:
\[
    \|E - E'\|_F^2 \geq
    \rho_H(\vecop(E),\, \vecop(E')).
\]
Equality holds when $E, E'$ have disjoint supports.
\end{lemma}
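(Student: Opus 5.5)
The plan is an entrywise comparison. By the isometry \eqref{eq:vec-isom}, it suffices to work with the vectors $u=\vecop(E)$ and $u'=\vecop(E')$ in $\{-1,0,+1\}^{d_1d_2}$. We write
\[
\|E-E'\|_F^2=\sum_{\ell=1}^{d_1d_2}(u_\ell-u'_\ell)^2,\qquad
\rho_H(u,u')=\sum_{\ell=1}^{d_1d_2}\mathbf{1}\{u_\ell\neq u'_\ell\}.
\]
The inequality then follows from the pointwise bound $(u_\ell-u'_\ell)^2\ge \mathbf{1}\{u_\ell\neq u'_\ell\}$, summed over $\ell$.

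\textbf{Pointwise bound.} If $u_\ell=u'_\ell$, both sides vanish. If $u_\ell\neq u'_\ell$, both entries lie in $\{-1,0,+1\}$, so $u_\ell-u'_\ell$ is a nonzero integer and $(u_\ell-u'_\ell)^2\ge 1$. More precisely, a differing coordinate takes one of two forms:
\begin{itemize}
\item a zero versus a $\pm1$, contributing exactly $1$;
\item opposite signs $\pm1$ versus $\mp1$, contributing $4$.
\end{itemize}

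\textbf{Equality case.} If $E$ and $E'$ have disjoint supports, then at every coordinate at least one of $u_\ell,u'_\ell$ is zero. So every differing coordinate is of the first form and contributes exactly $1$, and the two sums coincide term by term.

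This step also shows the refinement $\|E-E'\|_F^2=\rho_H+3b$, where $b$ counts shared-support sign flips, consistent with the decomposition $\rho_H=b+2c$ used in the proof of Lemma~\ref{lem:VGpartialperm}.

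There is no real obstacle here. The only point needing care is the observation that differences of elements of $\{-1,0,+1\}$ are integers, which is what makes the squared difference at least the indicator.
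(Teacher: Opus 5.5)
Your proof is correct and follows the paper's argument. The paper also compares entrywise, tabulating the six possible disagreements (zero versus $\pm1$ contributing $1$, opposite signs contributing $4$), summing over positions, and noting that under disjoint supports only the first type occurs; your refinement $\|E-E'\|_F^2=\rho_H+3b$ is correct and records the same count slightly more sharply.
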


\begin{proof}
At each position $(j,i)$, both $E_{ji}$ and $E'_{ji}$
belong to $\{-1,0,+1\}$. If $E_{ji} = E'_{ji}$, the
contribution to both sides is 0. If $E_{ji} \neq E'_{ji}$,
the Hamming side contributes 1, and the Frobenius side
contributes $(E_{ji} - E'_{ji})^2$. The possible
disagreements are:
\begin{center}
\renewcommand{\arraystretch}{1.2}
\begin{tabular}{cc|cc}
$E_{ji}$ & $E'_{ji}$ & $(E_{ji}-E'_{ji})^2$
    & $\mathbb{I}[E_{ji}\neq E'_{ji}]$ \\
\hline
$+1$ & $0$ & $1$ & $1$ \\
$-1$ & $0$ & $1$ & $1$ \\
$0$ & $+1$ & $1$ & $1$ \\
$0$ & $-1$ & $1$ & $1$ \\
$+1$ & $-1$ & $4$ & $1$ \\
$-1$ & $+1$ & $4$ & $1$ \\
\end{tabular}
\end{center}
In every case, $(E_{ji}-E'_{ji})^2 \geq
\mathbb{I}[E_{ji}\neq E'_{ji}]$. Summing over all
$(j,i)$, we have
\[
    \|E - E'\|_F^2
    = \sum_{j,i}(E_{ji}-E'_{ji})^2
    \geq \sum_{j,i}\mathbb{I}[E_{ji}\neq E'_{ji}]
    = \rho_H(\vecop(E),\vecop(E')).
\]
Equality holds when every disagreement is of the $\pm 1$ vs $0$ type (contribution $1 = 1$), which occurs when $S(E) \cap S(E') = \emptyset$ (disjoint supports).
\end{proof}

\subsection{The $s$-Sparse Hypothesis Family}
We chain together the vectorization isometry as in Definition~\ref{def:vectorisation} and
Lemma~\ref{lem:VGpartialperm} to construct the $s$-sparse
hypothesis family.

Let $s' := s \wedge \dmin$. By Lemma~\ref{lem:VGpartialperm},
there exists a subset
$\tilde{\mathcal{G}} \subset \mathcal{G}(s')$ with pairwise
Hamming separation $> s'/2$ and
$|\tilde{\mathcal{G}}| \geq
\left(\frac{(2d_1 d_2)^{3/4}}{4es'}\right)^{s'}$. 
Enumerate its elements as
$z_1, z_2, \ldots, z_{|\tilde{\mathcal{G}}|}$. Define the index set
$\mathcal{V} = [1 : |\tilde{\mathcal{G}}|]$, and define
the \emph{base perturbation matrices}
\begin{equation}\label{eq:base-Ev}
    E_v := \vecop^{-1}(z_v) \in
    \{-1,0,+1\}^{d_2 \times d_1},
    \qquad v \in \mathcal{V},
\end{equation}
which are signed partial permutations of size $s'$
by definition of $\mathcal{G}(s')$.

% Let $W \sim \pi_W$ be a random variable taking values
% in a finite set $\mathcal{W}$, independent of $V$.
% For each $w \in \mathcal{W}$, consider a family
% of distributions
% $\mathcal{P}_{\mathcal{V}}^{(w)}
% = \{P_v^{(w)}\}_{v \in \mathcal{V}}$, where
% $P_v^{(w)} = N(0, C_v^{(w)})$, and
% \begin{equation}\label{eq:ssparse-Cvw}
%     C_v^{(w)} = \frac{\sigma^2}{2}
%     \begin{pmatrix}
%         I_{d_1} & \delta\,(E_v^{(w)})^T \\
%         \delta\, E_v^{(w)} & I_{d_2}
%     \end{pmatrix},
% \end{equation}
% where $E_v^{(w)}$ is a signed partial permutation of
% size $s'$ derived from the base matrix $E_v$, with
% $\|E_v^{(w)}\|_{\op} \leq 1$, and $\delta \in (0,1]$
% is a parameter to be determined subsequently.
% The specific construction of $E_v^{(w)}$ from $E_v$
% via $W$ is deferred to next step. 
% Clearly, $|(C_v^{(w)})_{21}|_0 \leq s,
% \forall v \in \mathcal{V}, w \in \mathcal{W}$, since $(C_v^{(w)})_{21} = \frac{\sigma^2}{2}\delta E_v^{(w)}$ and $|E_v^{(w)}|_0 = s' = s \wedge \dmin \leq s$.

Let $\mathcal P_{d_k}$ denote the set of
$d_k\times d_k$ signed permutation matrices, for $k=1,2$.
Let
\[
    W=(W_L,W_R)
    \sim
    \operatorname{Unif}(\mathcal P_{d_2})
    \otimes
    \operatorname{Unif}(\mathcal P_{d_1}),
\]
independent of $V$, and define
    $\mathcal W
    :=
    \mathcal P_{d_2}\times\mathcal P_{d_1}.$
For each $w=(w_L,w_R)\in\mathcal W$, define
\begin{equation}\label{eq:ssparse-randomized-Ev}
    E_v^{(w)}
    :=
    w_LE_vw_R.
\end{equation}
For each $w\in\mathcal W$, consider a family of distributions $\mathcal P_{\mathcal V}^{(w)}=\{P_v^{(w)}\}_{v\in\mathcal V},$ where $P_v^{(w)}=N(0,C_v^{(w)})$, and
\begin{equation}\label{eq:ssparse-Cvw}
    C_v^{(w)}
    =
    \frac{\sigma^2}{2}
    \begin{pmatrix}
        I_{d_1}
        &
        \delta\,(E_v^{(w)})^T
        \\
        \delta\,E_v^{(w)}
        &
        I_{d_2}
    \end{pmatrix},
\end{equation}
where $\delta\in(0,1/\sqrt{2}]$ is a parameter to be
determined subsequently.

Since left and right multiplication by signed permutation
matrices only permutes rows and columns and possibly flips signs,
$E_v^{(w)}$ is again a signed partial permutation of size $s'$.
In particular,
\[
    |E_v^{(w)}|_0=s',
    \qquad
    \|E_v^{(w)}\|_{\op}
    =
    \|E_v\|_{\op}
    =1.
\]
Clearly,
\[
    |(C_v^{(w)})_{21}|_0\le s,
    \qquad
    \forall\,v\in\mathcal V,\ w\in\mathcal W,
\]
since $(C_v^{(w)})_{21}=\frac{\sigma^2}{2}\delta E_v^{(w)}$
and
$|E_v^{(w)}|_0=s'=s\wedge\dmin\le s$.

It is routine to check that $C_v^{(w)}$ is positive definite, and that the corresponding distribution $P_v^{(w)}$ is $\sigma$-sub-Gaussian. Combined with the sparsity check in the last paragraph, the construction is valid. Moreover, we have the separation $\rho_F = \sqrt{2}\rho_F^{(\mathrm{cross})}.$ We defer these checks to Appendix~\ref{app:verification}.

Since the marginal covariance matrices in
\eqref{eq:ssparse-Cvw} are fixed,
the marginal distributions of $\mathbf X_1$ and $\mathbf X_2$
do not depend on $(V,W)$. Hence $\mathbf X_k\perp(V,W),
    \qquad k=1,2.$
Since $M_k$ is a function of $\mathbf X_k$, we also have
$M_k\perp(V,W)$. This, combined with the chain rule of mutual
information, gives
\begin{align}
    I(V,W;M_1,M_2)
    &=
    I(V,W;M_1)
    +
    I(V,W;M_2\mid M_1)
    \nonumber\\
    &=
    I(V,W;M_2\mid M_1)
    \qquad
    (M_1\perp(V,W))
    \nonumber\\
    &\le
    I(V,W,M_1;M_2)
    \nonumber\\
    &=
    I(M_1;M_2\mid V,W)
    \qquad
    (M_2\perp(V,W)).
    \label{eq:chainrule}
\end{align}
Moreover, by Lemma~\ref{lem:VGpartialperm}, the base
perturbation matrices $\{E_v\}_{v\in\mathcal V}$ satisfy
\[
    \rho_H\big(
    \vecop(E_v),\vecop(E_{v'})
    \big)
    >
    \frac{s'}{2},
    \qquad
    v\neq v'.
\]
By Lemma~\ref{lem:frob-hamming},
\[
    \inf_{v\neq v'}
    \|E_v-E_{v'}\|_F
    \ge
    \sqrt{\frac{s'}{2}}.
\]
Since signed permutation matrices are orthogonal, the Frobenius
norm is invariant under left and right multiplication. Hence,
for every $w=(w_L,w_R)\in\mathcal W$,
\[
\begin{aligned}
    \|E_v^{(w)}-E_{v'}^{(w)}\|_F
    &=
    \|w_L(E_v-E_{v'})w_R\|_F\\
    &=
    \|E_v-E_{v'}\|_F,
\end{aligned}
\]
so the separation is $w$-independent.
Furthermore, we have
\begin{equation}\label{eq:ssparse-packing-number}
    \log|\mathcal V|
    \ge
    s'
    \log
    \frac{(2d_1d_2)^{3/4}}{4es'}
    =
    s'\underbrace{\left(\frac34\log(2d_1d_2)-\log s'-\log(4e)\right)}_{:=L},
\end{equation}
with cross-covariance separation satisfying
\begin{equation}\label{eq:ssparse-rho-value}
    \rho_F^{(\mathrm{cross})}
    \ge
    \frac{\sigma^2}{2}
    \delta
    \sqrt{\frac{s'}{2}}
    =
    \frac{\sigma^2\delta\sqrt{s'}}
    {2\sqrt{2}}.
\end{equation}

\begin{remark}
    We give a brief remark on the construction for the special 1-sparse case to aid with understanding. In this special case, we define the
index set
\[
    \mathcal{V} = \{(i,j) : i \in [d_1],\, j \in [d_2]\},
    \qquad |\mathcal{V}| = d_1 d_2,
\]
and for each $v = (i,j) \in \mathcal{V}$, the perturbation matrix
$E_v = e_j e_i^T \in \mathbb{R}^{d_2 \times d_1}$, the distribution
$P_v = N(0, C_v)$, and the covariance
\begin{equation}\label{eq:1sparsehypofam}
    C_v := \frac{\sigma^2}{2}
    \begin{pmatrix}
        I_{d_1} & \delta\, E_v^T \\
        \delta\, E_v & I_{d_2}
    \end{pmatrix},
\end{equation}
where $e_i \in \mathbb{R}^{d_1}$ and $e_j \in \mathbb{R}^{d_2}$
are standard basis vectors, and $\delta \in (0,1/\sqrt{2}]$ is a free parameter to be optimized. The same validity properties as above hold. The packing cardinality however is much simpler: The hypothesis set $\{E_v\}_{v \in \mathcal{V}}
= \{e_j e_i^T : i \in [d_1],\, j \in [d_2]\}$ contains
$|\mathcal{V}| = d_1 d_2$ matrices. Each satisfies the feasibility
condition $\|E_v\|_{\mathrm{op}} = 1 \leq 1$ (since $e_j e_i^T$ is a
rank-1 matrix with a single singular value equal to 1). For
distinct $v = (i,j) \neq v' = (i',j')$, the matrices $e_j e_i^T$
and $e_{j'} e_{i'}^T$ have disjoint support, so:
\begin{equation}\label{eq:Ev-separation}
    \|E_v - E_{v'}\|_F^2
    = \|e_j e_i^T\|_F^2 + \|e_{j'} e_{i'}^T\|_F^2
    = 1 + 1 = 2.
\end{equation}
Hence $\{E_v\}$ is a $\sqrt{2}$-packing of
$\{A \in \mathbb{R}^{d_2 \times d_1} : |A|_0 \leq 1,\,
\|A\|_{\mathrm{op}} \leq 1\}$ under the Frobenius norm, and $\log|\mathcal{V}| = \log(d_1 d_2)$.\\
The reason the packing for this case is a lot simpler is that the canonical basis matrices
$\{e_je_i^\top\}_{(j,i)\in[d_2]\times[d_1]}$ form a natural packing of size $d_1d_2$ that is trivially well-separated: any two distinct basis matrices have disjoint supports, hence no probabilistic argument is needed.
\end{remark}
\begin{remark}[Comparison with the dense case]
\cite{rahmani2025fundamental} pack the operator-norm unit ball $\{D \in \mathbb{R}^{d_2 \times d_1} : \|D\|_{\mathrm{op}} \leq 1\}$ with \emph{dense} matrices, obtaining $\log|\mathcal{V}| \geq 2\,d_1 d_2$ via a volumetric argument (\cite[Appendix A.6]{rahmani2025fundamental}).
\end{remark}

\subsection{Communication Lower Bound for $C_{21}$~\label{sec:cross-comm}}
\subsubsection*{Applying the C-SDPI Theorem}
Conditioned on $(W,V)=(w,v)$, the Markov chain $M_1\to \mathbf{X_1} \to \mathbf{X_2}\to M_2$ holds. By the data-processing inequality, we have
\begin{equation}~\label{eq:1sparsedpi}
    I(M_1;M_2|V,W)\le I(M_1;\mathbf{X_2}|V,W)\land I(M_2;\mathbf{X_1}|V,W)
\end{equation}

Conditionally on $(V,W)=(v,w)$, Lemma~\ref{lem:gaussiancond} and $(X_1,X_2)\sim N(0,C_v^{(w)})$ imply $X_2=\delta E_v^{(w)}X_1+Z_v^{(w)}$, where $Z_v^{(w)}\sim N(0,\Sigma_v^{(w)})$ is independent of $X_1$ under this conditional law, with $\Sigma_v^{(w)}=C_{22}-C_{21}C_{11}^{-1}C_{12}$ and $C_{21}C_{11}^{-1}=\delta E_v^{(w)}$, where $C_{ij}$ denote the blocks of $C_v^{(w)}$. Theorem~\ref{thm:csdpi} is stated for a \emph{single} sample pair $(X_1,X_2)$ with identity marginal covariances; multiplying both coordinates by $\sqrt{2}/\sigma$ gives this normalization without changing mutual information or the C-SDPI coefficient. However, conditionally on $(V,W)$, each agent observes $m$ i.i.d.\ samples: Agent 1 sees $\mathbf{X}_1=(X_1^{(1)},\ldots,X_1^{(m)})$ and encodes all of them into a single message $M_1$. The relevant channel is therefore the $m$-fold product channel $\mathbf{X}_1\to\mathbf{X}_2$ conditional on the common state $(V,W)$. Applying the tensorization property with state $(V,W)$, we have
\begin{align}
    I(M_1;\mathbf{X_2}|V,W)\le\delta^2\bigg\|\mathbb{E}_{(W,V)}\big[(E_V^{(W)})^TE_V^{(W)}\big]\bigg\|_{\text{op}}I(M_1;\mathbf{X_1})\le\delta^2\bigg\|\mathbb{E}_{(W,V)}\big[(E_V^{(W)})^TE_V^{(W)}\big]\bigg\|_{\text{op}}B_1
\end{align}
\begin{align}
    I(M_2;\mathbf{X_1}|V,W)\le\delta^2\bigg\|\mathbb{E}_{(W,V)}\big[E_V^{(W)}(E_V^{(W)})^T\big]\bigg\|_{\text{op}}I(M_2;\mathbf{X_2})\le\delta^2\bigg\|\mathbb{E}_{(W,V)}\big[E_V^{(W)}(E_V^{(W)})^T\big]\bigg\|_{\text{op}}B_2
\end{align}

\subsection*{Evaluating the C-SDPI Constant via Signed Permutation Matrices}
Recall that $W=(W_L,W_R)\sim\text{Unif}(\mathcal{P}_{d_2})\otimes\text{Unif}(\mathcal{P}_{d_1})$, and $E_v^{(W)} = W_L E_vW_R$. Since signed permutation matrices are orthogonal, we have
\[W_L^TW_L=I_{d_2},\quad W_RW_R^T=I_{d_1}.\]
Therefore,
\begin{align}
    (E_v^{(W)})^T E_v^{(W)}
    &=
    W_R^T E_v^T W_L^T W_L E_v W_R
    =
    W_R^T E_v^T E_v W_R,
    \label{eq:ssparse-right-gram}
\end{align}
whereas
\begin{align}
    E_v^{(W)}(E_v^{(W)})^T
    &=
    W_L E_v W_R W_R^T E_v^T W_L^T
    =
    W_L E_v E_v^T W_L^T.
    \label{eq:ssparse-left-gram}
\end{align}
By Lemma~\ref{lem:signedpermmat},
\begin{align}
    \mathbb E_W
    \left[
        (E_v^{(W)})^T E_v^{(W)}
    \right]
    &=
    \mathbb E_{W_R}
    \left[
        W_R^T E_v^T E_v W_R
    \right]
    \nonumber\\
    &=
    \frac{1}{d_1}
    \operatorname{Tr}(E_v^T E_v)
    I_{d_1}
    \nonumber\\
    &=
    \frac{s'}{d_1}I_{d_1}.
    \label{eq:ssparse-gram1}
\end{align}
For the other direction, note that the transpose map is a
bijection of the set of signed permutation matrices, so
$W_L^T$ is also uniformly distributed over
$\mathcal P_{d_2}$. Hence Lemma~\ref{lem:signedpermmat}
also gives
\begin{align}
    \mathbb E_W
    \left[
        E_v^{(W)}(E_v^{(W)})^T
    \right]
    &=
    \mathbb E_{W_L}
    \left[
        W_L E_v E_v^T W_L^T
    \right]
    \nonumber\\
    &=
    \frac{1}{d_2}
    \operatorname{Tr}(E_vE_v^T)
    I_{d_2}
    \nonumber\\
    &=
    \frac{s'}{d_2}I_{d_2}.
    \label{eq:ssparse-gram2}
\end{align}
Here we used
\[
    \operatorname{Tr}(E_v^TE_v)
    =
    \operatorname{Tr}(E_vE_v^T)
    =
    \|E_v\|_F^2
    =
    s',
\]
since every $E_v$ is a signed partial permutation of size $s'$.

\noindent The identities \eqref{eq:ssparse-gram1} and \eqref{eq:ssparse-gram2} hold for every $v\in\mathcal V$. Therefore, averaging additionally over $V$ gives
\begin{equation}\label{eq:ssparse-gram-averages}
    \mathbb E_{W,V}
    \left[
        (E_V^{(W)})^T E_V^{(W)}
    \right]
    =
    \frac{s'}{d_1}I_{d_1},
    \qquad
    \mathbb E_{W,V}
    \left[
        E_V^{(W)}(E_V^{(W)})^T
    \right]
    =
    \frac{s'}{d_2}I_{d_2}.
\end{equation}
Hence, combining the above, we have
\begin{align}
    I(V,W;M_1,M_2)
    &\le
    I(M_1;M_2\mid V,W)
    \nonumber\\
    &\le
    I(M_1;\mathbf X_2\mid V,W)
    \wedge
    I(M_2;\mathbf X_1\mid V,W)
    \nonumber\\
    &\le
    \delta^2
    \left(
        \frac{s'}{d_1}B_1
        \wedge
        \frac{s'}{d_2}B_2
    \right)
    \nonumber\\
    &=
    \delta^2s'
    \left(
        \frac{B_1}{d_1}
        \wedge
        \frac{B_2}{d_2}
    \right).
    \label{eq:ssparse-mutualinfo}
\end{align}

\subsection*{Fano Bound Assembly}
Substituting Eqs.~\ref{eq:ssparse-packing-number},~\ref{eq:ssparse-rho-value}, and~\ref{eq:ssparse-mutualinfo} into Eq.~\ref{eq:avgfanodccme}, we have
\begin{equation}~\label{eq:ssparse-fano-raw}
    \MF^{\text{(cross)}}\ge\frac{1}{2}\cdot\frac{\sigma^2\delta\sqrt{s'}}
    {2\sqrt{2}}\cdot\bigg(1-\frac{\delta^2s'\bigg(\frac{B_1}{d_1}\wedge \frac{B_2}{d_2}\bigg)+1}{s'L}\bigg),
\end{equation}
where $L := \frac{3}{4}\log(2\,d_1 d_2) - \log s' - \log(4e)$. Following typical information-theoretic arguments, we choose $\delta$ such that
$\frac{I(V,W;M_1,M_2)}{\log|\mathcal{V}|}
\leq\frac{1}{2}$, for which it suffices to require
$$
    \frac{s'\delta^2(B_1/d_1 \wedge B_2/d_2)}{s'L}
    = \frac{\delta^2(B_1/d_1 \wedge B_2/d_2)}{L}
    \leq \frac{1}{2}
    \implies \delta^2 \leq \frac{L}{2}
    \left(\frac{d_1}{B_1}\vee\frac{d_2}{B_2}\right).
$$
\noindent Recalling the constraint $\delta \leq \frac{1}{\sqrt{2}}$, choose:
\begin{equation}\label{eq:ssparse-delta}
    \delta^2 = \bigg[\frac{L}{2}
    \bigg(\frac{d_1}{B_1}\vee\frac{d_2}{B_2}\bigg)
    \bigg] \wedge \frac12.
\end{equation}
Therefore, we have the bound
\begin{proposition}[$s$-Sparse Cross Communication Lower Bound]
\label{prop:ssparse-comm}
Under the assumptions of Lemma~\ref{lem:VGpartialperm}
and the additional condition $s'L \geq 4$
(both of which hold when $d_1 d_2/s'$ is sufficiently
large), we have
\begin{equation}\label{eq:ssparse-comm-final}
    \MFcross \geq \frac{\sigma^2}{32}
    \bigg(\sqrt{s' L\bigg(\frac{d_1}{B_1}
    \vee\frac{d_2}{B_2}\bigg)}
    \;\wedge\; \sqrt{s'}\bigg),
\end{equation}
where $L = \frac{3}{4}\log(2d_1 d_2)
- \log s' - \log(4e)$.

\noindent Moreover, when $d_1d_2/s'$ is sufficiently large so that $L=\Theta\left(\log\frac{d_1d_2}{s'}\right)$, we have
\begin{equation}
    \MFcross = \Omega\bigg(\sigma^2\left[
    \sqrt{s'\log\!\bigg(\frac{d_1 d_2}{s'}\bigg)
    \!\bigg(\frac{d_1}{B_1}\vee\frac{d_2}{B_2}\bigg)}
    \land\sqrt{s'}\right]\bigg).
\end{equation}
\end{proposition}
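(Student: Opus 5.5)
The plan is to substitute the choice of $\delta$ from \eqref{eq:ssparse-delta} into the raw Fano bound \eqref{eq:ssparse-fano-raw}, lower-bound the bracketed factor by an absolute constant, and read off the rate.

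\emph{Validity of the hypothesis family.} Write $R := \frac{d_1}{B_1}\vee\frac{d_2}{B_2}$, so that $\frac{B_1}{d_1}\wedge\frac{B_2}{d_2} = 1/R$.
\begin{itemize}
\item The assumption $s'L\ge 4$ forces $L>0$. Hence $\delta^2 = \frac{LR}{2}\wedge\frac12$ is strictly positive and at most $\frac12$.
\item So $\delta\in(0,1/\sqrt2]$, which is the range for which Appendix~\ref{app:verification} guarantees that the family $\{P_v^{(w)}\}$ is valid.
\item Lemma~\ref{lem:VGpartialperm} applies because $1\le s'\le d_{\min}$. Its bound gives $\log|\mathcal V|\ge s'L\ge 4$, so $|\mathcal V|\ge 2$, and Lemma~\ref{lem:avgfano} is applicable.
\end{itemize}

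\emph{Bounding the bracket.} Since $\log$ is base $2$, the $\log 2$ in \eqref{eq:avgfanodccme} equals $1$, which is the "$+1$" in \eqref{eq:ssparse-fano-raw}.
\begin{itemize}
\item Using \eqref{eq:ssparse-mutualinfo}, the information term satisfies $\frac{\delta^2 s'/R}{s'L} = \frac{\delta^2}{LR}\le \frac12$, because $\delta^2\le LR/2$.
\item The constant term satisfies $\frac{1}{s'L}\le\frac14$, by $s'L\ge4$.
\item Hence the bracket is at least $1-\frac12-\frac14=\frac14$.
\end{itemize}
Therefore
\[
\MFcross\;\ge\;\frac12\cdot\frac{\sigma^2\delta\sqrt{s'}}{2\sqrt2}\cdot\frac14=\frac{\sigma^2\delta\sqrt{s'}}{16\sqrt2}.
\]

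\emph{Evaluating $\delta\sqrt{s'}$.} From the definition of $\delta$,
\[
\delta\sqrt{s'}=\sqrt{\tfrac{s'LR}{2}}\wedge\sqrt{\tfrac{s'}{2}}=\tfrac{1}{\sqrt2}\Big(\sqrt{s'LR}\wedge\sqrt{s'}\Big).
\]
Substituting this gives exactly $\frac{\sigma^2}{32}\big(\sqrt{s'LR}\wedge\sqrt{s'}\big)$, which is \eqref{eq:ssparse-comm-final}.

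Because the mutual-information bound \eqref{eq:ssparse-mutualinfo} holds for every admissible scheme, Remark~\ref{rem:fanodcme} lets us take the infimum over encoders and decoders. This turns the per-scheme bound into a bound on the minimax quantity $\MFcross$.

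\emph{Asymptotic form.} When $d_1d_2/s'$ is large enough, Corollary~\ref{cor:lasymp} gives $L=\Theta(\log(d_1d_2/s'))$. Substituting this and absorbing the constants into $\Omega$ yields the asymptotic statement.

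\emph{Main obstacle.} There is no substantive obstacle; the only care needed is bookkeeping. One must interpret $\log 2$ as $1$ in base $2$, and ensure that the cap $\delta\le 1/\sqrt2$ (needed for validity) does not break the information inequality. The cap only decreases $\delta$, so that inequality is preserved.
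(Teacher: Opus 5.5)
Your proposal is correct and follows essentially the same route as the paper. You substitute the packing bound, the separation $\frac{\sigma^2\delta\sqrt{s'}}{2\sqrt2}$ and the C-SDPI information bound into the averaged Fano inequality, choose $\delta^2=\frac{L}{2}\bigl(\frac{d_1}{B_1}\vee\frac{d_2}{B_2}\bigr)\wedge\frac12$, and use $s'L\ge4$ to get the bracket $\ge\frac14$, which yields $\frac12\cdot\frac{1}{2\sqrt2}\cdot\frac14\cdot\frac{1}{\sqrt2}=\frac{1}{32}$; along the way you also spell out the bracket-$\ge\frac14$ step and the validity checks ($|\mathcal V|\ge2$, $\delta\in(0,1/\sqrt2]$) that the paper leaves implicit.
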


\begin{remark}
    As seen in Section~\ref{sec:lowerproofarchi}, there exist additional lower bounds pertaining to sample complexity and the limited communication budget for self-covariance estimation. The proofs for these lower bounds are presented in Appendix~\ref{app:lowerbounds}.
\end{remark}

\section{Achievability Protocol and Proof Sketch~\label{sec:proofachiev}}
In this section, we propose a protocol and derive upper bounds on both its sample complexity and communication budgets for approximation with accuracy within $\varepsilon$ of the covariance matrix under the $\|\cdot\|_F$ norm. We use the protocol proposed in Section 6 of \citep{rahmani2025fundamental} as the backbone in Steps 1--4, with a thresholding element from \citep{garg2014communication} in Step 5.

\subsubsection*{Step 1: Mean Invariance}
Each agent locally computes $X^{'(i)}_k=\frac{X^{(2i-1)}_k-X_k^{(2i)}}{\sqrt{2}}, i=1,\dots,\lfloor m/2\rfloor, \quad k=1,2$~\label{sec:protocolstep1}. The agents use the same pair indices and discard an unpaired final observation (if any). The resulting joint observations are independent, mean-zero, $\sigma-$sub-Gaussian and have the original covariance $C$. For notational simplicity, from this point in Section 6 and throughout Appendix~\ref{app:achievfull}, relabel $\lfloor m/2\rfloor$ as $m$ and the differenced observations $X_k^{'(i)}$ as $X_k^{(i)}$. Thus $m$ below is the number of available mean-zero observations. The factor 2 in Theorem~\ref{thm:sachievability} already accounts for this relabeling.

\subsubsection*{Step 2: Bit Allocation~\label{sec:protocolstep2}}
\noindent Agent $k$ reserves one error-flag bit and splits the remaining budget into
\[
B_k^{(\text{self})}=\left\lceil 2d_k^2\log\bigg(\frac{528}{\tilde{\varepsilon}}\bigg)\right\rceil,\qquad B_k^{(\mathrm{cov})} = \beta d_k
\left\lceil
    C_2\left(
        d+\frac{\sigma^4s\log(d_1d_2)}{\varepsilon^2}
    \right)\right\rceil.
\]
The agent sends one fixed-length message consisting of the flag and the two payloads, after performing both local tests below. If either test fails, the flag is set to one and the payloads are padded with fixed bits. Otherwise the flag is zero and the payloads encode the two quantization indices.

\subsubsection*{Step 3: Self-Covariance Quantization~\label{sec:protocolstep3}}
\textbf{Empirical Estimation:} Agent $k$ locally computes its estimate of $C_{kk}$ as $$\tilde{C}_{kk} = \frac{1}{m}\sum_{i=1}^m X_k^{(i)}{X_k^{(i)}}^T.$$
\noindent \textbf{Quantization of Estimated Self-Covariance Matrices:} By Lemma~\ref{lem:f1crosscov-conc}, $\tilde{C}_{kk}$ satisfies $\|\tilde{C}_{kk}\|_{op}<11\sigma^2$ with high probability. If this holds, Agent $k$ quantizes $\tilde{C}_{kk}$ using the covering-net scheme in Section~\ref{sec:packing} for the operator-norm ball $\mathcal{B}^{d_k\times d_k}_{\|\cdot\|_{op}}(11\sigma^2)$ with $B_k^{(\text{self})}$ bits. The server decodes $\hat{C}_{kk}$ satisfying $\|\hat{C}_{kk}-\tilde{C}_{kk}\|_{op}\leq\omega'_k$, where $\omega'_k=33\sigma^2\cdot 2^{-B_k^{(\text{self})}/d_k^2}$. Before assembly, the server symmetrizes the decoded matrix. Since $\tilde C_{kk}$ is symmetric and $\| A^T\|_\op=\|A\|_\op$, write
\[
\bigg\|\frac{\hat C_{kk} + \hat{C}^T_{kk}}{2}-\tilde C_{kk}\bigg\|_\op
\le\frac{\|\hat C_{kk}-\tilde C_{kk}\|_\op + \|(\hat C_{kk}-\tilde C_{kk})^T\|_{\op}}{2}\le \omega'_k.
\]
Thus, the replacement $\hat C_{kk} \leftarrow \frac{\hat C_{kk}+\hat C_{kk}^T}{2}$ preserves the quantization guarantee and uses no additional bits. For notational simplicity, we continue to denote the resulting symmetric matrix by $\hat C_{kk}$. If $\|\tilde{C}_{kk}\|_{op}\geq11\sigma^2$, Agent $k$ sends an error signal.
% \begin{remark}[Note to Self] To make this more self-contained, perhaps also quote the important parts of Appendix A.6.1 into an earlier section.
% \end{remark}
\subsubsection*{Step 4a: Subsampling for Cross-Covariance}
To approximate the cross-covariance, select $n=\big\lfloor(\frac{B_1^{\text{(cov)}}}{d_1}\land\frac{B_2^{\text{(cov)}}}{d_2})/\beta\land m\big\rfloor$ paired samples. Define $\mathbf{X}_1\in\mathbb{R}^{d_1\times n}$ and $\mathbf{X}_2\in\mathbb{R}^{d_2\times n}$ by concatenating the first $n$ samples from each agent, such that $\mathbf{X}_k=[X_k^{(1)},\dots, X_k^{(n)}]$. The empirical estimator for $C_{21}$ is then $\tilde{C}_{21}=\frac{1}{n}\mathbf{X}_2\mathbf{X}_1^T$. Theorem~\ref{thm:sachievability} ensures $1\le n \le m$, as verified in Appendix~\ref{app:achievfull}.
\subsubsection*{Step 4b: Data Quantization for Cross-Covariance~\label{sec:protocolstep4}}
By Lemma~\ref{lem:f3subG-opnorm}, $\|\mathbf{X}_k\|_{op}<6\sigma\sqrt{d_k+n}$ with high probability. If this holds, Agent $k$ quantizes $\mathbf{X}_k$ using the covering-net scheme of Section~\ref{sec:packing} for the operator-norm ball $\mathcal{B}^{d_k\times n}_{\|\cdot\|_{op}}(6\sigma\sqrt{d_k+n})$ with $B_k^{(\text{cov})}$ bits. The server receives $\hat{\mathbf{X}}_k$ satisfying $\|\hat{\mathbf{X}}_k-\mathbf{X}_k\|_{op}\leq\omega''_k$, where $\omega''_k=18\sigma\sqrt{d_k+n}\cdot 2^{-B_k^{(\text{cov})}/(nd_k)}$.
% By Lemma~\ref{lem:f3subG-opnorm}, $\|\mathbf{X}_k\|_{op}\leq 6\sigma\sqrt{d_k+n}$ with high probability. If this holds, Agent $k$ quantizes $\mathbf{X}_k$ using the covering-net scheme of Section~\ref{sec:packing} for the operator-norm ball $\mathcal{B}^{nd_k}_{\|\cdot\|_{op}}(6\sigma\sqrt{d_k+n})$ with $B_k^{(\text{cov})}$ bits. The server receives $\hat{\mathbf{X}}_k$ satisfying \[
% \|\hat{\mathbf{X}}_k - \mathbf{X}_k\|_\op\le \omega''_k,\qquad \omega_k''=18\sigma\sqrt{d_k+n}\cdot 2^{-B_k^{(\text{cov})}/(nd_k)}.
% \]
If $\|\mathbf{X}_k\|_{op}\geq6\sigma\sqrt{d_k+n}$, Agent $k$ sends an error signal. If an error signal is received from either agent, output a zero matrix $\hat{C}=0$.

\subsubsection*{Step 5: Server Reconstruction of Covariance}
Upon receiving the quantized data $\hat{\mathbf{X}}_1,\hat{\mathbf{X}}_2$, the central server initially computes the estimate $\hat{C}_{21}=\frac{1}{n}\hat{\mathbf{X}}_2\hat{\mathbf{X}}_1^T$. If an error signal is received from either agent, the server outputs the zero matrix $\hat{C}=0$.

\paragraph{Thresholding.} For $i\in[d_2],j\in[d_1]$, set
$$[\hat{C}_{21}^{\text{threshold}}]_{ij}=\begin{cases}
    [\hat{C}_{21}]_{ij}\text{, if }|[\hat{C}_{21}]_{ij}|\geq\lambda^*,\\
    0,\text{ otherwise},
\end{cases}$$
where $\lambda^*:=\lambda+Q_{\text{max}}$, with $\lambda=c_0\sigma^2\sqrt{\log(d_1d_2)/n}$ where $c_0$ is a parameter to be determined in the proof, and $Q_{\text{max}}=1080\sigma^2\cdot 2^{-\beta}$.
Then assemble the block matrix
$$\hat{C}^*=\begin{pmatrix}
    \hat{C}_{11} & (\hat{C}_{21}^{\text{threshold}})^T\\
    \hat{C}_{21}^{\text{threshold}} & \hat{C}_{22}
\end{pmatrix}.$$
\paragraph{Spectral Decomposition.} The matrix $\hat C^*$ is symmetric by Step 3. The server performs a spectral decomposition and retains its non-negative eigenvalues, and finally returns
\begin{equation}~\label{eq:achievestimator}
    \hat{C}=\hat{C}^*_+.
\end{equation}
This produces a positive semi-definite estimate.

\begin{remark}
The proof of Theorem~\ref{thm:sachievability} bounds the distortion through a block decomposition into self-covariance and cross-covariance errors. The self-covariance analysis follows~\cite[Section~G.2]{rahmani2025fundamental}. The cross-covariance analysis decomposes the entry-wise error into quantization noise $Q_{ij}$ and statistical noise $S_{ij}$, controls these on the event $\mathcal{E}^c$, and separately sums over the non-zero and zero entries of $C_{21}$. The complete calculation is given in Appendix~\ref{app:achievfull}.
\end{remark}

\section{Discussion}~\label{sec:discussion}
% We have established near-optimal minimax bounds for distributed covariance matrix estimation in the vertical-split model under elementwise $s$-sparsity of the cross-covariance $C_{21}$.  The communication budget per agent drops from $\Omega(\sigma^4 d_k d_1 d_2/\varepsilon^2)$ in the dense setting (\cite{rahmani2025fundamental}) to $\Omega(\sigma^4 d_k s'\log(d_1 d_2/s')/\varepsilon^2)$ under $s$-sparsity, and a matching achievable scheme based on covering-net quantization and entry-wise hard thresholding confirms that this improvement is tight up to polylogarithmic factors.
We have established minimax lower bounds and an achievable scheme for distributed covariance matrix estimation in the vertical-split model under elementwise $s$-sparsity of the cross-covariance $C_{21}$. In their respective validity regimes, the communication lower bounds per agent scale as $\Omega(\sigma^4 d_k d_1 d_2/\varepsilon^2)$ in the dense setting (\cite{rahmani2025fundamental}) and $\Omega(\sigma^4 d_k s'\log(d_1 d_2/s')/\varepsilon^2)$ under $s$-sparsity. Under the conditions of Remark~\ref{rem:achievmatch}, our achievable scheme based on covering-net quantization and entry-wise hard thresholding matches the sparse lower-bound rate up to polylogarithmic factors for its cross-covariance communication component.

\paragraph{Open problems.}
We present a few possible extensions of the work.

\emph{Unknown sparsity level.}
Our achievable scheme requires knowledge of $s$ to calibrate the threshold $\lambda^*$.  In practice, $s$ is rarely known.  Whether a protocol that requires no prior knowledge of $s$ can achieve the same $s\log(d_1 d_2/s)$ rate in the vertical split remains open.

\emph{Other structural assumptions.}
Beyond elementwise sparsity, questions on whether other structural assumptions on covariance matrices such as bandedness, low-rank structure, Toeplitz, and group sparsity similarly reduce communication cost are natural extensions of the present work.

\bibliographystyle{plainnat}
\bibliography{references}

\appendix
\section{Verification of Hypothesis Family Properties}
\label{app:verification}

For the $s$-sparse hypothesis family, we have:

\subsection*{$C_v^{(w)}$ is positive definite,
$\forall v \in \mathcal{V}, w \in \mathcal{W}$.}

\begin{proof}
    By the Schur complement criterion
    (Lemma~\ref{lem:schur}), $C_v^{(w)} \succ 0$ if and
    only if
    \[
        I_{d_2}
        -
        \delta^2
        E_v^{(w)}(E_v^{(w)})^T
        \succ 0.
    \]
    Since $E_v^{(w)}$ is a signed partial permutation,
    \[
        E_v^{(w)}(E_v^{(w)})^T
        =
        \sum_k e_{j_k}e_{j_k}^T
    \]
    has eigenvalues in $\{0,1\}$. Hence
    \[
        I_{d_2}
        -
        \delta^2
        E_v^{(w)}(E_v^{(w)})^T
    \]
    has minimum eigenvalue $1-\delta^2$.
    Since $\delta^2\leq 1/2$, we have
    \[
        1-\delta^2\geq \frac12>0,
    \]
    and therefore $C_v^{(w)}\succ0$.
\end{proof}

\subsection*{The sub-Gaussianity condition is satisfied, i.e.,
$P_v^{(w)} \in \mathrm{subG}^{(d)}(\sigma)$,
$\forall v \in \mathcal{V}, w \in \mathcal{W}$.}

\begin{proof}
   Utilizing Definition~\ref{def:subgaussianvec} and Remark~3.3,
   since $\|E_v^{(w)}\|_{\op}=1$ and
   $\delta\leq 1/\sqrt{2}<1$, we have
    \[
        \|C_v^{(w)}\|_{\op}
        =
        \frac{\sigma^2}{2}(1+\delta)
        \leq \sigma^2,
    \]
    and we are done.
\end{proof}

\subsection*{Separation is characterized by
$\rho_F=\sqrt{2}\rho_F^{\mathrm{(cross)}}$.}

\begin{proof}
Since the diagonal blocks
\[
    C_{11}=\frac{\sigma^2}{2}I_{d_1}
    \qquad\text{and}\qquad
    C_{22}=\frac{\sigma^2}{2}I_{d_2}
\]
are identical across all hypotheses, the difference $C_v^{(w)}-C_{v'}^{(w)}$ is zero on the diagonal blocks. Its lower-left block is $\frac{\sigma^2\delta}{2}\big(E_v^{(w)}-E_{v'}^{(w)}\big)$, and its upper-right block is the transpose of this block.
% are identical across all hypotheses, the difference
% $C_v^{(w)}-C_{v'}^{(w)}$ is zero on the diagonal blocks and
% equals
% \[
%     \pm\frac{\sigma^2\delta}{2}
%     \big(E_v^{(w)}-E_{v'}^{(w)}\big)
% \]
% on the two off-diagonal blocks. 
Recall that, for
$w=(w_L,w_R)\in\mathcal W$,
\[
    E_v^{(w)}=w_LE_vw_R.
\]
Hence
\[
    E_v^{(w)}-E_{v'}^{(w)}
    =
    w_L(E_v-E_{v'})w_R.
\]
Since $w_L$ and $w_R$ are signed permutation matrices, they are
orthogonal, and therefore
\[
    \|E_v^{(w)}-E_{v'}^{(w)}\|_F
    =
    \|E_v-E_{v'}\|_F.
\]
In particular, the separation is independent of $w$.

Let
\[
    D_v^{(w)}
    :=
    (C_v^{(w)})_{21}
    =
    \frac{\sigma^2\delta}{2}E_v^{(w)}.
\]
Then
\[
    \big\|C_v^{(w)}-C_{v'}^{(w)}\big\|_F^2
    =
    \big\|(D_v^{(w)}-D_{v'}^{(w)})^T\big\|_F^2
    +
    \big\|D_v^{(w)}-D_{v'}^{(w)}\big\|_F^2
    =
    2\big\|D_v^{(w)}-D_{v'}^{(w)}\big\|_F^2,
\]
where the first equality uses $C_{12}=C_{21}^T$ and the
second uses $\|A^T\|_F=\|A\|_F$. Taking the infimum over
$w\in\mathcal W$ and $v\neq v'$ on both sides and
square-rooting gives
\[
    \rho_F
    =
    \sqrt{2}\,
    \rho_F^{\mathrm{(cross)}}.
\]
\end{proof}

\section{Self-Covariance and Full Covariance Lower Bounds~\label{app:lowerbounds}}
\subsection{Sample Complexity Lower Bound for $C_{21}$~\label{app:crosssample}}
We utilise the same hypothesis family
Eq.~\eqref{eq:ssparse-Cvw}, with the omission of $W$. More precisely, consider the same index set $\mathcal{V}=[1:|\tilde{\mathcal{G}}|]$ and a corresponding family of distributions $\mathcal{P}_{\mathcal{V}}=\{P_v\}_{v\in\mathcal{V}}$, where $P_v=N(0,C_v)$ with
\begin{equation}
    C_v=\frac{\sigma^2}{2}\begin{pmatrix}
        I_{d_1}&\delta E_v^T\\\delta E_v&I_{d_2}
    \end{pmatrix},
\end{equation}
where $E_v$ is defined in Eq.~\ref{eq:base-Ev}, $E_v\in\R^{d_2\times d_1}$ with $\|E_v\|_{\text{op}}\le 1$, and $\delta^2\le1/2$ is again a parameter to be determined subsequently. By the chain-rule argument in Eq.~\ref{eq:chainrule}, with $W$ omitted and using the fact that the marginals do not depend on $V$, we have $I(V;M_1,M_2)\le I(M_1;M_2|V)$. By data processing under the conditional law given $V$, $I(M_1;M_2|V)\le I(\bX_1;\bX_2|V)$ because $M_1\to \mathbf{X_1}\to \mathbf{X_2}\to M_2$. Hence, we have $I(V;M_1,M_2)\le I(\mathbf{X}_1;\mathbf{X}_2|V)$.

\noindent Given $m$ i.i.d.\ samples, the mutual information tensorizes:
$I(\mathbf{X}_1;\mathbf{X}_2|V=v)
= m\,I(X_1;X_2|V=v)$. By Lemma~\ref{lem:schur} with $E_v E_v^T = \sum_{k=1}^{s'}e_{j_k}e_{j_k}^T$ having $s'$ eigenvalues equal to 1, we have
\begin{align}~\label{eq:detcv}
    \det(C_v) &= \left(\frac{\sigma^2}{2}\right)^{d_1+d_2}
    \det(I_{d_2}-\delta^2 E_vE_v^T)
    = \left(\frac{\sigma^2}{2}\right)^{d_1+d_2}
    (1-\delta^2)^{s'}.
\end{align}

\noindent Then, we have
\begin{align}~\label{eq:ssparsesamplecrossinfo}
    I(\mathbf{X_1};\mathbf{X_2}|V=v)&=mI(X_1;X_2|V=v)\nonumber\\
    &=m[h(X_1|V=v)+h(X_2|V=v)-h(X_1,X_2|V=v)]\nonumber\\
    &=\frac{m}{2}\log\bigg(\frac{\det(\frac{\sigma^2}{2}I_{d_1})\det(\frac{\sigma^2}{2}I_{d_2})}{\det(C_v)}\bigg)\nonumber\\
    &=-\frac{s'm}{2}\log(1-\delta^2)\nonumber\\
    &\le s'm\delta^2.
\end{align}
The last inequality follows since $\delta^2\le\frac{1}{2}$, and the third equality follows by Theorem~\ref{thm:gaussianentropy}.

Substituting Eqs.~\ref{eq:ssparse-packing-number},~\ref{eq:ssparse-rho-value}, and~\ref{eq:ssparsesamplecrossinfo} into Eq.~\ref{eq:fanodccme}, we have
\begin{equation}\label{eq:ssparse-sample-fano}
    \MF^{\text{cross}}\ge\frac{\sigma^2\delta\sqrt{s'}}{4\sqrt{2}}\bigg(1-\frac{s'm\delta^2+1}{s'L}\bigg).
\end{equation}
Choose
\begin{equation}\label{eq:ssparse-sample-delta}
    \delta^2 = \frac{L}{2m}\;\wedge\;\frac{1}{2}.
\end{equation}
We then have the following bound:
\begin{proposition}[$s$-Sparse Cross-Covariance Sample Bound]
\label{prop:ssparse-sample-cross}
Assume $s'L\ge 4$, where
$L=\frac{3}{4}\log(2d_1d_2)-\log s'-\log(4e)$.
Then,
\begin{equation}\label{eq:ssparse-sample-final}
    \MFcross\geq\frac{\sigma^2}{32}
    \bigg(\sqrt{\frac{s'L}{m}}\;\wedge\;
    \sqrt{s'}\bigg).
\end{equation}
In particular, whenever $d_1d_2/s'$ is sufficiently large so that $L=\Omega\big(\log\frac{d_1d_2}{s'}\big)$, we have
\begin{equation}
    \MFcross=\Omega\bigg(\sigma^2\left[
    \sqrt{\frac{s'\log(d_1d_2/s')}{m}}\land \sqrt{s'}\right]\bigg).
\end{equation}
\end{proposition}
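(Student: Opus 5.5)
The plan is to take the Fano bound \eqref{eq:ssparse-sample-fano} with the choice \eqref{eq:ssparse-sample-delta}, bound its bracketed factor below by a constant, and then rewrite $\delta\sqrt{s'}$ in the form of the statement. Two earlier facts make the substitution in \eqref{eq:ssparse-sample-fano} legitimate. First, \eqref{eq:ssparse-sample-fano} is valid for every admissible $\delta$ with $\delta^2\le 1/2$. Second, the choice $\delta^2=\frac{L}{2m}\wedge\frac12$ meets this constraint, so the hypothesis family stays inside $\mathcal{P}_s$ by Appendix~\ref{app:verification}. Since all logarithms are base $2$, the term $\log 2$ in Lemma~\ref{lem:fanodcme} equals $1$, which is why a bare $+1$ appears in the numerator.

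The first step is the lower bound on the bracket $1-\frac{s'm\delta^2+1}{s'L}$, split into two terms. For the information term, $\frac{s'm\delta^2}{s'L}=\frac{m\delta^2}{L}\le \frac{m}{L}\cdot\frac{L}{2m}=\frac12$ by the choice of $\delta$. For the constant term, the standing assumption $s'L\ge 4$ gives $\frac{1}{s'L}\le\frac14$. Together these show the bracket is at least $1-\frac12-\frac14=\frac14$. Hence
\[
\MFcross\ \ge\ \frac{\sigma^2\delta\sqrt{s'}}{16\sqrt2}.
\]

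The second step is to identify the rate. We have $\delta=\sqrt{\frac{L}{2m}\wedge\frac12}=\frac{1}{\sqrt2}\bigl(\sqrt{L/m}\wedge 1\bigr)$, so $\delta\sqrt{s'}=\frac{1}{\sqrt2}\bigl(\sqrt{s'L/m}\wedge\sqrt{s'}\bigr)$. Substituting this gives exactly $\frac{\sigma^2}{32}\bigl(\sqrt{s'L/m}\wedge\sqrt{s'}\bigr)$, which is \eqref{eq:ssparse-sample-final}. Because Lemma~\ref{lem:fanodcme} applies to an arbitrary scheme, the infimum over encoders and decoder can then be taken as described in Remark~\ref{rem:fanodcme}.

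The asymptotic form follows from Corollary~\ref{cor:lasymp}. Once $d_1d_2/s'\ge 32e^4$, we have $L\ge\frac14\log(d_1d_2/s')$, and substituting this into \eqref{eq:ssparse-sample-final} yields the $\Omega(\cdot)$ statement. There is no real obstacle here. The only step needing care is the constant bookkeeping: the assumption $s'L\ge4$ must cover both the $\log 2$ term and the requirement $|\mathcal{V}|\ge 2$ in Lemma~\ref{lem:fanodcme}, and it does, since it gives $\log|\mathcal{V}|\ge s'L\ge 4$ by \eqref{eq:ssparse-packing-number}.
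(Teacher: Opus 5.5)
Your proposal is correct and follows the paper's route. You substitute $\log|\mathcal{V}|\ge s'L$, the separation \eqref{eq:ssparse-rho-value}, and the information bound $I(V;M_1,M_2)\le s'm\delta^2$ into Fano to obtain \eqref{eq:ssparse-sample-fano}, take $\delta^2=\frac{L}{2m}\wedge\frac12$, and use $s'L\ge 4$ to bound the bracket below by $\frac14$; your bookkeeping ($\frac{\sigma^2\delta\sqrt{s'}}{16\sqrt{2}}$ with $\delta\sqrt{s'}=\frac{1}{\sqrt{2}}\bigl(\sqrt{s'L/m}\wedge\sqrt{s'}\bigr)$, giving $\frac{\sigma^2}{32}$) just makes explicit the arithmetic the paper leaves implicit.
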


\subsection{Sample Complexity Lower Bound for Self-Covariance~\label{app:fullsample}}
The bound contains a $\frac{d_1\lor d_2}{\sqrt{m}}$ term. This is because, since $\|\hat{C}-C\|_F\geq\|\hat{C}_{22}-C_{22}\|_F$, it suffices to lower-bound the minimax risk of estimating $C_{22}$ alone.  The class $\mathcal{P}_s$ contains all $\sigma$-sub-Gaussian distributions with $C_{21}=0$ and arbitrary $C_{22}$ satisfying $\|C_{22}\|_{\mathrm{op}}\leq\sigma^2$; over this subclass, the sparsity constraint is trivially satisfied for every $s$.  The centralized minimax risk for estimating a $p\times p$ covariance matrix from $m$ i.i.d.\ samples under the Frobenius norm is $\Omega(\sigma^2(p/\sqrt{m}\wedge\sqrt{p}))$ (\cite{ABD20,DMR20}), and by monotonicity this extends to the distributed setting. Applying this with $p=d_2$, and by symmetry with $p=d_1$, yields the result.

For completeness, we describe an adaptation of the argument of~\cite[Section~E.1.2]{rahmani2025fundamental} by splitting within Agent~2's coordinates and perturbing the intra-agent cross-covariance $C_{2a,2b}$ with dense matrices while setting $C_{21}=0$, so that the sparsity constraint is trivially satisfied and the estimation difficulty arises from the dense self-covariance.

\begin{proposition}[Full Covariance Sample Bound]
\label{prop:sample-full}
Suppose $d_1\vee d_2\geq 4$.  Then for any $s$ and any
DCME scheme with parameters $(\sigma,m,d_{1:2},B_{1:2})$,
\begin{equation}\label{eq:sample-full}
    \MF \;\geq\; \frac{\sigma^2}{7168}
    \left(\frac{d_1\vee d_2}{\sqrt{m}}
    \;\wedge\;\sqrt{2(d_1\vee d_2)}\right).
\end{equation}
In particular,
\begin{equation}\label{eq:sample-full-asymp}
    \MF \;=\; \Omega\!\left(\sigma^2
    \left(\frac{d_1\vee d_2}{\sqrt{m}}
    \;\wedge\;\sqrt{d_1\vee d_2}\right)\right).
\end{equation}
\end{proposition}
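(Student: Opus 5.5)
We reduce to estimating $C_{22}$ alone and build a Fano family over dense perturbations of an intra-agent block, with $C_{21}=0$ so every hypothesis lies in $\mathcal{P}_s$ for every $s$.

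\textbf{Reduction.} Assume without loss of generality that $d_2=d_1\vee d_2\ge 4$; the other case is symmetric. Since $\|\hat{C}-C\|_F\ge\|\hat C_{22}-C_{22}\|_F$, and any DCME scheme yields an estimator of $C_{22}$ that is a function of $(M_1,M_2)$, and hence of $\bX_2$ together with independent data, it suffices to lower-bound the centralized risk of estimating $C_{22}$ from $m$ i.i.d.\ samples of $X_2$. By the data processing inequality, $I(V;M_1,M_2)\le I(V;\bX_1,\bX_2)$. Take $X_1\sim N(0,\frac{\sigma^2}{2}I_{d_1})$ independent of $X_2$ and of $V$; then this mutual information equals $I(V;\bX_2)$.

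\textbf{Hypothesis family.} Following \cite[Section~E.1.2]{rahmani2025fundamental}, split Agent~2's coordinates into halves $2a,2b$ of sizes $p=\lfloor d_2/2\rfloor$ and $d_2-p$. Set
\[
C_{22}^{(v)}=\frac{\sigma^2}{2}\begin{pmatrix} I & \delta D_v^T\\ \delta D_v & I\end{pmatrix},
\]
where $\{D_v\}$ is a packing of the operator-norm unit ball in $\mathbb{R}^{(d_2-p)\times p}$. The volumetric argument of \cite[Appendix~A.6]{rahmani2025fundamental} supplies such a packing with $\log|\mathcal V|\gtrsim p(d_2-p)$ and pairwise Frobenius separation $\gtrsim\sqrt{p(d_2-p)}$; Gilbert--Varshamov sign matrices scaled by $1/\sqrt{d_2}$ are an alternative. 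Positive definiteness and $\sigma$-sub-Gaussianity follow exactly as in Appendix~\ref{app:verification}, since $\|D_v\|_{\op}\le1$ and $\delta\le1/\sqrt2$. The resulting separation is $\rho_F\gtrsim\sigma^2\delta\sqrt{p(d_2-p)}\asymp\sigma^2\delta d_2$.

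\textbf{Information bound and Fano.} Bound $I(V;\bX_2)\le m\max_v I(X_{2a};X_{2b}\mid V=v)$, or more crudely use the KL divergence to the mixture center. Via Theorem~\ref{thm:gaussianentropy} and Lemma~\ref{lem:schur}, as in \eqref{eq:ssparsesamplecrossinfo}, this gives $-\frac m2\log\det(I-\delta^2D_vD_v^T)\le m\delta^2\|D_v\|_F^2$. The step needing care is that $\|D_v\|_F^2$ can be as large as $p$ for operator-norm-ball elements. I would therefore use a packing with $\|D_v\|_F^2\asymp p$, which gives $I\lesssim m\delta^2 d_2$ against $\log|\mathcal V|\asymp d_2^2$.

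Lemma~\ref{lem:fanodcme} then gives a risk bound of order $\sigma^2\delta d_2\bigl(1-\frac{m\delta^2d_2+1}{c\,d_2^2}\bigr)$. Choose $\delta^2\asymp\frac{d_2}{m}\wedge\frac12$; the condition $d_2\ge4$ ensures the $\log 2$ term is absorbed. This yields $\sigma^2\bigl(d_2/\sqrt m\wedge d_2\bigr)$ up to constants.

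\textbf{Main obstacle: the cap.} This yields the cap $d_2$, not the stated $\sqrt{2d_2}$. The stated cap is smaller, so the bound still holds, since $d_2/\sqrt m\wedge\sqrt{2d_2}\le d_2/\sqrt m\wedge d_2$. I expect the main work to be the bookkeeping of explicit constants (7168) from the chosen packing and this choice of $\delta$. The construction itself is not the hard part.
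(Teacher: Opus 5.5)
Your architecture matches the paper's. You set $C_{21}=0$ and split Agent~2's coordinates into halves. You perturb the intra-agent block by $\delta$ times a dense packing of the operator-norm unit ball, bound $I(V;M_1,M_2)\le I(\bX_{2a};\bX_{2b}\mid V)\le m d_2\delta^2/2$ against $\log|\mathcal V|\gtrsim d_2^2$, and take $\delta^2\asymp d_2/m\wedge\frac12$.

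The step that fails is the separation. Every $D$ in the operator-norm unit ball of $\mathbb{R}^{(d_2-p)\times p}$ satisfies $\|D\|_F\le\sqrt{p}\,\|D\|_{\op}\le\sqrt p$, which you use yourself in the information step. So any two hypotheses are within Frobenius distance $2\sqrt p$, and a packing with separation $\sqrt{p(d_2-p)}\asymp d_2$ does not exist, whether volumetric or built from scaled sign matrices. The correct scale is $\|D_v-D_{v'}\|_F\gtrsim\sqrt p$; the paper takes $\omega=\sqrt{d_2'}/56$ with $|\mathcal V|\ge 4^{d_2'd_2''}$ from \cite[Lemma~A.10]{rahmani2025fundamental}. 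This gives $\rho_F\ge\delta\sigma^2\sqrt{d_2}/448$, a factor $\sqrt{d_2}$ below what you claim.

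This error makes your last two paragraphs inconsistent. With your stated separation, the choice $\delta^2\asymp d_2/m$ would give $\sigma^2 d_2^{3/2}/\sqrt m$, not the $\sigma^2 d_2/\sqrt m$ you wrote. Your information bound never uses $B_k$, so that would also lower-bound the centralized risk, contradicting the $O(\sigma^2 d/\sqrt m)$ Frobenius error of the sample covariance on this Gaussian family.

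Likewise, the cap $\sigma^2 d_2$ you discuss as the main obstacle is unattainable by any construction. The trivial estimator $\hat C=0$ has risk $\|C\|_F\le\sqrt d\,\|C\|_{\op}\le\sigma^2\sqrt{d}\le\sigma^2\sqrt{2(d_1\vee d_2)}$. So the stated cap $\sqrt{2(d_1\vee d_2)}$ is not a weakening of something stronger; it is exactly the ceiling your family reaches at $\delta=1/\sqrt2$.

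With the corrected separation, the constants close exactly as in the paper. For $\delta^2=\frac{d_2}{4m}\wedge\frac12$ and $d_2\ge4$, one has $I/\log|\mathcal V|\le\frac12$ and $1/\log|\mathcal V|\le 4/d_2^2\le\frac14$. Lemma~\ref{lem:fanodcme} then gives risk at least $\frac12\cdot\frac{\delta\sigma^2\sqrt{d_2}}{448}\cdot\frac14$, which equals $\frac{\sigma^2}{7168}\bigl(\frac{d_2}{\sqrt m}\wedge\sqrt{2d_2}\bigr)$.
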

\begin{proof}
Without loss of generality assume $d_2\geq d_1$.  Define
$X_{2a}=(X_2)_{[1:d_2']}$ and
$X_{2b}=(X_2)_{[d_2'+1:d_2]}$ with
$d_2'=\lfloor d_2/2\rfloor$, $d_2''=\lceil d_2/2\rceil$.
For each $v\in\mathcal{V}$, define
\[
    C_v=\frac{\sigma^2}{2}\begin{pmatrix}
    I_{d_1} & 0 & 0\\
    0 & I_{d_2'} & \delta F_v^\top\\
    0 & \delta F_v & I_{d_2''}
    \end{pmatrix},
\]
where $\{F_v\}_{v\in\mathcal{V}}$ is an $\omega$-packing under the Frobenius norm of the operator-norm unit ball in $\mathbb{R}^{d_2''\times d_2'}$, with $\omega=\sqrt{d_2'}/56$ and $|\mathcal{V}|\geq4^{d_2'd_2''}$, whose existence follows from~\cite[Lemma~A.10]{rahmani2025fundamental}. Since $(C_v)_{21}=0$ for all $v$, the sparsity constraint is satisfied for every $s$.

The remainder of the proof follows the argument of~\cite[Section~E.1.2]{rahmani2025fundamental} with the substitution $(d/2,\,d/2)\to(d_2',\,d_2'')$. For brevity, we refer the reader to \cite{rahmani2025fundamental}. The packing cardinality satisfies $\log|\mathcal{V}|\geq d_2^2/4$, the separation satisfies $\rho_F\geq\delta\sigma^2\sqrt{d_2}/448$, and the mutual information satisfies $I(V;M_1,M_2)\le I(V;\bX_{2a},\bX_{2b})\leq I(\bX_{2a};\bX_{2b}\mid V) \leq md_2\delta^2/2$.  Substituting into Fano's inequality and choosing $\delta^2=(d_2/(4m))\wedge\tfrac{1}{2}$ yields
\[
    \MF\geq\frac{\sigma^2}{7168}\left(
    \frac{d_2}{\sqrt{m}}\;\wedge\;\sqrt{2d_2}\right).
\]
By symmetry (splitting Agent~1 if $d_1>d_2$), $\MF\geq\frac{\sigma^2}{7168} ((d_1\vee d_2)/\sqrt{m}\;\wedge\;\sqrt{2(d_1\vee d_2)})$.
\end{proof}
% \begin{proof}
% Without loss of generality assume $d_2\geq d_1$.  Define
% $X_{2a}=(X_2)_{[1:d_2']}$ and
% $X_{2b}=(X_2)_{[d_2'+1:d_2]}$ with
% $d_2'=\lfloor d_2/2\rfloor$, $d_2''=\lceil d_2/2\rceil$.
% For each $v\in\mathcal{V}$, define
% \[
%     C_v=\frac{\sigma^2}{2}\begin{pmatrix}
%     I_{d_1} & 0 & 0\\
%     0 & I_{d_2'} & \delta F_v^\top\\
%     0 & \delta F_v & I_{d_2''}
%     \end{pmatrix},
% \]
% where $\{F_v\}_{v\in\mathcal{V}}$ is an $\omega$-packing of the operator-norm unit ball in $\mathbb{R}^{d_2''\times d_2'}$ from~\cite[Appendix~A.6]{rahmani2025fundamental}. Since $(C_v)_{21}=0$ for all $v$, the sparsity constraint is satisfied for every $s$.

% The remainder of the proof is identical to~\cite[Section~E.1.2]{rahmani2025fundamental} with the substitution $(d/2,\,d/2)\to(d_2',\,d_2'')$. For brevity, we refer the reader to \cite{rahmani2025fundamental}. The packing number satisfies $\log|\mathcal{V}|\geq d_2^2/4$, the separation satisfies $\rho_F\geq\delta\sigma^2\sqrt{d_2}/448$, and the mutual information satisfies $I(V;M_1,M_2)\le I(V;\bX_{2a},\bX_{2b})\leq I(\bX_{2a};\bX_{2b}\mid V) \leq md_2\delta^2/2$.  Substituting into Fano's inequality and choosing $\delta^2=(d_2/(4m))\wedge\tfrac{1}{2}$ yields
% \[
%     \MF\geq\frac{\sigma^2}{7168}\left(
%     \frac{d_2}{\sqrt{m}}\;\wedge\;\sqrt{2d_2}\right).
% \]
% By symmetry (splitting Agent~1 if $d_1>d_2$), $\MF\geq\frac{\sigma^2}{7168} ((d_1\vee d_2)/\sqrt{m}\;\wedge\;\sqrt{2(d_1\vee d_2)})$.
% \end{proof}

\subsection{Communication Lower Bound for Self-Covariance~\label{app:self-comm}}
We utilize the result established in~\cite[Appendix~E.2]{rahmani2025fundamental} via a hypothesis family that varies $C_{kk}$ while setting $C_{21}=0$.  Since $|C_{21}|_0=0\leq s$ for every $s$, the family lies in $\mathcal{P}_s$ and the bound applies unchanged.
\begin{proposition}[Self-Covariance Communication Bound]
\label{prop:comm-self}
\textup{(\cite[Appendix~E.2, Eq.~(148)]{rahmani2025fundamental}.)}
For any $s$ and any DCME scheme with parameters
$(\sigma,m,d_{1:2},B_{1:2})$,
\begin{equation}\label{eq:comm-self}
    \MF \;\geq\; \frac{\sigma^2}{56}\left(
    \sqrt{d_1}\cdot 2^{-16B_1/d_1^2}
    \;\vee\;
    \sqrt{d_2}\cdot 2^{-16B_2/d_2^2}\right).
\end{equation}
This bound is independent of the sparsity parameter $s$. In particular,
\begin{equation}\label{eq:comm-self-asymp}
    \MF =\Omega\bigg(\sigma^2\left(
    \sqrt{d_1}\cdot 2^{-16B_1/d_1^2}
    \;\vee\;
    \sqrt{d_2}\cdot 2^{-16B_2/d_2^2}\right)\bigg).
\end{equation}
\end{proposition}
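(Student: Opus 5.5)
The plan is to reduce to a centralized-style Fano argument in which only Agent~$1$'s message carries information, and to use a local packing whose cardinality is exponential in $d_1^2$. Fix $k=1$; the case $k=2$ is symmetric, and taking the larger of the two bounds gives the $\vee$.

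\textbf{Hypothesis family.} I will consider Gaussian hypotheses
\[
C_v=\frac{\sigma^2}{2}\begin{pmatrix} I_{d_1}+\tfrac12 F_v & 0\\ 0 & I_{d_2}\end{pmatrix},
\]
where $\{F_v\}_{v\in\mathcal V}$ are symmetric $d_1\times d_1$ matrices with $\|F_v\|_{\op}\le 1$. With this choice:
\begin{itemize}[nosep]
\item $\tfrac14 I\preceq \tfrac12(I+\tfrac12F_v)\preceq \tfrac34 I$, so $C_v\succ 0$ and $\|C_v\|_{\op}\le\sigma^2$, which gives membership in $\mathrm{subG}^{(d)}(\sigma)$ by the Gaussian remark in the preliminaries.
\item $(C_v)_{21}=0$, so $C_v\in\mathcal P_s$ for every $s$.
\item $\rho_F\ge \frac{\sigma^2}{4}\inf_{v\ne v'}\|F_v-F_{v'}\|_F$.
\end{itemize}

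\textbf{Mutual information.} Under every hypothesis $\bX_1\perp\bX_2$, and the law of $\bX_2$ does not depend on $V$. Hence $M_2\perp(V,M_1)$, and
\[
I(V;M_1,M_2)=I(V;M_1)+I(V;M_2\mid M_1)=I(V;M_1)\le H(M_1)\le B_1 .
\]
No SDPI is needed for this step; this is the same chain-rule manipulation as \eqref{eq:chainrule}, with the roles of the agents changed.

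\textbf{Packing.} I need symmetric matrices in the operator-norm unit ball that are pairwise Frobenius-separated by $\omega$, with $\log|\mathcal V|\gtrsim d_1^2\log(c\sqrt{d_1}/\omega)$. I would get this from a volumetric argument in the space of symmetric matrices, of dimension $n=d_1(d_1+1)/2$. The maximal $\omega$-packing $\mathcal V$ of $\mathcal B_{\op}(1)$ covers that ball at scale $\omega$, so
\[
|\mathcal V|\ \ge\ \frac{\mathrm{vol}(\mathcal B_{\op}(1))}{\mathrm{vol}(\mathcal B_F(\omega))},
\]
and it remains to compare $\mathrm{vol}(\mathcal B_{\op}(1))^{1/n}$ with $\mathrm{vol}(\mathcal B_F(1))^{1/n}\asymp 1/\sqrt{d_1}$. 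Alternatively, I would simply invoke the analogous packing lemma of \cite{rahmani2025fundamental} (Lemma~A.10/Appendix~A.6), which is the cleanest route.

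\textbf{Fano assembly.} Choose $\omega$ so that $\log|\mathcal V|\ge 2(B_1+1)$, i.e.\ $\omega\asymp\sqrt{d_1}\,2^{-cB_1/d_1^2}$. Then Lemma~\ref{lem:fanodcme} gives
\[
\MF\ \ge\ \frac{\rho_F}{2}\cdot\frac12\ \gtrsim\ \sigma^2\sqrt{d_1}\,2^{-cB_1/d_1^2}.
\]
Small $B_1$ is not a problem: the choice of $\omega$ is then capped at the ball's diameter scale $\sqrt{d_1}$, and the bound degrades gracefully. Tracking constants as in \cite[Appendix~E.2]{rahmani2025fundamental} should produce $1/56$ and the exponent $16$.

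\textbf{Main obstacle.} The main obstacle is the volume ratio between the operator-norm ball and the Frobenius ball in the symmetric-matrix space, since it must yield a packing exponent of order $d_1^2$ at separation scale $\sqrt{d_1}$. I would first try the standard bound $\mathrm{vol}(\mathcal B_{\op}(1))^{1/n}\gtrsim \mathrm{vol}(\mathcal B_F(\sqrt{d_1}))^{1/n}/C$, which follows from Urysohn/Santaló-type estimates or from comparing the Gaussian measures of the two balls. Failing that, I would defer wholesale to the cited packing lemma.
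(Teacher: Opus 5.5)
Your proposal is correct and follows the paper's route. The paper's whole proof is to invoke \cite[Appendix~E.2]{rahmani2025fundamental} after noting that the hypothesis family there varies only $C_{kk}$ with $C_{21}=0$, hence lies in $\mathcal P_s$ for every $s$; this is exactly your membership check, and your independence bound $I(V;M_1,M_2)\le B_1$, order-$d_1^2$ volumetric packing and Fano step reconstruct the cited argument. As in the paper, the exact constants $1/56$ and $16$ come from the reference, not from your own bookkeeping (which gives the $\Omega$ form with some universal constant), and there is one small slip: $\mathrm{vol}(\mathcal B_F(1))^{1/n}\asymp 1/d_1$, so the target you need is $\mathrm{vol}(\mathcal B_{\op}(1))^{1/n}\gtrsim 1/\sqrt{d_1}$, as your later formulation correctly states.
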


% \begin{remark}\label{rem:comm-self}
% The bound~\eqref{eq:comm-self} decays exponentially in
% $B_k/d_k^2$: once $B_k\gg d_k^2$, i.e., there are enough
% bits to encode the $\sim d_k^2$ free parameters of the
% symmetric matrix $C_{kk}$, this term becomes negligible
% and the cross-covariance communication bound
% (Proposition~\ref{prop:ssparse-comm}) dominates.
% \end{remark}

\section{Proof of Achievability Theorem~\label{app:achievfull}}
The mean-removal, covering-net and self-covariance arguments follow \cite[Section~G.2]{rahmani2025fundamental}. The treatment of the thresholded cross-covariance matrix is given in detail below. We prove the existence of a scheme with expected Frobenius distortion less than $\varepsilon$.

\noindent As specified in Step 1 of Section~\ref{sec:protocolstep1}, $m$ now denotes the number of mean-zero observations after paired differencing. Thus the assumptions of Theorem~\ref{thm:sachievability} give:
\begin{equation}\label{eq:sachibudgetbound}
    m\geq\max\left\{ 
    \bigg\lceil2^{19}\frac{d}{\tilde{\varepsilon}^2}\bigg\rceil,
    \bigg\lceil C_2\bigg(d+\frac{\sigma^4 s\log(d_1d_2)}{\varepsilon^2}\bigg)\bigg\rceil
    \right\},
\end{equation}
\begin{equation}\label{eq:sachicommbound}
    B_k\geq 1+\underbrace{\beta d_k\bigg\lceil C_2\bigg(d+\frac{\sigma^4s\log(d_1d_2)}{\varepsilon^2}\bigg)\bigg\rceil}_{\text{cross-covariance budget}}
    + \underbrace{\bigg\lceil2d_k^2\log\!\left(\frac{528}{\tilde{\varepsilon}}\right)\bigg\rceil}_{\text{self-covariance budget}},
\end{equation}
where \begin{equation}\label{eq:sachibeta}
    \beta=\max\bigg\{2,\bigg\lceil\log\bigg(\frac{C_3\sigma^2\sqrt{s}}{\varepsilon}\bigg)\bigg\rceil\bigg\},
\end{equation}
and $C_2,C_3$ are 
universal constants to be chosen later. Also let 
\begin{equation}\label{eq:sachisubsetbound}
    n=\left\lfloor\frac{B_1^{(\mathrm{cov})}/d_1\;\wedge\; 
    B_2^{(\mathrm{cov})}/d_2}{\beta}\;\wedge\; m\right\rfloor.
\end{equation}
By Eqs.~\ref{eq:sachibudgetbound}, \ref{eq:sachicommbound}, \ref{eq:sachisubsetbound} and the allocation in Step 2, we have
\begin{equation}\label{eq:sachisubsetbound2}
    n\geq C_2\bigg(d+\frac{\sigma^4 s\log(d_1d_2)}{\varepsilon^2}\bigg).
\end{equation}
In particular, $1\le n\le m$ when $C_2\ge 1$.

\noindent From Steps 3--4 of the protocol, we define the following error events naturally:
$$\mathcal{E}_{k,1}:=\{\|\tilde{C}_{kk}\|_{op}\geq 11\sigma^2\}\hspace{10mm}\mathcal{E}_{k,2}:=\{\|\mathbf{X}_k\|_{op}\geq 6\sigma\sqrt{d_k+n}\}.$$
These are the events on which the corresponding norm test fails and the agent sends an error signal. Let
\begin{align*}
\mathcal{E} := \mathcal{E}_{1,1}\cup\mathcal{E}_{2,1}\cup\mathcal{E}_{1,2}\cup\mathcal{E}_{2,2}.
\end{align*}
By Lemma~\ref{lem:f1crosscov-conc}, we have
$$\mathbb{P}(\mathcal{E}_{k,1})\leq\min\{1,\exp(6d_k-m)\},$$
and by Lemma~\ref{lem:f3subG-opnorm}, we have
$$\mathbb{P}(\mathcal{E}_{k,2})\le\exp(-2(d_k+n)).$$
Taking a union bound, we have
\begin{align}
    \mathbb{P}[\mathcal{E}]&\leq \mathbb{P}[\mathcal{E}_{1,1}]+\mathbb{P}[\mathcal{E}_{2,1}]+\mathbb{P}[\mathcal{E}_{1,2}]+\mathbb{P}[\mathcal{E}_{2,2}]\nonumber\\&\leq\exp(6d_1-m)+\exp(6d_2-m)+\exp(-2(d_1+n))+\exp(-2(d_2+n))\nonumber\\&\leq 2\exp(6d-m)+2\exp(-2(n+1))\nonumber\\
    &<\frac{\tilde{\varepsilon}}{10000},\label{eq:sachieveunionrrorbound}
\end{align}
where the last inequality follows from the following two inequalities:
\begin{enumerate}[label=(\roman*)]
   \item By Eq.~\ref{eq:sachibudgetbound} and $0<\tilde\varepsilon\le 1$, we have
   \[
   6d-m\le d\bigg(6-\frac{2^{19}}{\tilde\varepsilon^2}\bigg)\le-\frac{2^{18}d}{\tilde\varepsilon^2}\le-\frac{2^{18}}{\tilde\varepsilon^2}.
   \]
   Using $e^{-x}\le 1/x$ for $x>0$,
   \[
   2\exp(6d-m)\le 2\exp(-2^{18}/\tilde\varepsilon^2)\le\frac{\tilde\varepsilon^2}{2^{17}}\le\frac{\tilde\varepsilon}{2^{17}}<\frac{\tilde\varepsilon}{20000}.
   \]

   \item Since $\varepsilon^2=\sigma^4 d\tilde\varepsilon^2$, $s\ge 1$ and $\log(d_1d_2)\ge 1$, Eq.~\ref{eq:sachisubsetbound2} implies
   \[
   n\ge  C_2\bigg(d+\frac{s\log(d_1d_2)}{d\tilde\varepsilon^2}\bigg)\ge\frac{2C_2\sqrt{s\log(d_1d_2)}}{\tilde\varepsilon}\ge\frac{2C_2}{\tilde{\varepsilon}}.
   \]
   Consequently, for $C_2\ge10000$, 
   \[ 2\exp(-2(n+1))\le\frac{1}{n+1}\le\frac1n\le\frac{\tilde\varepsilon}{2C_2} \le\frac{\tilde\varepsilon}{20000}.\]
\end{enumerate}
\noindent In particular, we have
\begin{equation}
    \mathbb{P}(\mathcal{E}^c)\geq0.9999.
\end{equation}
By the Law of Total Expectation, decompose
$$\mathbb{E}[\|\hat{C}-C\|_F] = \mathbb{E}[\|\hat{C}-C\|_F|\mathcal{E}]\mathbb{P}[\mathcal{E}] + \mathbb{E}[\|\hat{C}-C\|_F|\mathcal{E}^c]\mathbb{P}[\mathcal{E}^c].$$
% \noindent Our goal is to show the above $<\varepsilon$. Now, we analyze the above term by term.\\
\noindent On $\mathcal{E}$, $\hat{C}=0$. Hence,
    $$\mathbb{E}[\|\hat{C}-C\|_F|\mathcal{E}]\mathbb{P}[\mathcal{E}]=\|C\|_F \mathbb{P}[\mathcal{E}] \leq \sigma^2\sqrt{d}\cdot \frac{\tilde{\varepsilon}}{10000}=\frac{\varepsilon}{10000}.$$
Also,
$$\mathbb{E}[\|\hat{C}-C\|_F|\mathcal{E}^c]\mathbb{P}[\mathcal{E}^c]\leq \mathbb{E}[\|\hat{C}-C\|_F|\mathcal{E}^c].$$
It suffices to show $$\mathbb{E}[\|\hat{C}-C\|_F|\mathcal{E}^c]<\varepsilon-\frac{\varepsilon}{10000}.$$
\noindent First, we remark that taking the positive semidefinite part does not increase the Frobenius error. By Step 3 in Section~\ref{sec:protocolstep3}, $\hat C^*$ is symmetric, and hence
$$\mathbb{E}[\|\hat{C}-C\|_F|\mathcal{E}^c]= \mathbb{E}[\|\hat{C}^*_+-C\|_F|\mathcal{E}^c]\leq \mathbb{E}[\|\hat{C}^*-C\|_F|\mathcal{E}^c],$$
where the equality follows by construction, and the inequality follows by:
\begin{align*}
\|\hat{C}^* - C\|_F^2 
&= \operatorname{Tr}\left\{\left(\hat{C}^* - C\right)^2\right\} \\
&= \operatorname{Tr}\left\{\left(\hat{C}^* - \hat{C}_+^*\right)^2\right\} 
 + \operatorname{Tr}\left\{\left(\hat{C}_+^* - C\right)^2\right\} 
 + 2\operatorname{Tr}\left\{\left(\hat{C}^* - \hat{C}_+^*\right)\left(\hat{C}_+^* - C\right)\right\} \\
&\geq \|\hat{C}_+^* - C\|_F^2 
 + 2\operatorname{Tr}\left\{\left(\hat{C}^* - \hat{C}_+^*\right)\left(\hat{C}_+^* - C\right)\right\} \\
&= \|\hat{C}_+^* - C\|_F^2 
 + 2\operatorname{Tr}\left\{\left(\hat{C}_+^* - \hat{C}^*\right)C\right\} \\
&\geq \|\hat{C}_+^* - C\|_F^2.
\end{align*}
The last equality is true since the positive part $\hat{C}^*_+$ of $\hat{C}^*$ is orthogonal to the negative part of it, and the last inequality is true since the trace of the product of two positive semi-definite matrices is non-negative.\\
Hence, in the remainder of the proof, we focus on upper bounding $\mathbb{E}[\|\hat{C}^*-C\|_F|\mathcal{E}^c]$.\\
Utilize the elementary results: (1) For non-negative reals,
$\sqrt{a^2+b^2+c^2}\leq a+b+c$,
(2) For any $A\in\mathbb{R}^{p\times q}$, $\|A\|_F\leq\sqrt{\min(p,q)}\,\|A\|_{\mathrm{op}}$ to perform block decomposition:
\begin{align*}
    \mathbb{E}[\|\hat{C}^*-C\|_F|\mathcal{E}^c]
    &= \mathbb{E}\!\left[\left\|
    \begin{bmatrix}
        \hat{C}_{11}-C_{11}
        & (\hat{C}_{21}^{\,\mathrm{threshold}})^\top - C_{21}^\top\\
        \hat{C}_{21}^{\,\mathrm{threshold}}-C_{21}
        & \hat{C}_{22}-C_{22}
    \end{bmatrix}
    \right\|_F\bigg|\mathcal{E}^c\right]\\
    &\leq \mathbb{E}\big[\|\hat{C}_{11}-C_{11}\|_F
    \big|\mathcal{E}^c\big]
    + \mathbb{E}\big[\|\hat{C}_{22}-C_{22}\|_F
    \big|\mathcal{E}^c\big]
    + \sqrt{2}\,\mathbb{E}\big[\|\hat{C}_{21}^{\,\mathrm{threshold}}
    -C_{21}\|_F\big|\mathcal{E}^c\big]\\
    &\leq \underbrace{\sqrt{d_1}\,\mathbb{E}\big[\|\hat{C}_{11}
    -C_{11}\|_{\mathrm{op}}\big|\mathcal{E}^c\big]
    + \sqrt{d_2}\,\mathbb{E}\big[\|\hat{C}_{22}
    -C_{22}\|_{\mathrm{op}}\big|\mathcal{E}^c\big]}_{\text{Self-Covariance Error}}\\
    &\quad+ \underbrace{\sqrt{2}\,\mathbb{E}\big[
    \|\hat{C}_{21}^{\,\mathrm{threshold}}-C_{21}\|_F
    \big|\mathcal{E}^c\big]}_{\text{Cross-Covariance Error}}.
\end{align*}

\subsection*{Self-Covariance Error Analysis}
For each matrix to be quantized, we apply the covering-net scheme of Section~\ref{sec:packing} with the formula $\omega = 3r \cdot 2^{-B/(pq)}$, where $r$ is the ball radius, $B$ is the bit budget, and $p \times q$ is the matrix dimension.
\begin{enumerate}[label=(\roman*)]
\item Quantization of $\widetilde{C}_{kk} \in \mathbb{R}^{d_k \times d_k}$: $r = 11\sigma^2$, $B = B_k^{(\mathrm{self})}$, $pq = d_k^2$:
\begin{equation}\label{eq:omega_self}
\omega_k' = 33\sigma^2 \cdot 2^{-B_k^{(\mathrm{self})}/d_k^2}, \qquad k = 1, 2.
\end{equation}
Note that by Eq.~\ref{eq:sachicommbound}, we have $B_k^{(\text{self})}\ge2d_k^2\log(528/\tilde{\varepsilon})$, hence $\frac{B_k^{(\text{self})}}{d_k^2}
\ge2\log(\frac{528}{\tilde{\varepsilon}})$, and we have
$$\omega_k'\leq33\sigma^2\cdot2^{-2\log(528/\tilde{\varepsilon})}=33\sigma^2\cdot(\frac{\tilde{\varepsilon}}{528})^2\leq33\sigma^2\cdot\frac{\tilde{\varepsilon}}{528}=\frac{\sigma^2\tilde{\varepsilon}}{16}.$$

\item Quantization of $\mathbf{X}_k \in \mathbb{R}^{d_k \times n}$: $r = 6\sigma\sqrt{d_k + n}$, $B = B_k^{(\mathrm{cov})}$, $pq = nd_k$:
\begin{equation}\label{eq:omega_data}
\omega_k'' = 18\sigma\sqrt{d_k + n} \cdot 2^{-B_k^{(\mathrm{cov})}/(nd_k)}.
\end{equation}
Note that $n\le\frac{(\frac{B_1^{\text{(cov)}}}{d_1}\land\frac{B_2^{\text{(cov)}}}{d_2})}{\beta}\implies \frac{B_k^{\text{(cov)}}}{nd_k}\geq\beta$, hence 
$$\omega_k''\leq18\sigma\sqrt{d_k+n}\cdot2^{-\beta}.$$
\end{enumerate}
\noindent On $\mathcal{E}^c$, these guarantee 
\begin{equation}~\label{eq:achievopbound1}
    \|\hat{C}_{kk} - \widetilde{C}_{kk}\|_{\mathrm{op}} \leq \omega_k',
\end{equation}
\noindent and 
\begin{equation}~\label{eq:achievopbound2}
    \|\hat{\mathbf{X}}_k - \mathbf{X}_k\|_{\mathrm{op}} \leq \omega_k''.
\end{equation}
\noindent By definition of $\mathcal{E}^c_{k,2}$, we also have
\begin{equation}~\label{eq:achievopbound3}
    \|\mathbf{X}_k\|_{op}\leq 6\sigma\sqrt{d_k+n}.
\end{equation}
\noindent Moreover, by Proposition~\ref{prop:f2crosscov-expected} and $m\geq2^{19}d/\tilde{\varepsilon}^2\geq2^{19}d_k\implies\sqrt{2d_k/m}>2d_k/m$,
\begin{equation}~\label{eq:selfcovf2}
    \mathbb{E}[\|\tilde{C}_{kk}-C_{kk}\|_{op}]\leq32\sigma^2\max\left\{\sqrt{\frac{2d_k}{m}},\frac{2d_k}{m}\right\}\leq 32\sigma^2\sqrt{\frac{2d_k}{m}}.
\end{equation}
Conditioning on $\mathcal{E}^c$ gives
\begin{equation}~\label{eq:selfcovcond}
    \mathbb{E}[\|\tilde{C}_{kk}-C_{kk}\|_{op}|\mathcal{E}^c]\leq\frac{32\sigma^2\sqrt{2d_k/m}}{\mathbb{P}[\mathcal{E}^c]}\leq33\sigma^2\sqrt{\frac{2d_k}{m}},
\end{equation}
with the last inequality following from $\mathbb{P}[\mathcal{E}^c]\geq0.9999$.\\
\noindent Again, by Eq.~\ref{eq:sachibudgetbound}, we substitute $m\geq2^{19}d/\tilde{\varepsilon}^2$ and $d_k\leq d$ and get
\begin{equation}
    33\sigma^2\sqrt{\frac{2d_k}{m}}\leq33\sigma^2\sqrt{\frac{2d\tilde{\varepsilon}^2}{2^{19}d}}=33\sigma^2\sqrt{\frac{2\tilde{\varepsilon}^2}{2^{19}}}=\frac{33\sigma^2\tilde{\varepsilon}}{512}
\end{equation}
\noindent Hence,
\begin{align*}
    \mathbb{E}[\|\hat{C}_{kk}-C_{kk}\|_{op}|\mathcal{E}^c] &\leq \mathbb{E}[\|\hat{C}_{kk}-\tilde{C}_{kk}\|_{op}|\mathcal{E}^c]+\mathbb{E}[\|\tilde{C}_{kk}-C_{kk}\|_{op}|\mathcal{E}^c]\\
    &\leq \frac{\sigma^2\tilde{\varepsilon}}{16}+\frac{33\sigma^2\tilde{\varepsilon}}{512}\leq\frac{\sigma^2\tilde{\varepsilon}}{7}.
\end{align*}
Thus, it is clear that
\begin{equation}~\label{eq:selfcoverror}
    \sqrt{d_k}\mathbb{E}[\|\hat{C}_{kk}-C_{kk}\|_{op}|\mathcal{E}^c]\leq\frac{\varepsilon}{7}.
\end{equation}

\subsubsection*{Cross-Covariance Error Analysis}
We first bound $\mathbb{E}[\|\hat{C}_{21}^{\text{threshold}}-C_{21}\|^2_F|\mathcal{E}^c]$ element-wise, then take square roots using Jensen's inequality ($\mathbb{E}[\|A\|_F]\leq\sqrt{\mathbb{E}[\|A\|^2_F]}$).

\noindent \paragraph{Decomposing the threshold error.} For each entry $(i,j)$ with $i\in[d_2]$, $j\in[d_1]$,
the pre-thresholding error decomposes as
\begin{equation}\label{eq:achievprethresholdingdecomp}
    [\hat{C}_{21}]_{ij}-[C_{21}]_{ij}
    =\underbrace{[\hat{C}_{21}]_{ij}
    -[\widetilde{C}_{21}]_{ij}}_{Q_{ij}}
    +\underbrace{[\widetilde{C}_{21}]_{ij}
    -[C_{21}]_{ij}}_{S_{ij}},
\end{equation}
where we call $Q_{ij}$ the \emph{quantisation noise} and
$S_{ij}$ the \emph{statistical noise}, given by
$$
    Q_{ij}
    =\bigg[\frac{1}{n}\hat{\bX}_2\hat{\bX}_1^\top
    \bigg]_{ij}
    -\bigg[\frac{1}{n}\bX_2\bX_1^\top\bigg]_{ij},
    \qquad
    S_{ij}
    =\frac{1}{n}\sum_{l=1}^n
    \big([X_2^{(l)}]_i[X_1^{(l)}]_j-[C_{21}]_{ij}\big).
$$
By the triangle inequality,
\begin{equation}\label{eq:achievprethresholdingdecompabs}
    \big|[\hat{C}_{21}]_{ij}-[C_{21}]_{ij}\big|
    \leq|Q_{ij}|+|S_{ij}|.
\end{equation}
\paragraph{Deterministic bound on $Q_{ij}$
(conditioned on $\mathcal{E}^c$).}
Define $Q:=\frac{1}{n}\hat{\bX}_2\hat{\bX}_1^\top-\frac{1}{n}\bX_2\bX_1^\top$. On $\mathcal{E}^c$, expand $\hat{\bX}_k=\bX_k+(\hat{\bX}_k-\bX_k)$ to obtain
\begin{align*}
    Q &= \frac{1}{n}\big[\bX_2+(\hat{\bX}_2-\bX_2)\big]
    \big[\bX_1+(\hat{\bX}_1-\bX_1)\big]^\top
    -\frac{1}{n}\bX_2\bX_1^\top\\
    &= \underbrace{\frac{1}{n}\bX_2
    (\hat{\bX}_1-\bX_1)^\top}_{Q_1}
    +\underbrace{\frac{1}{n}(\hat{\bX}_2-\bX_2)
    \bX_1^\top}_{Q_2}
    +\underbrace{\frac{1}{n}(\hat{\bX}_2-\bX_2)
    (\hat{\bX}_1-\bX_1)^\top}_{Q_3}.
\end{align*}
By the triangle inequality and submultiplicativity of the operator norm,
\begin{align*}
    \|Q\|_{\mathrm{op}}
    &\leq\|Q_1\|_{\mathrm{op}}
    +\|Q_2\|_{\mathrm{op}}
    +\|Q_3\|_{\mathrm{op}}\\
    &\leq\frac{\|\bX_2\|_{\mathrm{op}}\,
    \|\hat{\bX}_1-\bX_1\|_{\mathrm{op}}}{n}
    +\frac{\|\hat{\bX}_2-\bX_2\|_{\mathrm{op}}\,
    \|\bX_1\|_{\mathrm{op}}}{n}
    +\frac{\|\hat{\bX}_2-\bX_2\|_{\mathrm{op}}\,
    \|\hat{\bX}_1-\bX_1\|_{\mathrm{op}}}{n}.
\end{align*}
On $\mathcal{E}^c$, Eq.~\ref{eq:omega_data} and Eqs.~\ref{eq:achievopbound2}--\ref{eq:achievopbound3}, together with $B_k^{(\mathrm{cov})}/(nd_k)\ge\beta$, give
\begin{align}
    \|Q_1\|_{\mathrm{op}}
    &\leq\frac{108\sigma^2
    \sqrt{(d_1+n)(d_2+n)}}{n}\cdot 2^{-\beta},
    \label{eq:achievq1}\\
    \|Q_2\|_{\mathrm{op}}
    &\leq\frac{108\sigma^2
    \sqrt{(d_1+n)(d_2+n)}}{n}\cdot 2^{-\beta},
    \label{eq:achievq2}\\
    \|Q_3\|_{\mathrm{op}}
    &\leq\frac{324\sigma^2
    \sqrt{(d_1+n)(d_2+n)}}{n}\cdot 2^{-2\beta}.
    \label{eq:achievq3}
\end{align}
By~\eqref{eq:sachisubsetbound2},
$n\ge C_2d\ge d$ for $C_2\ge 1$. Therefore, 
% $n\geq C_2 s\log(d_1d_2)\sigma^4 d/\varepsilon^2$. The low-distortion assumption $\varepsilon<\sigma^2\sqrt{C_2 s\log(d_1d_2)}$ gives
% $$
%     \frac{C_2 s\log(d_1d_2)\sigma^4 d}{\varepsilon^2}
%     > d
%     \quad\iff\quad
%     \varepsilon<\sigma^2\sqrt{C_2 s\log(d_1d_2)},
% $$
% hence $n>d$.  
% Then,
$$
    \frac{d_k+n}{n}=1+\frac{d_k}{n}
    \leq 1+\frac{d}{n}\le2
    \implies
    \frac{\sqrt{(d_1+n)(d_2+n)}}{n}\le2.
$$
% Finally, by the low-distortion assumption
% $\varepsilon<C_3\sigma^2$, we have
% $\log(C_3\sigma^2/\varepsilon)>0$, hence $\beta>0$
% and $2^{-2\beta}\leq 2^{-\beta}$.  Combining the above, we have
Also, $\beta\ge 2$ implies $2^{-2\beta}\le 2^{-\beta}$. Thus,
\begin{equation}\label{eq:achievq}
    |Q_{ij}|=|e_i^\top Q\,e_j|
    \leq\|Q\|_{\mathrm{op}}
    \leq(2\times 216 + 648)\sigma^2\cdot 2^{-\beta}
    = 1080\sigma^2\cdot 2^{-\beta}
    =:Q_{\max}.
\end{equation}

\paragraph{Probabilistic bound on $S_{ij}$.}
Since the observations have mean zero, $\mathbb{E}\big[[X^{(l)}_2]_i[X_1^{(l)}]_j\big]=[C_{21}]_{ij}$. Thus $S_{ij}$ is the average of $n$ independent, centered random variables. Each $[X_k^{(l)}]_i$ is mean-zero and $\sigma$-sub-Gaussian (as a coordinate of a $\sigma$-sub-Gaussian vector), so $[X_2^{(l)}]_i[X_1^{(l)}]_j$ is sub-exponential by Lemma~\ref{lem:subgaussianproduct}. By the triangle inequality for the Orlicz norm, $[X_2^{(l)}]_i[X_1^{(l)}]_j-[C_{21}]_{ij}$ has $\psi_1$ norm at most $K:=C'\sigma^2$ for a universal constant $C'>0$.

\noindent Applying Bernstein's inequality
(Theorem~\ref{thm:bernstein}) with $t=n\lambda,$ where $\lambda:=c_0\sigma^2\sqrt{\log(d_1d_2)/n}$ and $K=C'\sigma^2$,
\begin{equation}\label{eq:achievbernstein}
    \mathbb{P}\big(|S_{ij}|\geq\lambda\big)
    \leq 2\exp\!\left[-c\min\!\left(
    \frac{n\lambda^2}{K^2},\;
    \frac{n\lambda}{K}\right)\right].
\end{equation}
Since $n\ge C_2d\ge C_2\log(d_1d_2)$ (using $\log_2(d_1d_2)\le d_1+d_2=d$), taking $C_2\geq c_0^2/C'^2$ gives $\lambda/K\leq 1$, so the minimum is the first term and we have
\begin{equation}\label{eq:achievbernstein2}
    \mathbb{P}\big(|S_{ij}|\geq\lambda\big)
    \leq 2\exp\!\left(
    -\frac{cc_0^2}{C'^2}\log(d_1d_2)\right)
    =\frac{2}{(d_1d_2)^{(cc_0^2/C'^2)/\ln 2}}\le\frac{2}{(d_1d_2)^{cc_0^2/C'^2}}.
\end{equation}
Choosing $c_0\geq C'\sqrt{3/c}$ gives
$cc_0^2/C'^2\geq 3$, hence
\begin{equation}\label{eq:achievbernstein3}
    \mathbb{P}\big(|S_{ij}|\geq\lambda\big)
    \leq\frac{2}{(d_1d_2)^3}.
\end{equation}
Since $\mathbb{P}(\mathcal{E}^c)\geq 0.9999$,
conditioning on $\mathcal{E}^c$ gives
\begin{equation}\label{eq:achievbernstein4}
    \mathbb{P}\big(|S_{ij}|\geq\lambda
    \mid\mathcal{E}^c\big)
    \leq\frac{3}{(d_1d_2)^3}.
\end{equation}
\paragraph{Thresholding dominance.} Since $|Q_{ij}|\leq Q_{\max}$ deterministically on $\mathcal{E}^c$ and $\lambda^*=\lambda+Q_{\max}$, the inverse triangle inequality gives
\begin{equation}\label{eq:achievqsthres}
    \mathbb{I}\big[|S_{ij}+Q_{ij}|\geq\lambda^*\big]
    \leq\mathbb{I}\big[|S_{ij}|\geq\lambda\big].
\end{equation}
\paragraph{Setting up the post-threshold error analysis.} 
Let $\mathcal{S}:=\{(i,j)\in[d_2]\times[d_1]:
[C_{21}]_{ij}\neq 0\}$ denote the support of $C_{21}$,
with $|\mathcal{S}|\le s$.  Decompose the squared Frobenius
error as
\begin{equation}\label{eq:achievcrosscovdecomp}
    \mathbb{E}\big[\|\hat{C}_{21}^{\text{threshold}}
    -C_{21}\|_F^2\big|\mathcal{E}^c\big]
    =\underbrace{\sum_{(i,j)\in\mathcal{S}}
    \mathbb{E}\big[\big([\hat{C}_{21}^{\text{threshold}}]_{ij}
    -[C_{21}]_{ij}\big)^2\big|\mathcal{E}^c\big]}_{
    |\mathcal{S}|\text{ non-zero entries}}
    +\underbrace{\sum_{(i,j)\notin\mathcal{S}}
    \mathbb{E}\big[[\hat{C}_{21}^{\text{threshold}}]_{ij}^2
    \big|\mathcal{E}^c\big]}_{
    d_1d_2-|\mathcal{S}|\text{ zero entries}}.
\end{equation}

\paragraph{Non-zero entries: $(i,j)\in\mathcal{S}$.}
Write $\gamma_{ij}:=[C_{21}]_{ij}\neq 0$.  The
thresholded estimate satisfies
$[\hat{C}_{21}^{\text{threshold}}]_{ij}\in
\{[\hat{C}_{21}]_{ij},\,0\}$, so
\begin{align}
    \mathbb{E}\big[\big([\hat{C}_{21}^{\text{threshold}}]_{ij}-\gamma_{ij}\big)^2\big|\mathcal{E}^c\big]=\mathbb{E}\big[(S_{ij}+Q_{ij})^2\,\mathbb{I}[|[\hat{C}_{21}]_{ij}|\geq\lambda^*]\big|\mathcal{E}^c\big]+\gamma_{ij}^2\,\mathbb{P}\big(|[\hat{C}_{21}]_{ij}|<\lambda^*\big|\mathcal{E}^c\big).
    \label{eq:achievnonzerodecomp}
\end{align}
% \noindent\emph{First term in Eq.~\ref{eq:achievnonzerodecomp}.} Since $0\le\mathbb{I}[|[\hat{C}_{21}]_{ij}|\ge\lambda^*|\mathcal{E}^c]\le1$, the first
% term satisfies
% \begin{equation}\label{eq:achievnonzerofirst}
%     \mathbb{E}\big[(S_{ij}+Q_{ij})^2
%     \big|\mathcal{E}^c\big]
%     \leq 2\,\mathbb{E}\big[S_{ij}^2
%     \big|\mathcal{E}^c\big]+2Q_{\max}^2\le\frac{2\mathbb{E}[S^2_{ij}]}{\mathbb{P}(\mathcal{E}^c)}+2Q^2_{\max}
%     \leq\frac{C''\sigma^4}{n}+2Q_{\max}^2,
% \end{equation}
% where $\mathbb{E}[S_{ij}^2]=\mathrm{Var}(S_{ij})\leq C''\sigma^4/(2n)$ by Proposition~2.8.1 of~\cite{vershynin2018high} and $|\gamma_{ij}|\leq\|C\|_{\mathrm{op}}\leq\sigma^2$.\\
\noindent\emph{First term in Eq.~\ref{eq:achievnonzerodecomp}.}
Since
$0\le\mathbb{I}[|[\hat{C}_{21}]_{ij}|\ge\lambda^*]\le1$,
the first term is bounded by
$\mathbb{E}[(S_{ij}+Q_{ij})^2\mid\mathcal{E}^c]$.

Recall that $S_{ij}$ is the average of $n$ independent, centered
sub-exponential random variables, each with $\psi_1$ norm at most
$C'\sigma^2$. By Lemma~\ref{lem:subexpmoment} and independence across observations,
we may choose a sufficiently large universal constant $C''>0$ such that
\[
    \mathbb{E}[S_{ij}^2]
    =\mathrm{Var}(S_{ij})
    \le \frac{C''\sigma^4}{4n}.
\]
Using $|Q_{ij}|\le Q_{\max}$ on $\mathcal{E}^c$ and
$\mathbb{P}(\mathcal{E}^c)\ge0.9999\ge1/2$, we obtain
\begin{equation}\label{eq:achievnonzerofirst}
\begin{aligned}
    \mathbb{E}\big[(S_{ij}+Q_{ij})^2
        \mid\mathcal{E}^c\big]
    \le 2\,\mathbb{E}\big[S_{ij}^2
        \mid\mathcal{E}^c\big]+2Q_{\max}^2
    \le \frac{2\mathbb{E}[S_{ij}^2]}
        {\mathbb{P}(\mathcal{E}^c)}+2Q_{\max}^2&\le \frac{C''\sigma^4}{n}+2Q_{\max}^2.
\end{aligned}
\end{equation}
\noindent\emph{Second term in Eq.~\ref{eq:achievnonzerodecomp}.} For the second term, we bound $\gamma_{ij}^2\,\mathbb{I}[\text{zeroed}]$ deterministically. By the triangle inequality, on the event $|[\hat{C}_{21}]_{ij}|<\lambda^*$,
\[
    |\gamma_{ij}|\leq|\gamma_{ij}+S_{ij}+Q_{ij}|
    +|S_{ij}+Q_{ij}|<\lambda^*+|S_{ij}+Q_{ij}|.
\]
Squaring both sides and using $(a+b)^2\leq 2a^2+2b^2$,
% \begin{equation}\label{eq:achievnonzerogammabound}
%     \gamma_{ij}^2\,\mathbb{P}\big[|[\hat{C}_{21}]_{ij}|<\lambda^*\big|\mathcal{E}^c\big]\leq 2(\lambda^*)^2+2(S_{ij}+Q_{ij})^2.
% \end{equation}
\begin{equation}\label{eq:achievnonzerogammabound}
    \gamma_{ij}^2\,\mathbb{I}\big[|[\hat{C}_{21}]_{ij}|<\lambda^*\big]
    \leq 2(\lambda^*)^2+2(S_{ij}+Q_{ij})^2.
\end{equation}
Taking conditional expectations,
\begin{align}
    \mathbb{E}\big[\gamma_{ij}^2\mathbb{I}[|[\hat{C}_{21}]_{ij}|<\lambda^*]
    \big|\mathcal{E}^c\big]
    &\leq 2(\lambda^*)^2
    +2\,\mathbb{E}\big[(S_{ij}+Q_{ij})^2
    \big|\mathcal{E}^c\big]\nonumber\\
    &\leq 2(2\lambda^2+2Q_{\max}^2)
    +2\!\left(\frac{C''\sigma^4}{n}
    +2Q_{\max}^2\right)\nonumber\\
    &=\frac{4c_0^2\sigma^4\log(d_1d_2)}{n}
    +\frac{2C''\sigma^4}{n}+8Q_{\max}^2,
    \label{eq:achievnonzerosecond}
\end{align}
where we used $(\lambda^*)^2=(\lambda+Q_{\max})^2
\leq 2\lambda^2+2Q_{\max}^2$ and
Eq.~\ref{eq:achievnonzerofirst}.\\
\noindent\emph{Non-zero entries bound.} Adding~\eqref{eq:achievnonzerofirst}
and~\eqref{eq:achievnonzerosecond},
\begin{align}
    \mathbb{E}\big[\big(
    [\hat{C}_{21}^{\text{threshold}}]_{ij}
    -\gamma_{ij}\big)^2\big|\mathcal{E}^c\big]
    &\leq\frac{C''\sigma^4}{n}+2Q_{\max}^2
    +\frac{4c_0^2\sigma^4\log(d_1d_2)}{n}
    +\frac{2C''\sigma^4}{n}+8Q_{\max}^2\nonumber\\
    &\leq\frac{C'''\sigma^4\log(d_1d_2)}{n}
    +10Q_{\max}^2,
    \label{eq:achievnonzerofinal}
\end{align}
where $C''':=3C''+4c_0^2$ and we used
$1/n\leq\log(d_1d_2)/n$.
Summing over at most $s$ non-zero entries:
\begin{equation}\label{eq:achievnonzerototal}
    \sum_{(i,j)\in\mathcal{S}}
    \mathbb{E}\big[\big(
    [\hat{C}_{21}^{\text{threshold}}]_{ij}
    -\gamma_{ij}\big)^2\big|\mathcal{E}^c\big]
    \leq\frac{C'''\sigma^4 s\log(d_1d_2)}{n}
    +10sQ_{\max}^2.
\end{equation}
\vspace{3mm}
\paragraph{Zero entries: $(i,j)\notin\mathcal{S}$.}
Since $[C_{21}]_{ij}=0$, the thresholded estimate
satisfies
$[\hat{C}_{21}^{\text{threshold}}]_{ij}
=(S_{ij}+Q_{ij})\,\mathbb{I}[|S_{ij}+Q_{ij}|
\geq\lambda^*]$.  By~\eqref{eq:achievqsthres},
\begin{equation}\label{eq:achievzerodecomp}
    \mathbb{E}\big[
    [\hat{C}_{21}^{\text{threshold}}]_{ij}^2
    \big|\mathcal{E}^c\big]
    \leq 2\,\mathbb{E}\big[S_{ij}^2\,
    \mathbb{I}[|S_{ij}|\geq\lambda]
    \big|\mathcal{E}^c\big]
    +2Q_{\max}^2\,\mathbb{P}\big[|S_{ij}|
    \geq\lambda\big|\mathcal{E}^c\big].
\end{equation}
\noindent\emph{Second term in Eq.~\ref{eq:achievzerodecomp}.} By~\eqref{eq:achievbernstein4}, the second term satisfies
\begin{equation}~\label{eq:achievzerosecond}
2Q^2_{\max}\mathbb{P}\big(|S_{ij}|\geq\lambda\big|\mathcal{E}^c\big)\leq\frac{6Q^2_{\max}}{(d_1d_2)^3}.
\end{equation}
\noindent\emph{First term in Eq.~\ref{eq:achievzerodecomp}.}
Since $\mathbb{P}(\mathcal{E}^c)\geq0.9999$,
\begin{equation}~\label{eq:achievzerofirst1}
    \mathbb{E}\big[S^2_{ij}\mathbb{I}[|S_{ij}|\geq\lambda]\big|\mathcal{E}^c\big]\leq2\mathbb{E}\big[S^2_{ij}\mathbb{I}[|S_{ij}|\geq\lambda]\big].
\end{equation}
\noindent By the tail-sum formula for expectations,
\begin{align}
    \mathbb{E}\big[S^2_{ij}\mathbb{I}[|S_{ij}|\geq\lambda]\big] &=\int_0^\infty \mathbb{P}\big[S^2_{ij}\mathbb{I}[|S_{ij}|\geq\lambda]\geq u\big]du\nonumber\\
    &=\int_0^\infty \mathbb{P}\big[|S_{ij}|\cdot\mathbb{I}[|S_{ij}|\geq\lambda]\geq t\big]\cdot2t dt\nonumber\\
    &=\underbrace{\int_0^\lambda \mathbb{P}\big[|S_{ij}|\geq\lambda\big]\cdot2tdt}_{t<\lambda}+\underbrace{\int_\lambda^\infty \mathbb{P}\big[|S_{ij}|\geq t\big]\cdot 2tdt}_{t\geq\lambda}\nonumber\\
    &=\lambda^2\mathbb{P}\big[|S_{ij}|\geq\lambda\big]+\int_\lambda^\infty \mathbb{P}\big[|S_{ij}|\geq t\big]\cdot 2tdt\nonumber\\
    &\le\frac{3c_0^2\sigma^4\log(d_1d_2)}{n(d_1d_2)^3}+\int_\lambda^K 2\exp\big(-c\frac{nt^2}{C'^2\sigma^4}\big)\cdot 2tdt+\int_K^\infty 2\exp\big(-c\frac{nt}{C'\sigma^2}\big)\cdot 2tdt\nonumber\\
    &\le\frac{3c_0^2\sigma^4\log(d_1d_2)}{n(d_1d_2)^3}+\frac{2C'^2\sigma^4}{cn(d_1d_2)^3}+\frac{4C'^2\sigma^4(cn+1)}{c^2n^2}e^{-cn}\nonumber\\
    &\leq\frac{3c_0^2\sigma^4\log(d_1d_2)}{n(d_1d_2)^3}+\frac{3C'^2\sigma^4}{cn(d_1d_2)^3}~\label{eq:achievzerofirst2}
\end{align}
where the second line follows by the substitution $u=t^2$, the fourth line by evaluating the integral over $[0,\lambda]$, the fifth line by Eq.~\ref{eq:achievbernstein} and Eq.~\ref{eq:achievbernstein3}, the sixth line by evaluating the integrals and using $cc_0^2/C'^2\ge3$, and the last line by Eq.~\ref{eq:sachisubsetbound2} and $C_2\ge32/c$, which imply $cn\ge32\log(d_1d_2)$.\\
\noindent \emph{Zero entries bound.} Combining Eqs.~\ref{eq:achievzerodecomp}--\ref{eq:achievzerofirst2},
\begin{equation}~\label{eq:achievzerofinal1}
    \mathbb{E}[[\hat{C}^{\text{threshold}}_{21}]^2_{ij}|\mathcal{E}^c]\le\frac{12c_0^2\sigma^4\log(d_1d_2)}{n(d_1d_2)^3}+\frac{12C'^2\sigma^4}{cn(d_1d_2)^3}+\frac{6Q^2_{\max}}{(d_1d_2)^3}\le\frac{C''''\sigma^4\log(d_1d_2)}{n(d_1d_2)^3}+\frac{6Q^2_{\max}}{(d_1d_2)^3},
\end{equation}
where the last step follows from $\log(d_1d_2)\geq1$, with $C''''=12c_0^2+\frac{12C'^2}{c}$.\\
Hence,
\begin{equation}~\label{eq:achievzerofinal2}
\begin{aligned}
    \sum_{(i,j)\notin\mathcal{S}}
\mathbb{E}[[\hat{C}^{\text{threshold}}_{21}]^2_{ij}
|\mathcal{E}^c]
&\leq(d_1d_2-|\mathcal{S}|)\left(
\frac{C''''\sigma^4\log(d_1d_2)}{n(d_1d_2)^3}
+\frac{6Q^2_{\max}}{(d_1d_2)^3}\right)\\
&\leq\frac{C''''\sigma^4\log(d_1d_2)}{n(d_1d_2)^2}
+\frac{6Q^2_{\max}}{(d_1d_2)^2}
\end{aligned}
\end{equation}
\paragraph{Final bound on $\mathbb{E}[\|\hat{C}_{21}^{\text{threshold}}-C_{21}\|^2_F|\mathcal{E}^c]$.}
Adding~\eqref{eq:achievnonzerototal}
and~\eqref{eq:achievzerofinal2},
\begin{align}
    \mathbb{E}\big[\|\hat{C}_{21}^{\text{threshold}}
    -C_{21}\|_F^2\big|\mathcal{E}^c\big]
    &\leq\frac{C'''\sigma^4 s\log(d_1d_2)}{n}
    +10sQ_{\max}^2
    +\frac{C''''\sigma^4\log(d_1d_2)}{n(d_1d_2)^2}
    +\frac{6Q_{\max}^2}{(d_1d_2)^2}\nonumber\\
    &\leq\frac{C_4\sigma^4 s\log(d_1d_2)}{n}
    +16sQ_{\max}^2,
    \label{eq:achievfinal}
\end{align}
where the last step uses $s\geq 1$ and
$1/(d_1d_2)^2\leq 1\leq s$ to absorb the
$(d_1d_2)^{-2}$ terms, and $C_4:=C'''+C''''$.\\
\noindent By Jensen's inequality
($\mathbb{E}[\|A\|_F]\leq\sqrt{\mathbb{E}[\|A\|_F^2]}$)
and $\sqrt{a+b}\leq\sqrt{a}+\sqrt{b}$,
\begin{equation}\label{eq:achievcrossfinal}
    \mathbb{E}\big[\|\hat{C}_{21}^{\text{threshold}}
    -C_{21}\|_F\big|\mathcal{E}^c\big]
    \leq\sqrt{\frac{C_4\sigma^4 s\log(d_1d_2)}{n}}
    +4\sqrt{s}\,Q_{\max}.
\end{equation}
Therefore,
\begin{equation}\label{eq:achievcrossfinal2}
    \sqrt{2}\,\mathbb{E}\big[
    \|\hat{C}_{21}^{\text{threshold}}
    -C_{21}\|_F\big|\mathcal{E}^c\big]
    \leq\sqrt{\frac{2C_4\sigma^4 s\log(d_1d_2)}{n}}
    +4\sqrt{2s}\,Q_{\max}.
\end{equation}
% \noindent Recall $Q_{\max}=1080\sigma^2\cdot 2^{-\beta}$ and
% $\beta=2\log(C_3\sigma^2/\varepsilon)$, so
% $2^{-\beta}=(\varepsilon/(C_3\sigma^2))^2$. Also, $\varepsilon<C_3\sigma^2$ under the reasonable distortion assumption. Then
% $$
%     4\sqrt{2s}\,Q_{\max}
%     =\frac{4320\sqrt{2s}\,\varepsilon^2}
%     {C_3^2\sigma^2}<\frac{4320\sqrt{2s}\varepsilon}{C_3}.
% $$
By Eq.~\ref{eq:sachibeta}, we have
\[
2^{-\beta}\le\frac{\varepsilon}{C_3\sigma^2\sqrt{s}}.
\]
Thus, with $C_3\ge 34560\sqrt{2}$, 
\[ 4\sqrt{2s}Q_{\max}=4320\sqrt{2s}\sigma^2 \cdot 2^{-\beta}\le\frac{4320\sqrt{2}}{C_3}\varepsilon\le\frac{\varepsilon}{8}.\]

\noindent By~\eqref{eq:sachisubsetbound2},
$n\ge\frac{C_2\sigma^4 s\log(d_1d_2)}{\varepsilon^2}$. Substituting into the first term
of~\eqref{eq:achievcrossfinal2},
\[
    \sqrt{\frac{2C_4\sigma^4 s\log(d_1d_2)}{n}}
    \leq\varepsilon\sqrt{\frac{2C_4}{C_2}}\le\frac{\varepsilon}{8},\qquad\text{if }C_2\ge 128 C_4.
\]
Hence,
\begin{equation}\label{eq:achievcrossfinal3}
    \sqrt{2}\,\mathbb{E}\big[
    \|\hat{C}_{21}^{\text{threshold}}
    -C_{21}\|_F\big|\mathcal{E}^c\big]
    \leq\frac{\varepsilon}{8}+\frac{\varepsilon}{8}
    =\frac{\varepsilon}{4}.
\end{equation}
\paragraph{Choice of universal constants.}
First fix the universal constants $c,C'$ in Eq.~\ref{eq:achievbernstein}. Choose $c_0$ such that $cc_0^2/C'^2\ge3$. Choose a sufficiently large universal constant $C''>0$ such that
\[
    \mathbb{E}[S_{ij}^2]\le\frac{C''\sigma^4}{4n},
\]
as established in the argument for Eq.~\ref{eq:achievnonzerofirst}, and set
\[
    C'''=3C''+4c_0^2,\qquad
    C''''=12c_0^2+\frac{12C'^2}{c},\qquad
    C_4=C'''+C''''.
\]
Finally choose
\[
    C_2\ge\max\left\{
        10000,\frac{c_0^2}{C'^2},\frac{32}{c},128C_4
    \right\},
    \qquad
    C_3\ge34560\sqrt{2}.
\]
\paragraph{Total error bound.} Collecting all bounds,
\begin{align*}
    \mathbb{E}[\|\hat{C}-C\|_F]
    &\leq\underbrace{\frac{\varepsilon}{10000}}_{\text{bad
    event}}
    +\underbrace{\frac{\varepsilon}{7}
    +\frac{\varepsilon}{7}}_{\text{self-cov},\;k=1,2}
    +\underbrace{\frac{\varepsilon}{4}}_{\text{cross-cov}}\\
    &=\frac{\varepsilon}{10000}+\frac{2\varepsilon}{7}
    +\frac{\varepsilon}{4}
    =\varepsilon\left(\frac{1}{10000}
    +\frac{2}{7}+\frac{1}{4}\right)
    <\varepsilon,
\end{align*}
since $\frac{1}{10000}+\frac{2}{7}+\frac{1}{4}
=\frac{1}{10000}+\frac{15}{28}<1$.
This completes the proof of
Theorem~\ref{thm:sachievability}.

\end{document}